\newtheorem{lemma}{Lemma}
\newtheorem{theorem}{Theorem}
\newtheorem{definition}{Definition}
\newtheorem{remark}{Remark}
\newtheorem{corollary}{Corrollary}
\newtheorem{assumption}{Assumption}
\let\pdfoutput=\undefined\fi
\chardef\@x10\chardef\@xv60
\def\tcitime{
\def\@time{%
  \@minute\time\@hour\@minute\divide\@hour\@xv
  \ifnum\@hour<\@x 0\fi\the\@hour:%
  \multiply\@hour\@xv\advance\@minute-\@hour
  \ifnum\@minute<\@x 0\fi\the\@minute
  }}%
\def\x@hyperref#1#2#3{%
   \catcode`\~ = 12
   \catcode`\$ = 12
   \catcode`\_ = 12
   \catcode`\# = 12
   \catcode`\& = 12
   \catcode`\% = 12
   \y@hyperref{#1}{#2}{#3}%
}
\def\y@hyperref#1#2#3#4{%
   #2\ref{#4}#3
   \catcode`\~ = 13
   \catcode`\$ = 3
   \catcode`\_ = 8
   \catcode`\# = 6
   \catcode`\& = 4
   \catcode`\% = 14
}
\def\QCTOpt[#1]#2{%
  \def\QCTOptB{#1}
  \def\QCTOptA{#2}
}
\def\QCTNOpt#1{%
  \def\QCTOptA{#1}
  \let\QCTOptB\empty
}
\def\Qct{%
  \@ifnextchar[{%
    \QCTOpt}{\QCTNOpt}
}
\def\QCBOpt[#1]#2{%
  \def\QCBOptB{#1}%
  \def\QCBOptA{#2}%
}
\def\QCBNOpt#1{%
  \def\QCBOptA{#1}%
  \let\QCBOptB\empty
}
\def\Qcb{%
  \@ifnextchar[{%
    \QCBOpt}{\QCBNOpt}%
}
\def\PrepCapArgs{%
  \ifx\QCBOptA\empty
    \ifx\QCTOptA\empty
      {}%
    \else
      \ifx\QCTOptB\empty
        {\QCTOptA}%
      \else
        [\QCTOptB]{\QCTOptA}%
      \fi
    \fi
  \else
    \ifx\QCBOptA\empty
      {}%
    \else
      \ifx\QCBOptB\empty
        {\QCBOptA}%
      \else
        [\QCBOptB]{\QCBOptA}%
      \fi
    \fi
  \fi
}
\def\GRAPHICSPS#1{%
 \ifcase\GRAPHICSTYPE
   \special{ps: #1}%
 \or
   \special{language "PS", include "#1"}%
 \fi
}%
\def\graffile#1#2#3#4{%
    \bgroup
	   \@inlabelfalse
       \leavevmode
       \@ifundefined{bbl@deactivate}{\def~{\string~}}{\activesoff}%
        \raise -#4 \BOXTHEFRAME{%
           \hbox to #2{\raise #3\hbox to #2{\null #1\hfil}}}%
    \egroup
}%
\def\draftbox#1#2#3#4{%
 \leavevmode\raise -#4 \hbox{%
  \frame{\rlap{\protect\tiny #1}\hbox to #2%
   {\vrule height#3 width\z@ depth\z@\hfil}%
  }%
 }%
}%
\let\nographics=\@msidraft
\newif\ifwasdraft
\def\GRAPHIC#1#2#3#4#5{%
   \ifnum\@msidraft=\@ne\draftbox{#2}{#3}{#4}{#5}%
   \else\graffile{#1}{#3}{#4}{#5}%
   \fi
}
\def\addtoLaTeXparams#1{%
    \edef\LaTeXparams{\LaTeXparams #1}}%
\newif\ifBoxFrame \BoxFramefalse
\newif\ifOverFrame \OverFramefalse
\newif\ifUnderFrame \UnderFramefalse
\def\BOXTHEFRAME#1{%
   \hbox{%
      \ifBoxFrame
         \frame{#1}%
      \else
         {#1}%
      \fi
   }%
}
\def\doFRAMEparams#1{\BoxFramefalse\OverFramefalse\UnderFramefalse\readFRAMEparams#1\end}%
\def\readFRAMEparams#1{%
 \ifx#1\end%
  \let\next=\relax
  \else
  \ifx#1i\dispkind=\z@\fi
  \ifx#1d\dispkind=\@ne\fi
  \ifx#1f\dispkind=\tw@\fi
  \ifx#1t\addtoLaTeXparams{t}\fi
  \ifx#1b\addtoLaTeXparams{b}\fi
  \ifx#1p\addtoLaTeXparams{p}\fi
  \ifx#1h\addtoLaTeXparams{h}\fi
  \ifx#1X\BoxFrametrue\fi
  \ifx#1O\OverFrametrue\fi
  \ifx#1U\UnderFrametrue\fi
  \ifx#1w
    \ifnum\@msidraft=1\wasdrafttrue\else\wasdraftfalse\fi
    \@msidraft=\@ne
  \fi
  \let\next=\readFRAMEparams
  \fi
 \next
 }%
\def\IFRAME#1#2#3#4#5#6{%
      \bgroup
      \let\QCTOptA\empty
      \let\QCTOptB\empty
      \let\QCBOptA\empty
      \let\QCBOptB\empty
      #6%
      \parindent=0pt
      \leftskip=0pt
      \rightskip=0pt
      \setbox0=\hbox{\QCBOptA}%
      \@tempdima=#1\relax
      \ifOverFrame
          \typeout{This is not implemented yet}%
          \show\HELP
      \else
         \ifdim\wd0>\@tempdima
            \advance\@tempdima by \@tempdima
            \ifdim\wd0 >\@tempdima
               \setbox1 =\vbox{%
                  \unskip\hbox to \@tempdima{\hfill\GRAPHIC{#5}{#4}{#1}{#2}{#3}\hfill}%
                  \unskip\hbox to \@tempdima{\parbox[b]{\@tempdima}{\QCBOptA}}%
               }%
               \wd1=\@tempdima
            \else
               \textwidth=\wd0
               \setbox1 =\vbox{%
                 \noindent\hbox to \wd0{\hfill\GRAPHIC{#5}{#4}{#1}{#2}{#3}\hfill}\\%
                 \noindent\hbox{\QCBOptA}%
               }%
               \wd1=\wd0
            \fi
         \else
            \ifdim\wd0>0pt
              \hsize=\@tempdima
              \setbox1=\vbox{%
                \unskip\GRAPHIC{#5}{#4}{#1}{#2}{0pt}%
                \break
                \unskip\hbox to \@tempdima{\hfill \QCBOptA\hfill}%
              }%
              \wd1=\@tempdima
           \else
              \hsize=\@tempdima
              \setbox1=\vbox{%
                \unskip\GRAPHIC{#5}{#4}{#1}{#2}{0pt}%
              }%
              \wd1=\@tempdima
           \fi
         \fi
         \@tempdimb=\ht1
         \advance\@tempdimb by -#2
         \advance\@tempdimb by #3
         \leavevmode
         \raise -\@tempdimb \hbox{\box1}%
      \fi
      \egroup%
}%
\def\DFRAME#1#2#3#4#5{%
  \vspace\topsep
  \hfil\break
  \bgroup
     \leftskip\@flushglue
	 \rightskip\@flushglue
	 \parindent\z@
	 \parfillskip\z@skip
     \let\QCTOptA\empty
     \let\QCTOptB\empty
     \let\QCBOptA\empty
     \let\QCBOptB\empty
	 \vbox\bgroup
        \ifOverFrame 
           #5\QCTOptA\par
        \fi
        \GRAPHIC{#4}{#3}{#1}{#2}{\z@}%
        \ifUnderFrame 
           \break#5\QCBOptA
        \fi
	 \egroup
  \egroup
  \vspace\topsep
  \break
}%
\def\FFRAME#1#2#3#4#5#6#7{%
  \@ifundefined{floatstyle}
    {
     \begin{figure}[#1]%
    }
    {
	 \ifx#1h
      \begin{figure}[H]%
	 \else
      \begin{figure}[#1]%
	 \fi
	}
  \let\QCTOptA\empty
  \let\QCTOptB\empty
  \let\QCBOptA\empty
  \let\QCBOptB\empty
  \ifOverFrame
    #4
    \ifx\QCTOptA\empty
    \else
      \ifx\QCTOptB\empty
        \caption{\QCTOptA}%
      \else
        \caption[\QCTOptB]{\QCTOptA}%
      \fi
    \fi
    \ifUnderFrame\else
      \label{#5}%
    \fi
  \else
    \UnderFrametrue%
  \fi
  \begin{center}\GRAPHIC{#7}{#6}{#2}{#3}{\z@}\end{center}%
  \vspace{-11pt}
  \ifUnderFrame
    #4
    \ifx\QCBOptA\empty
      \caption{}%
    \else
      \ifx\QCBOptB\empty
        \caption{\QCBOptA}%
      \else
        \caption[\QCBOptB]{\QCBOptA}%
      \fi
    \fi
    \label{#5}%
  \fi
  \end{figure}%
 }%
\def\makeactives{
  \catcode`\"=\active
  \catcode`\;=\active
  \catcode`\:=\active
  \catcode`\'=\active
  \catcode`\~=\active
}
   \gdef\activesoff{%
      \def"{\string"}%
      \def;{\string;}%
      \def:{\string:}%
      \def'{\string'}%
      \def~{\string~}%
    }
\def\FRAME#1#2#3#4#5#6#7#8{%
 \bgroup
 \ifnum\@msidraft=\@ne
   \wasdrafttrue
 \else
   \wasdraftfalse%
 \fi
 \def\LaTeXparams{}%
 \dispkind=\z@
 \def\LaTeXparams{}%
 \doFRAMEparams{#1}%
 \ifnum\dispkind=\z@\IFRAME{#2}{#3}{#4}{#7}{#8}{#5}\else
  \ifnum\dispkind=\@ne\DFRAME{#2}{#3}{#7}{#8}{#5}\else
   \ifnum\dispkind=\tw@
    \edef\@tempa{\noexpand\FFRAME{\LaTeXparams}}%
    \@tempa{#2}{#3}{#5}{#6}{#7}{#8}%
    \fi
   \fi
  \fi
  \ifwasdraft\@msidraft=1\else\@msidraft=0\fi{}%
  \egroup
 }%
\def\TEXUX#1{"texux"}
\long\def\QQQ#1#2{%
     \long\expandafter\def\csname#1\endcsname{#2}}%
\long\def\QQA#1#2{}%
\def\QTR#1#2{{\csname#1\endcsname {#2}}}%
\def\EXPAND#1[#2]#3{}%
\def\NOEXPAND#1[#2]#3{}%
\def\LaTeXparent#1{}%
\def\ChildStyles#1{}%
\def\ChildDefaults#1{}%
\def\QTagDef#1#2#3{}%
  \providecommand{\UNICODE}[2][]{\protect\rule{.1in}{.1in}}
  \providecommand{\U}[1]{\protect\rule{.1in}{.1in}}
\def\QQfnmark#1{\footnotemark}
 \def\abstract{%
  \if@twocolumn
   \section*{Abstract (Not appropriate in this style!)}%
   \else \small 
   \begin{center}{\bf Abstract\vspace{-.5em}\vspace{\z@}}\end{center}%
   \quotation 
   \fi
  }%
   \def\registered{\relax\ifmmode{}\r@gistered
                    \else$\m@th\r@gistered$\fi}%
 \def\r@gistered{^{\ooalign
  {\hfil\raise.07ex\hbox{$\scriptstyle\rm\text{R}$}\hfil\crcr
  \mathhexbox20D}}}}{}%
\newdimen\theight
\def\newfmtname{LaTeX2e}
  \DeclareOldFontCommand{\rm}{\normalfont\rmfamily}{\mathrm}
  \DeclareOldFontCommand{\sf}{\normalfont\sffamily}{\mathsf}
  \DeclareOldFontCommand{\tt}{\normalfont\ttfamily}{\mathtt}
  \DeclareOldFontCommand{\bf}{\normalfont\bfseries}{\mathbf}
  \DeclareOldFontCommand{\it}{\normalfont\itshape}{\mathit}
  \DeclareOldFontCommand{\sl}{\normalfont\slshape}{\@nomath\sl}
  \DeclareOldFontCommand{\sc}{\normalfont\scshape}{\@nomath\sc}
\def\alpha{{\Greekmath 010B}}%
\def\beta{{\Greekmath 010C}}%
\def\gamma{{\Greekmath 010D}}%
\def\delta{{\Greekmath 010E}}%
\def\epsilon{{\Greekmath 010F}}%
\def\zeta{{\Greekmath 0110}}%
\def\eta{{\Greekmath 0111}}%
\def\theta{{\Greekmath 0112}}%
\def\iota{{\Greekmath 0113}}%
\def\kappa{{\Greekmath 0114}}%
\def\lambda{{\Greekmath 0115}}%
\def\mu{{\Greekmath 0116}}%
\def\nu{{\Greekmath 0117}}%
\def\xi{{\Greekmath 0118}}%
\def\pi{{\Greekmath 0119}}%
\def\rho{{\Greekmath 011A}}%
\def\sigma{{\Greekmath 011B}}%
\def\tau{{\Greekmath 011C}}%
\def\upsilon{{\Greekmath 011D}}%
\def\phi{{\Greekmath 011E}}%
\def\chi{{\Greekmath 011F}}%
\def\psi{{\Greekmath 0120}}%
\def\omega{{\Greekmath 0121}}%
\def\varepsilon{{\Greekmath 0122}}%
\def\vartheta{{\Greekmath 0123}}%
\def\varpi{{\Greekmath 0124}}%
\def\varrho{{\Greekmath 0125}}%
\def\varsigma{{\Greekmath 0126}}%
\def\varphi{{\Greekmath 0127}}%
\def\nabla{{\Greekmath 0272}}
\def\FindBoldGroup{%
   {\setbox0=\hbox{$\mathbf{x\global\edef\theboldgroup{\the\mathgroup}}$}}%
}
\def\Greekmath#1#2#3#4{%
    \if@compatibility
        \ifnum\mathgroup=\symbold
           \mathchoice{\mbox{\boldmath$\displaystyle\mathchar"#1#2#3#4$}}%
                      {\mbox{\boldmath$\textstyle\mathchar"#1#2#3#4$}}%
                      {\mbox{\boldmath$\scriptstyle\mathchar"#1#2#3#4$}}%
                      {\mbox{\boldmath$\scriptscriptstyle\mathchar"#1#2#3#4$}}%
        \else
           \mathchar"#1#2#3#4%
        \fi 
    \else 
        \FindBoldGroup
        \ifnum\mathgroup=\theboldgroup 
           \mathchoice{\mbox{\boldmath$\displaystyle\mathchar"#1#2#3#4$}}%
                      {\mbox{\boldmath$\textstyle\mathchar"#1#2#3#4$}}%
                      {\mbox{\boldmath$\scriptstyle\mathchar"#1#2#3#4$}}%
                      {\mbox{\boldmath$\scriptscriptstyle\mathchar"#1#2#3#4$}}%
        \else
           \mathchar"#1#2#3#4%
        \fi     	    
	  \fi}
\newif\ifGreekBold  \GreekBoldfalse
\let\SAVEPBF=\pbf
\def\pbf{\GreekBoldtrue\SAVEPBF}%
  \newcounter{equationnumber}  
  \def\mathletters{%
     \addtocounter{equation}{1}
     \edef\@currentlabel{\theequation}%
     \setcounter{equationnumber}{\c@equation}
     \setcounter{equation}{0}%
     \edef\theequation{\@currentlabel\noexpand\alph{equation}}%
  }
    \def\BibTeX{{\rm B\kern-.05em{\sc i\kern-.025em b}\kern-.08em
                 T\kern-.1667em\lower.7ex\hbox{E}\kern-.125emX}}}{}%
\def\AmS{{\protect\usefont{OMS}{cmsy}{m}{n}%
                A\kern-.1667em\lower.5ex\hbox{M}\kern-.125emS}}}{}%
\def\@@eqncr{\let\@tempa\relax
    \ifcase\@eqcnt \def\@tempa{& & &}\or \def\@tempa{& &}%
      \else \def\@tempa{&}\fi
     \@tempa
     \if@eqnsw
        \iftag@
           \@taggnum
        \else
           \@eqnnum\stepcounter{equation}%
        \fi
     \fi
     \global\tag@false
     \global\@eqnswtrue
     \global\@eqcnt\z@\cr}
\def\TCItag{\@ifnextchar*{\@TCItagstar}{\@TCItag}}
\def\@TCItag#1{%
    \global\tag@true
    \global\def\@taggnum{(#1)}%
    \global\def\@currentlabel{#1}}
\def\@TCItagstar*#1{%
    \global\tag@true
    \global\def\@taggnum{#1}%
    \global\def\@currentlabel{#1}}
\def\tint{\msi@int\textstyle\int}%
\def\tiint{\msi@int\textstyle\iint}%
\def\tiiint{\msi@int\textstyle\iiint}%
\def\tiiiint{\msi@int\textstyle\iiiint}%
\def\tidotsint{\msi@int\textstyle\idotsint}%
\def\toint{\msi@int\textstyle\oint}%
\newtoks\temptoksa
\newtoks\temptoksb
\newtoks\temptoksc
\def\msi@int#1#2{%
 \def\@temp{{#1#2\the\temptoksc_{\the\temptoksa}^{\the\temptoksb}}}%
 \futurelet\@nextcs
 \@int
}
\def\@int{%
   \ifx\@nextcs\limits
      \typeout{Found limits}%
      \temptoksc={\limits}%
	  \let\@next\@intgobble%
   \else\ifx\@nextcs\nolimits
      \typeout{Found nolimits}%
      \temptoksc={\nolimits}%
	  \let\@next\@intgobble%
   \else
      \typeout{Did not find limits or no limits}%
      \temptoksc={}%
      \let\@next\msi@limits%
   \fi\fi
   \@next   
}%
\def\@intgobble#1{%
   \typeout{arg is #1}%
   \msi@limits
}
\def\msi@limits{%
   \temptoksa={}%
   \temptoksb={}%
   \@ifnextchar_{\@limitsa}{\@limitsb}%
}
\def\@limitsa_#1{%
   \temptoksa={#1}%
   \@ifnextchar^{\@limitsc}{\@temp}%
}
\def\@limitsb{%
   \@ifnextchar^{\@limitsc}{\@temp}%
}
\def\@limitsc^#1{%
   \temptoksb={#1}%
   \@ifnextchar_{\@limitsd}{\@temp}%
}
\def\@limitsd_#1{%
   \temptoksa={#1}%
   \@temp
}
\def\dint{\msi@int\displaystyle\int}%
\def\diint{\msi@int\displaystyle\iint}%
\def\diiint{\msi@int\displaystyle\iiint}%
\def\diiiint{\msi@int\displaystyle\iiiint}%
\def\didotsint{\msi@int\displaystyle\idotsint}%
\def\doint{\msi@int\displaystyle\oint}%
\def\dsum{\mathop{\displaystyle \sum }}%
\def\ExitTCILatex{\makeatother }
\if@compatibility\message{amsmath already loaded}\fi\aftergroup\ExitTCILatex}
\if@compatibility\message{amstex already loaded}\fi\aftergroup\ExitTCILatex}
\if@compatibility\message{amsgen already loaded}\fi\aftergroup\ExitTCILatex}
\let\DOTSI\relax
\def\RIfM@{\relax\ifmmode}%
\def\FN@{\futurelet\next}%
\def\iint{\DOTSI\intno@\tw@\FN@\ints@}%
\def\iiint{\DOTSI\intno@\thr@@\FN@\ints@}%
\def\iiiint{\DOTSI\intno@4 \FN@\ints@}%
\def\idotsint{\DOTSI\intno@\z@\FN@\ints@}%
\def\ints@{\findlimits@\ints@@}%
\newif\iflimtoken@
\newif\iflimits@
\def\findlimits@{\limtoken@true\ifx\next\limits\limits@true
 \else\ifx\next\nolimits\limits@false\else
 \limtoken@false\ifx\ilimits@\nolimits\limits@false\else
 \ifinner\limits@false\else\limits@true\fi\fi\fi\fi}%
\def\multint@{\int\ifnum\intno@=\z@\intdots@                          
 \else\intkern@\fi                                                    
 \ifnum\intno@>\tw@\int\intkern@\fi                                   
 \ifnum\intno@>\thr@@\int\intkern@\fi                                 
 \int}
\def\multintlimits@{\intop\ifnum\intno@=\z@\intdots@\else\intkern@\fi
 \ifnum\intno@>\tw@\intop\intkern@\fi
 \ifnum\intno@>\thr@@\intop\intkern@\fi\intop}%
\def\intic@{%
    \mathchoice{\hskip.5em}{\hskip.4em}{\hskip.4em}{\hskip.4em}}%
\def\negintic@{\mathchoice
 {\hskip-.5em}{\hskip-.4em}{\hskip-.4em}{\hskip-.4em}}%
\def\ints@@{\iflimtoken@                                              
 \def\ints@@@{\iflimits@\negintic@
   \mathop{\intic@\multintlimits@}\limits                             
  \else\multint@\nolimits\fi                                          
  \eat@}
 \else                                                                
 \def\ints@@@{\iflimits@\negintic@
  \mathop{\intic@\multintlimits@}\limits\else
  \multint@\nolimits\fi}\fi\ints@@@}%
\def\intkern@{\mathchoice{\!\!\!}{\!\!}{\!\!}{\!\!}}%
\def\plaincdots@{\mathinner{\cdotp\cdotp\cdotp}}%
\def\intdots@{\mathchoice{\plaincdots@}%
 {{\cdotp}\mkern1.5mu{\cdotp}\mkern1.5mu{\cdotp}}%
 {{\cdotp}\mkern1mu{\cdotp}\mkern1mu{\cdotp}}%
 {{\cdotp}\mkern1mu{\cdotp}\mkern1mu{\cdotp}}}%
\def\RIfM@{\relax\protect\ifmmode}
\def\text{\RIfM@\expandafter\text@\else\expandafter\mbox\fi}
\let\nfss@text\text
\def\text@#1{\mathchoice
   {\textdef@\displaystyle\f@size{#1}}%
   {\textdef@\textstyle\tf@size{\firstchoice@false #1}}%
   {\textdef@\textstyle\sf@size{\firstchoice@false #1}}%
   {\textdef@\textstyle \ssf@size{\firstchoice@false #1}}%
   \glb@settings}
\def\textdef@#1#2#3{\hbox{{%
                    \everymath{#1}%
                    \let\f@size#2\selectfont
                    #3}}}
\newif\iffirstchoice@
\def\Let@{\relax\iffalse{\fi\let\\=\cr\iffalse}\fi}%
\def\vspace@{\def\vspace##1{\crcr\noalign{\vskip##1\relax}}}%
\def\multilimits@{\bgroup\vspace@\Let@
 \baselineskip\fontdimen10 \scriptfont\tw@
 \advance\baselineskip\fontdimen12 \scriptfont\tw@
 \lineskip\thr@@\fontdimen8 \scriptfont\thr@@
 \lineskiplimit\lineskip
 \vbox\bgroup\ialign\bgroup\hfil$\m@th\scriptstyle{##}$\hfil\crcr}%
\def\Sb{_\multilimits@}%
\def\endSb{\crcr\egroup\egroup\egroup}%
\def\Sp{^\multilimits@}%
\newdimen\ex@
\def\rightarrowfill@#1{$#1\m@th\mathord-\mkern-6mu\cleaders
 \hbox{$#1\mkern-2mu\mathord-\mkern-2mu$}\hfill
 \mkern-6mu\mathord\rightarrow$}%
\def\leftarrowfill@#1{$#1\m@th\mathord\leftarrow\mkern-6mu\cleaders
 \hbox{$#1\mkern-2mu\mathord-\mkern-2mu$}\hfill\mkern-6mu\mathord-$}%
\def\leftrightarrowfill@#1{$#1\m@th\mathord\leftarrow
\mkern-6mu\cleaders
 \hbox{$#1\mkern-2mu\mathord-\mkern-2mu$}\hfill
 \mkern-6mu\mathord\rightarrow$}%
\def\overrightarrow{\mathpalette\overrightarrow@}%
\def\overrightarrow@#1#2{\vbox{\ialign{##\crcr\rightarrowfill@#1\crcr
 \noalign{\kern-\ex@\nointerlineskip}$\m@th\hfil#1#2\hfil$\crcr}}}%
\def\overleftarrow{\mathpalette\overleftarrow@}%
\def\overleftarrow@#1#2{\vbox{\ialign{##\crcr\leftarrowfill@#1\crcr
 \noalign{\kern-\ex@\nointerlineskip}$\m@th\hfil#1#2\hfil$\crcr}}}%
\def\overleftrightarrow{\mathpalette\overleftrightarrow@}%
\def\overleftrightarrow@#1#2{\vbox{\ialign{##\crcr
   \leftrightarrowfill@#1\crcr
 \noalign{\kern-\ex@\nointerlineskip}$\m@th\hfil#1#2\hfil$\crcr}}}%
\def\underrightarrow{\mathpalette\underrightarrow@}%
\def\underrightarrow@#1#2{\vtop{\ialign{##\crcr$\m@th\hfil#1#2\hfil
  $\crcr\noalign{\nointerlineskip}\rightarrowfill@#1\crcr}}}%
\def\underleftarrow{\mathpalette\underleftarrow@}%
\def\underleftarrow@#1#2{\vtop{\ialign{##\crcr$\m@th\hfil#1#2\hfil
  $\crcr\noalign{\nointerlineskip}\leftarrowfill@#1\crcr}}}%
\def\underleftrightarrow{\mathpalette\underleftrightarrow@}%
\def\underleftrightarrow@#1#2{\vtop{\ialign{##\crcr$\m@th
  \hfil#1#2\hfil$\crcr
 \noalign{\nointerlineskip}\leftrightarrowfill@#1\crcr}}}%
\def\qopnamewl@#1{\mathop{\operator@font#1}\nlimits@}
\let\nlimits@\displaylimits
\def\setboxz@h{\setbox\z@\hbox}
\def\varlim@#1#2{\mathop{\vtop{\ialign{##\crcr
 \hfil$#1\m@th\operator@font lim$\hfil\crcr
 \noalign{\nointerlineskip}#2#1\crcr
 \noalign{\nointerlineskip\kern-\ex@}\crcr}}}}
 \def\rightarrowfill@#1{\m@th\setboxz@h{$#1-$}\ht\z@\z@
  $#1\copy\z@\mkern-6mu\cleaders
  \hbox{$#1\mkern-2mu\box\z@\mkern-2mu$}\hfill
  \mkern-6mu\mathord\rightarrow$}
\def\leftarrowfill@#1{\m@th\setboxz@h{$#1-$}\ht\z@\z@
  $#1\mathord\leftarrow\mkern-6mu\cleaders
  \hbox{$#1\mkern-2mu\copy\z@\mkern-2mu$}\hfill
  \mkern-6mu\box\z@$}
\def\projlim{\qopnamewl@{proj\,lim}}
\def\injlim{\qopnamewl@{inj\,lim}}
\def\varinjlim{\mathpalette\varlim@\rightarrowfill@}
\def\varprojlim{\mathpalette\varlim@\leftarrowfill@}
\def\varliminf{\mathpalette\varliminf@{}}
\def\varliminf@#1{\mathop{\underline{\vrule\@depth.2\ex@\@width\z@
   \hbox{$#1\m@th\operator@font lim$}}}}
\def\varlimsup{\mathpalette\varlimsup@{}}
\def\varlimsup@#1{\mathop{\overline
  {\hbox{$#1\m@th\operator@font lim$}}}}
\def\align{\@verbatim \frenchspacing\@vobeyspaces \@alignverbatim
You are using the "align" environment in a style in which it is not defined.}
\let\csname endalign*\endcsname =\endtrivlist
\def\alignat{\@verbatim \frenchspacing\@vobeyspaces \@alignatverbatim
You are using the "alignat" environment in a style in which it is not defined.}
\let\csname endalignat*\endcsname =\endtrivlist
\def\xalignat{\@verbatim \frenchspacing\@vobeyspaces \@xalignatverbatim
You are using the "xalignat" environment in a style in which it is not defined.}
\let\csname endxalignat*\endcsname =\endtrivlist
\def\gather{\@verbatim \frenchspacing\@vobeyspaces \@gatherverbatim
You are using the "gather" environment in a style in which it is not defined.}
\let\csname endgather*\endcsname =\endtrivlist
\def\multiline{\@verbatim \frenchspacing\@vobeyspaces \@multilineverbatim
You are using the "multiline" environment in a style in which it is not defined.}
\let\csname endmultiline*\endcsname =\endtrivlist
\def\arrax{\@verbatim \frenchspacing\@vobeyspaces \@arraxverbatim
You are using a type of "array" construct that is only allowed in AmS-LaTeX.}
\def\tabulax{\@verbatim \frenchspacing\@vobeyspaces \@tabulaxverbatim
You are using a type of "tabular" construct that is only allowed in AmS-LaTeX.}
\let\csname endarrax*\endcsname =\endtrivlist
\let\csname endtabulax*\endcsname =\endtrivlist
 \def\endequation{%
     \ifmmode\ifinner 
      \iftag@
        \addtocounter{equation}{-1} 
        $\hfil
           \displaywidth\linewidth\@taggnum\egroup \endtrivlist
        \global\tag@false
        \global\@ignoretrue   
      \else
        $\hfil
           \displaywidth\linewidth\@eqnnum\egroup \endtrivlist
        \global\tag@false
        \global\@ignoretrue 
      \fi
     \else   
      \iftag@
        \addtocounter{equation}{-1} 
        \eqno \hbox{\@taggnum}
        \global\tag@false%
        $$\global\@ignoretrue
      \else
        \eqno \hbox{\@eqnnum}
        $$\global\@ignoretrue
      \fi
     \fi\fi
 } 
 \newif\iftag@ \tag@false
 \def\TCItag{\@ifnextchar*{\@TCItagstar}{\@TCItag}}
 \def\@TCItag#1{%
     \global\tag@true
     \global\def\@taggnum{(#1)}%
     \global\def\@currentlabel{#1}}
 \def\@TCItagstar*#1{%
     \global\tag@true
     \global\def\@taggnum{#1}%
     \global\def\@currentlabel{#1}}
     \def\tag{\@ifnextchar*{\@tagstar}{\@tag}}
     \def\@tag#1{%
         \global\tag@true
         \global\def\@taggnum{(#1)}}
     \def\@tagstar*#1{%
         \global\tag@true
         \global\def\@taggnum{#1}}
\begin{document}

%
\title{The system dynamics analysis, resilient and fault-tolerant control for cyber-physical systems}
\author{Linlin~Li,~\IEEEmembership{Senior Member,~IEEE}, Steven~X.~Ding, Liutao Zhou, Maiying Zhong, and Kaixiang Peng 
\thanks{%
This work has been supported by the National Natural Science Foundation of
China under Grants 62322303, 62233012,  and U21A20483.}
\thanks{L. Li and K. Peng  are with School of Automation and Electrical Engineering, University of Science and Technology Beijing, Beijing 100083, P. R. China, Email: linlin.li@ustb.edu.cn, kaixiang@ustb.edu.cn.}
 \thanks{S. X. Ding and L. Zhou are with the Institute for Automatic Control and Complex Systems (AKS), University of Duisburg-Essen,  Germany.  Email: steven.ding@uni-due.de, liutao.zhou@uni-due.de.}
 \thanks{M. Zhong is with the College of Electrical Engineering and Automation,
Shandong University of Science and Technology, Qingdao 266590, China.  Email: mzhong@buaa.edu.cn.}}


\maketitle

\begin{abstract}
This paper is concerned with the detection, resilient and fault-tolerant control issues for cyber-physical systems. To this end, the impairment of system dynamics caused by the defined types of cyber-attacks and process faults is analyzed. Then, the relation of the system input and output signals with the residual  subspaces spanned by both the process and the controller is studied. Considering the limit capacity of standard observer-based detection and feedback control schemes in detecting and handling the cyber-attacks, a modified  configuration for cyber-physical systems is developed by transmitting the combinations of  the input and output residuals instead of the input and output signals, which is facile for dealing with both the process faults and cyber-attacks. It is followed by the integrated design of fault and attack detection, resilient and fault-tolerant control schemes.
To enhance the detectability of cyber-attacks, the potential stealthy attack mechanisms on deteriorating the tracking behavior and feedback control performance are developed from the attackers' point of view, and the associated detection schemes for such stealthy attacks are proposed from the defenders' point of view. 
A case study on the robotino system is utilized to demonstrate the proposed resilient cyber-physical configuration. 
\end{abstract}

\begin{IEEEkeywords}
Attack detection, fault detection, fault-tolerant control, resilient control, residual subspaces
\end{IEEEkeywords}

\section{Introduction}
\IEEEPARstart{T}{oday's}
automatic control systems as the centrepiece of
industrial cyber-physical systems (CPSs) are fully equipped with intelligent
sensors, actuators and an excellent information infrastructure. 
It is a
logic consequence of ever increasing demands for system performance and
production efficiency that today's automatic control systems are of an
extremely high degree of integration, automation and complexity 
\cite{Ding2020}. Maintaining
reliable and safe operations of automatic control systems are of elemental
importance for optimally managing industrial CPSs over the whole operation
life-cycle. As an indispensable maintenance functionality, real-time
monitoring, fault detection (FD) and fault-tolerant control (FTC) are widely integrated in
automatic control systems and runs parallel to the embedded control systems \cite{Ding2014,CRB2001,PFC00}.
In a traditional automatic control system, FD and FTC were mainly dedicated to maintaining functionalities of
sensors and actuators as the key components embedded in the system with technical faults \cite{AET_SMO_book_2011,Ding2020}.  

 Recently, a new
type of malfunctions, the so-called cyber-attacks on automatic control
systems, have drawn attention on the urgent need for developing new
monitoring, diagnosis and resilient control strategies \cite{DeruiDing2021-survey,TGXHV2020,TEIXEIRA-zero-attack_2015,Zhou2021IEEE-Proc}. Cyber-attacks can not
only considerably affect functionalities of sensors and actuators, but also
impair communications among the system components and sub-systems. Different from
technical faults, cyber-attacks are artificially created and could be
designed by attackers in such a way that they cannot be detected using the
existing diagnosis techniques  and lead to immense damages during system operations   \cite{Survey-attack-detection2018,Zhang2020attacks}. Such cyber-attacks are called stealthy. A
further type of cyber-attacks are the so-called eavesdropping attacks \cite{DIBAJI2019-survey}.
Although such attacks do not cause changes in system dynamics and
performance degradation, they enable adversary to gain system knowledge
which can be used to design, for instance, stealthy attacks  \cite{Chen2018attacks,Zhang2020attacks,Yang2022attacks,Shang2022attacks}.
As a result, an early and reliable detection of 
cyber-attacks is becoming a vital requirement on cyber-security of industrial
CPSs \cite{MMM2020}. In the
literature, a great number of results have been reported about detecting the
so-called replay, zero dynamics, covert attacks, false data injection
attacks or kernel attacks, which are stealthy integrity attacks and cannot,
structurally, be detected by means of a standard observer-based fault
detection system \cite%
{DIBAJI2019-survey,TGXHV2020,DLautomatica2022}. 
Specifically, alternative detection schemes have been proposed for the detection of stealthy integrity attacks in the so-called unified control and detection framework \cite{DLautomatica2022}. By extending the residual detection to both the system input and output subspaces, the stealthy  attack can be well detected.
On the other hand, innumerable capable diagnosis of both the physical faults and attacks have been developed with various specifications  in recent years \cite{HDCL-TCSII-2024,ZKPP-CSL-2021,RA-EJC-2023}. 
To our best knowledge, limit attention has been made on the analysis of the influence of the physical faults and attacks on the systems dynamics and distinguish them structurally.

As a
response to the security issues of CPSs,
resilient control has received increasing attention from both the research and application domains. As appropriate control-theoretic countermeasures, the
so-called resource-aware secure control methods \cite{DeruiDing2021-survey},
like event-triggered control algorithms or switching control mechanisms, are
prevalent resilient control schemes. Roughly speaking, these methods make
the CPSs highly resilient against attacks by means of optimal management of
data communications among subsystems in the CPS. On the other hand, the well-established control techniques have been widely implemented in resilient control, for instance model predictive control \cite{SZCDCX-AUTO-2024}, adaptive control \cite{AY-IS-2018}, and sliding mode control \cite{YZDYF-AUTO-2023}. Most of the existing methods require the precise information of the states of CPSs. To this end, the resilient state estimation has been studied by applying the Kalman filter, Luenberger observer  and extended state observer as the tools \cite{DeruiDing2021-survey}, which dominates the thematic area of resilient control of CPSs.
An immediate consequence is that the proposed design algorithms for resilient control are limited to updating the existing methods and
schemes for  FTC \cite{Ferreira-survey-2024}. 
In recent years, the integrated design of FD and FTC is state of the art of the advanced FTC technique. Reviewing the recent publications shows that  most of the studies on resilient control of CPSs address detection, estimation and control issues separately.
Our work is motivated by the above observation and in particular driven by the question: What is difference of the impact on the system dynamics  for cyber-attacks and technical faults? What is the difference between FD and attack detection, and further FTC and resilient control? Is it possible to develop an alternative CPS configuration to achieve the detection, resilient and FTC for both the cyber-attacks and technical faults in the integrated fashion? 

%


In the recent decade, a trend of investigating attack mechanisms can be observed, which is not only helpful to assess the security weakness and vulnerability of CPSs, but also useful in designing the defense strategy \cite{SMMSF-TAC-2021,ZHOU2023110723}. Among the involved studies, the denial-of-service (DoS) attacks \cite{QLSY-TAC-2018} and deception attacks \cite{AY-TAC-2018}, as the main categories of malicious attacks,  attracted the major attention. By blocking the communication links of system components over wireless networks, DoS attacks can degrade the system performance \cite{GSJS-AUTO-2018,RWDS-AUTO-22018}. The deception attacks are usually designed by the attacks to cause system performance degradation
through modifying data packets without being detected. Among the involved studies, switched strategy \cite{WSC-TC-2018}, multi-sensor joint attack \cite{FQCTZ-TAC-2020,GYH-TCNS-2021}, data-driven method \cite{AY-TAC-2018,ZHZL-TC-2021}, built the main stream. 
 Nevertheless, the major attention has been paid to degrade the tracking performance, enlarge the state estimation error, and increase  linear quadratic Gaussian (LQG) control performance \cite{JY-TC-2024,SCZC-TCNS-2022,ZLTLCX-TAC-2023,LY-TAC-2022}, while limit research efforts have been made on deteriorating the feedback control performance \cite{MS-TAC-2016}. 
On the other hand, most of the existing attack design schemes are implemented based on some strict assumptions, which limits the applications.


Inspired by these observations, the main objective of this paper is to deal with the fault and  cyber-security  issues in a systematic and reliable manner.
The main contributions can be summarized as follows:
\begin{itemize}
\item  The systematic analysis of the closed-loop dynamics of CPSs under technical faults and defined types of cyber-attacks, and the attack- and fault-induced impairment of system dynamics are given for the first time. 
\item The parameterization form of the closed-loop dynamics is studied first, and the relation of system signal subspace with the residual subspaces spanned by both the input and output signal is established. 
\item A modified CPS configuration is developed by utilizing the combinations of  the input and output residuals as the transmitted data instead of the process input and output to ensure i) the cyber-security in case that the communications fail, ii)  detection of cyber-attacks and technical faults, and iii) the data privacy.
\item The integrated design scheme for fault and attack detection, FTC and  resilient control is investigated for the first time  based on the proposed CPS configuration. 

\item The stealthy attack design schemes are proposed for deteriorating  the tracking behavior and feedback control performance respectively from  the attackers' point of view. Then, from the defenders' point of view, two performance degradation monitoring-based detection methods of stealthy attacks are developed.

\end{itemize}

\noindent\textbf{Notations}.  $\mathcal{H}_2^m$ represents the signal space of all the signals  of dimension $m$ with bounded energy. $\mathcal{RH}_\infty$ denotes the space of all rational transfer functions of stable systems. The root mean square (RMS) of signal $\alpha$ over the time interval $[k+1,k+\tau]$ is defined as $
\left\Vert \alpha\right\Vert _{RMS,\left[ k+1,k+\tau \right] }:=\sqrt{%
\frac{1}{\tau }\sum\nolimits_{i=k+1}^{k+\tau }\alpha^{T}\left( i\right)
\alpha\left( i\right) }.
$

\section{Preliminaries and Problem Formulation}\label{sec2}
\subsection{System factorizations}
Consider the  following linear discrete-time system
\begin{align}
&x(k+1)=A x(k)+Bu(k)+E_ff+w(k)\notag\\
&y(k)=C x(k)+Du(k)+F_ff+\nu(k)\label{eq2-b}
\end{align}
where $y\hspace{-2pt}\in \hspace{-2pt}\mathcal{R}^{k_y}\hspace{-1pt},u\hspace{-2pt}\in \hspace{-2pt}\mathcal{R}^{k_u}\hspace{-1pt},x\hspace{-2pt}\in \hspace{-2pt}\mathcal{R}^{k_x}$ denote the output, input and  state, respectively. $w(k),\nu(k)$ represent the process and measurement noise with $w\hspace{-2pt}\sim\hspace{-2pt} (0,\Sigma_w), \nu\hspace{-2pt}\sim\hspace{-2pt} (0,\Sigma_{\nu})$.    $A,B,C,D$ are system matrices with appropriate dimensions.  $f$ denotes the technical fault with $E_f, F_f$ as the distribution matrices.

Let $G=(A,B,C,D)$. $G(z)=N(z)M^{-1}(z)=\hat{M}^{-1}(z)\hat{N}(z)$ are called the right coprime factorization (RCF) and left coprime factorization (LCF) of $G(z)$ respectively if there exists $X(z), Y(z), \hat{X}(z),\hat{Y}(z)\in \mathcal{RH}_\infty$ such that the following Bezout identity holds
\begin{align}
\left[ 
\begin{array}{cc}
X(z) & Y(z)\\
-\hat{N}(z) & \hat{M}(z) 
\end{array}%
\right]\left[ 
\begin{array}{cc}
M(z) & -\hat{Y}(z)\\
N(z) & \hat{X}(z) 
\end{array}%
\right]=I\label{eq-3a}
\end{align}
where $M(z), N(z), \hat{M}(z),\hat{N}(z)\in \mathcal{RH}_\infty$.
The state space representations for the associated transfer matrices in Bezout identity are given by 
\begin{align}
&\hat{M}\hspace{-3pt}=\hspace{-3pt}\left(A\hspace{-3pt}-\hspace{-3pt}LC,-\hspace{-2pt}L,WC,W\right)\hspace{-2pt}, \hat{N}\hspace{-3pt}=\hspace{-3pt}\left(A\hspace{-3pt}-\hspace{-3pt}LC,B\hspace{-3pt}-\hspace{-3pt}LD,WC,WD\right)\notag\\
&{M}\hspace{-2pt}=\hspace{-2pt}\left(A\hspace{-2pt}+\hspace{-2pt}BF,BV,F,V\right)\hspace{-2pt},{N}\hspace{-2pt}=\hspace{-2pt}\left(A\hspace{-2pt}+\hspace{-2pt}BF,BV,C\hspace{-2pt}+\hspace{-2pt}DF,DV\right)\notag\\
&\hat{X}\hspace{-3pt}=\hspace{-3pt}\left(A\hspace{-3pt}+\hspace{-3pt}BF,L,C\hspace{-3pt}+\hspace{-3pt}DF,W^{-1}\right),\hat{Y}\hspace{-3pt}=\hspace{-3pt}\left(A\hspace{-3pt}+\hspace{-3pt}BF,-\hspace{-2pt}LW^{-1},F,0\right)\notag\\
&X\hspace{-3pt}=\hspace{-3pt}\left(A\hspace{-3pt}-\hspace{-3pt}LC,-\hspace{-2pt}(B\hspace{-3pt}-\hspace{-3pt}LD),F,I\right)\hspace{-2pt}, Y\hspace{-3pt}=\hspace{-3pt}\left(A\hspace{-3pt}-\hspace{-3pt}LC,\hspace{-3pt}-\hspace{-2pt}L,F,0\right)\label{eq-TF-MN}
\end{align}
where $F,L$ are selected such that $A-LC$ and $A+BF$ are Schur matrices. $W$ and $V$ are invertible matrices \cite{Ding2014,Zhou96}. 

According to Youla parameterization, all the controllers that can stabilize the plant $G(z)$ can be parameterized by 
\begin{equation}\label{eq-Youla}
\setlength{\abovedisplayskip}{5pt}\setlength{\belowdisplayskip}{5pt}
K\hspace{-2pt}=\hspace{-3pt}-\hspace{-2pt}(\hat{Y}\hspace{-2pt}-\hspace{-2pt}MQ)(\hat{X}\hspace{-2pt}+\hspace{-2pt}NQ)^{-1}\hspace{-2pt}=\hspace{-3pt}-\hspace{-2pt}(X\hspace{-2pt}+\hspace{-2pt}Q\hat{N})^{-1}(Y\hspace{-2pt}-\hspace{-2pt}Q\hat{M})
\end{equation}
where $Q\in \mathcal{RH}_\infty$ is the parameter system \cite{Zhou96}.


\subsection{The signal subspaces}

The input and output (I/O) signal space of the system (\ref{eq2-b})  can be described by the following definition. 
\begin{definition}
Consider the system (\ref{eq2-b}) with the RCF and LCF as $G(z)=N(z)M^{-1}(z)=\hat{M}^{-1}(z)\hat{N}(z)$. 
The kernel and image subspaces of the system (\ref{eq2-b}) are defined by
\begin{align}
&\mathcal{K}_{G}=\left\{\left[ 
\begin{array}{c}
u\\y
\end{array}%
\right]\in \mathcal{H}_2^{k_u+k_y}:\left[ 
\begin{array}{cc}
-\hat{N}& \hat{M}
\end{array}%
\right]\left[ 
\begin{array}{c}
u\\y
\end{array}%
\right]=0\right\}\\
&\mathcal{I}_{G}=\left\{\left[ 
\begin{array}{c}
u\\y
\end{array}%
\right]\in \mathcal{H}_2^{k_u+k_y}:\left[ 
\begin{array}{c}
u\\y
\end{array}%
\right]\hspace{-2pt}=\hspace{-2pt}\left[ 
\begin{array}{c}
M \\ N
\end{array}%
\right]\hspace{-2pt}v, v\in \mathcal{H}_2^{k_u}\right\}.\label{eq-SIR}
\end{align}
\end{definition}

It is well known that the LCF can be applied as the residual generator which is essential in fault detection
\begin{equation}
\setlength{\abovedisplayskip}{3pt}\setlength{\belowdisplayskip}{3pt}
r_y(z)=-\hat{N}(z)u(z)+\hat{M}(z)y(z)\end{equation}
where $r_y$ represents the residual signal delivering the information for the uncertainties and potential faults. 
The state-space representation can be written into the observer-based form
\begin{align}
&\hat{x}(k+1)=A\hat{x}(k)+Bu(k)+LW^{-1}r_y(k)\notag\\
&r_{y}(k)=W(y(k)-\hat{y}(k)),\hat{y}(k)=C\hat{x}(k)+Du(k)\label{eq-observer}
\end{align}
where $\hat{x}, \hat{y}$ denotes the state and output estimation, respectively.

%
%
%
Recall that due to Bezout identity,
\begin{equation*}
\setlength{\abovedisplayskip}{3pt}\setlength{\belowdisplayskip}{3pt}
\forall \left[ 
\begin{array}{c}
u\\y
\end{array}%
\right]\in \mathcal{I}_G, r_y=\left[ 
\begin{array}{cc}
-\hat{N} & \hat{M}
\end{array}%
\right]\left[ 
\begin{array}{c}
u\\y
\end{array}%
\right]=0.
\end{equation*}
It is of interest to notice that  the residual $r_y\neq 0$ delivers the  information for the potential uncertainties and faults in the process.
On the basis of the Bezout identity, it turns out
\begin{equation*}
\setlength{\abovedisplayskip}{3pt}\setlength{\belowdisplayskip}{3pt}
\forall r_y\neq 0, r_y(z)
\hspace{-2pt}=\hspace{-2pt}\left[ \hspace{-2pt}
\begin{array}{cc}
-\hspace{-2pt}\hat{N}(z) & \hat{M}(z)
\end{array}%
\right]\hspace{-2pt}\left[ \hspace{-2pt}
\begin{array}{c}
-\hspace{-2pt}\hat{Y}(z)+M(z)Q(z)\\ \hat{X}(z)+N(z)Q(z)
\end{array}%
\hspace{-2pt}\right]\hspace{-2pt}r_y(z).
\end{equation*}
Recall that $(\hat{X}(z)+N(z)Q(z),-\hat{Y}(z)+M(z)Q(z))$ build the LCF of the controller $K(z)$.
Consequently, any process I/O data which ensures  $r_y\neq 0$ can be characterized by the image subspace of the controller. 
\begin{definition}
The image subspace of controller (\ref{eq-Youla}) is given by
\begin{equation}
\setlength{\abovedisplayskip}{3pt}\setlength{\belowdisplayskip}{3pt}
\mathcal{I}_K\hspace{-2pt}=\hspace{-2pt} \left\{\left[ \hspace{-2pt}
\begin{array}{c}
u\\y
\end{array}%
\hspace{-2pt}\right]\hspace{-2pt}\in\hspace{-2pt} \mathcal{H}_2^{k_u+k_y}, \left[ \hspace{-2pt}
\begin{array}{c}
u\\y
\end{array}%
\hspace{-2pt}\right]\hspace{-2pt}=\hspace{-2pt}\left[ 
\begin{array}{c}
-\hspace{-2pt}\hat{Y} \hspace{-2pt}+\hspace{-2pt}MQ\\ \hat{X}\hspace{-2pt}+\hspace{-2pt}NQ\hspace{-2pt}
\end{array}%
\right]r_y, r_y\hspace{-2pt}\in \hspace{-2pt}\mathcal{H}_2^{k_y}\right\}.\notag
\end{equation}
\end{definition}

\subsection{Problem formulation}



With rapidly
increasing threats of cyber-crime and the resulted damages to industrial
CPSs, cyber-security imposes new challenging issues in research of
 diagnosis and resilient control of both cyber-attacks and technical faults,
which should be managed with
high reliability and in an integrated fashion with the existing control and
monitoring algorithms and facilities.
Thus, the main objective of this paper is to develop a detection, resilient and fault-tolerant control framework in a systematic and reliable manner. 
To this end, our efforts in the first part of this paper have been mainly made
 on
\begin{itemize}
\item analyzing 
the attack- and fault-induced impairment of system dynamics of a general type of CPSs, and
\item discussing the potential of the existing control and detection
methods to manage the system performance impairment due to the cyber-attacks and technical faults.
\end{itemize}
It can be shown that the standard observer-based detection and feedback control schemes are limited in dealing with cyber-attacks.  Thus, it is the main objective of the second part of this paper to develop a modified CPS configuration which guarantee not only high resilience of CPSs under cyber-attacks and high fault tolerance against process faults, but also ensure ``fail-safe" cyber-security and data privacy. 
Then the integrated design schemes  for i) fault detection, ii) attack detection, iii) resilient and fault-tolerant control are studied. To further enhance the security of the CPSs, the stealthy attack design schemes are
proposed for deteriorating  the tracking behavior and feedback control performance respectively from  the attackers' point of view. Then, from the defenders' point of view, the associated performance degradation monitoring-based detection methods of stealthy attacks are developed.

\section{Analysis of System Dynamics under Cyber-Attacks and  Faults}\label{sec3}
In this section, the system dynamics under cyber-attacks and faults for a general type of CPS systems are analyzed.

Consider a CPS system modelled by (\ref{eq2-b}). 
  The control signal $u_{MC}$ is generated at the control station as
\begin{align}
&u_{MC}(z)=K\left( z\right) y^{a}(z)+v(z)\label{eq15-1a}
\end{align}%
and sent to the plant as $u(z)=u_{MC}^{a}(z)$ 
with the following general type of cyber-attacks under consideration
\begin{align}
y^{a}(k)& \hspace{-2pt}=\hspace{-2pt}y(k)\hspace{-2pt}+\hspace{-2pt}a_{y}(k),  u_{MC}^{a}(k) \hspace{-2pt}=\hspace{-2pt}u_{MC}(k)\hspace{-2pt}+\hspace{-2pt}a_{u_{MC}}(k)\label{eq14-4b}
\end{align}%
where $a_{y}$ and $a_{u_{MC}}$ represent the cyber-attacks
injected into the output and input communication channels, respectively. 
$K(z)$  is the feedback gain matrix given by  (\ref{eq-Youla})  and $v(z)$ denotes the reference signal.

\subsection{Impairment of system dynamics caused by cyber-attacks and faults} 

We now analyze the possible impairment of the system dynamics under the faults and the  cyber-attacks (\ref{eq14-4b}).  
To ease the presentation, an alternative representation of the plant (\ref{eq2-b}) is studied first. 
\begin{theorem}
The process (\ref{eq2-b}) can be equivalently described by 
\begin{equation}
y(z)=G(z)u(z)+\hat{M}^{-1}(z)r_{y}(z)  \label{eq14-0a} 
\end{equation}
with $r_y(z)$ generated by (\ref{eq-observer}).
\end{theorem}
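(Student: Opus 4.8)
The plan is to treat the claimed representation as a direct algebraic rearrangement of the residual generator, relying only on the LCF $G(z)=\hat{M}^{-1}(z)\hat{N}(z)$ and the invertibility of $\hat{M}(z)$. First I would recall that the residual generator is $r_y(z)=-\hat{N}(z)u(z)+\hat{M}(z)y(z)$, which is well defined because $\hat{M},\hat{N}\in\mathcal{RH}_\infty$ and, by the Bezout identity (\ref{eq-3a}), $\hat{M}(z)$ is invertible in $\mathcal{RH}_\infty$. This last point must be stated explicitly, since premultiplication by $\hat{M}^{-1}$ is the only operation the argument depends on.

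Next I would premultiply the residual identity by $\hat{M}^{-1}(z)$ to get $\hat{M}^{-1}(z)r_y(z)=-\hat{M}^{-1}(z)\hat{N}(z)u(z)+y(z)$. Substituting the LCF $\hat{M}^{-1}(z)\hat{N}(z)=G(z)$ and rearranging yields exactly $y(z)=G(z)u(z)+\hat{M}^{-1}(z)r_y(z)$, which is (\ref{eq14-0a}). I would emphasize that this identity holds irrespective of the fault $f$ and the noises $w,\nu$: their entire effect on the output is, by construction, absorbed into $r_y$. As a sanity check, in the nominal fault- and noise-free case one has $r_y=-\hat{N}u+\hat{M}Gu=-\hat{N}u+\hat{N}u=0$, so the representation collapses to $y=Gu$, as expected.

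The only genuine verification is to confirm that the transfer-function residual $-\hat{N}u+\hat{M}y$ coincides with the signal $r_y$ produced by the observer (\ref{eq-observer}), since the statement phrases the result in terms of the latter. I would eliminate $\hat{y}$ and $\hat{x}$: substituting $r_y(k)=W(y(k)-C\hat{x}(k)-Du(k))$ into the state update gives the error-injection form $\hat{x}(k+1)=(A-LC)\hat{x}(k)+(B-LD)u(k)+Ly(k)$, so $\hat{x}(z)=(zI-(A-LC))^{-1}[(B-LD)u(z)+Ly(z)]$. Plugging this back into $r_y(z)=W(y(z)-C\hat{x}(z)-Du(z))$ and grouping the $u$- and $y$-terms, I would match the result against the realizations in (\ref{eq-TF-MN}), namely $\hat{M}=(A-LC,-L,WC,W)$ and $\hat{N}=(A-LC,B-LD,WC,WD)$, to conclude termwise that the observer output equals $-\hat{N}(z)u(z)+\hat{M}(z)y(z)$.

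I do not expect any real obstacle: the theorem is essentially a restatement of the LCF together with the definition of the residual, and the only mild bookkeeping is matching the observer realization to the transfer matrices $\hat{M},\hat{N}$. The single point requiring care is the legitimacy of premultiplying by $\hat{M}^{-1}$, which is guaranteed by the coprime-factorization setup underlying (\ref{eq-3a}).
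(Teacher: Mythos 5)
Your proof is correct, but it runs in the opposite direction from the paper's. The paper starts from the kernel representation $y(k)=\hat{y}(k)+W^{-1}r_y(k)$, views the observer (\ref{eq-observer}) as a system driven by $(u,r_y)$, computes $y(z)=G(z)u(z)+\left(I+C\left(zI-A\right)^{-1}L\right)W^{-1}r_y(z)$, and then identifies the coefficient of $r_y$ with $\hat{M}^{-1}(z)$ by a state-space inversion. You instead view the observer as driven by $(u,y)$, verify (via the error-injection form and the realizations (\ref{eq-TF-MN})) that its output coincides with the LCF residual $r_y=-\hat{N}u+\hat{M}y$ --- a correspondence the paper already asserts in Section \ref{sec2} --- and then obtain (\ref{eq14-0a}) by premultiplying with $\hat{M}^{-1}$ and using $G=\hat{M}^{-1}\hat{N}$. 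Your route trades the paper's explicit state-space formula for $\hat{M}^{-1}$ against the observer/LCF correspondence plus pure transfer-matrix algebra; both are sound, and yours is arguably the more standard coprime-factorization argument.

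One side remark does need fixing: you claim that the Bezout identity (\ref{eq-3a}) makes $\hat{M}$ invertible \emph{in} $\mathcal{RH}_\infty$, and you flag this as the one point requiring care. That claim is false in general: the zeros of $\hat{M}$ are the poles of $G$, so $\hat{M}^{-1}(z)=\left(I+C\left(zI-A\right)^{-1}L\right)W^{-1}$ has poles at the eigenvalues of $A$ and belongs to $\mathcal{RH}_\infty$ only when $A$ is Schur --- which the theorem does not assume. What your argument actually needs, and what is true, is merely that $\hat{M}$ is invertible as a proper rational matrix, which follows from $\hat{M}(\infty)=W$ being invertible; the identity (\ref{eq14-0a}) is an algebraic relation between transfer matrices and does not require $\hat{M}^{-1}$ to be stable. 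With that justification corrected, the proof stands.
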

\begin{proof}
It is evident that the process (\ref{eq2-b}) can be described by the following kernel-based system I/O model \cite{Ding2020}
\begin{equation}
y(k)=\hat{y}(k)+W^{-1}r_{y}(k).%
 \label{eq14-0}
\end{equation}%
 Observe that this model can be
equivalently written into 
\begin{align*}
y(z)& =G(z)u(z)+\left( I+\left( zI-A\right) ^{-1}L\right) W^{-1}r_{y}(z).
\end{align*}%
It is of interest to notice that 
\begin{equation}
\setlength{\abovedisplayskip}{3pt}\setlength{\belowdisplayskip}{3pt}
\left( I+\left( zI-A\right) ^{-1}L\right) W^{-1}=\hat{M}^{-1}(z)  \notag 
\end{equation}%
which completes the proof.
\end{proof}
\begin{remark}
It is noteworthy that the kernel-based system I/O-model (\ref{eq14-0a})  describes the system dynamics without approximation, independent of  the existence of any type of uncertainties and plant faults. Thus, throughout of the paper, (\ref{eq14-0a}) is adopted to represent the process  (\ref{eq2-b}). 
\end{remark}


\begin{theorem}\label{th3}
Given CPS modelled by (\ref{eq15-1a})-(\ref{eq14-0a}),  the system dynamics under  the attacks and faults  are governed by 
\begin{align}
&\left[ 
\begin{array}{c}
u \\ 
y%
\end{array}%
\right] =\left[ 
\begin{array}{c}
M \\ 
N%
\end{array}%
\right] \left( \bar{v}+d_{a}\right) +\left[ 
\begin{array}{c}
-\hat{Y}+MQ \\ 
\hat{X}+NQ%
\end{array}%
\right] r_{y},  \label{eq14-10}
\\
&\bar{v}\hspace{-2pt}=\hspace{-2pt}\left( X\hspace{-2pt}+\hspace{-2pt}Q\hat{N}\right) v,d_a\hspace{-2pt}=\hspace{-2pt}\left( \hspace{-1pt}X\hspace{-2pt}+\hspace{-2pt}Q\hat{N}\hspace{-1pt}\right) a_{u_{MC}}\hspace{-2pt}-\hspace{-2pt}\left( \hspace{-2pt}Y\hspace{-2pt}-\hspace{-2pt}Q\hat{M}\hspace{-2pt}\right)\hspace{-2pt}a_y.\notag
\end{align}
\end{theorem}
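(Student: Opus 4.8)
The plan is to reduce the theorem to a single block linear identity in the transfer-matrix domain and then invert a $2\times 2$ operator assembled from the coprime factors. Throughout I work with the $z$-transformed signals, so the time-domain attack model (\ref{eq14-4b}) is read as $y^{a}=y+a_{y}$ and $u=u_{MC}+a_{u_{MC}}$, and all manipulations are linear.

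First I would combine the control law with the attack model. Substituting the left-coprime form of the Youla controller $K=-(X+Q\hat{N})^{-1}(Y-Q\hat{M})$ from (\ref{eq-Youla}) into $u_{MC}=Ky^{a}+v$ and clearing the inverse gives $(X+Q\hat{N})u_{MC}=-(Y-Q\hat{M})y^{a}+(X+Q\hat{N})v$. Replacing $u_{MC}=u-a_{u_{MC}}$ and $y^{a}=y+a_{y}$ and collecting the $u$- and $y$-terms on one side yields the first relation
\begin{equation*}
(X+Q\hat{N})u+(Y-Q\hat{M})y=\bar{v}+d_{a},
\end{equation*}
which is exactly where the definitions of $\bar{v}$ and $d_{a}$ in the theorem statement originate.

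Next I would bring in the plant. The kernel representation (\ref{eq14-0a}) from Theorem~1, together with $G=\hat{M}^{-1}\hat{N}$, gives $\hat{M}y=\hat{N}u+r_{y}$, i.e.\ the second relation $-\hat{N}u+\hat{M}y=r_{y}$. Stacking the two relations produces
\begin{equation*}
\left[\begin{array}{cc} X+Q\hat{N} & Y-Q\hat{M}\\ -\hat{N} & \hat{M}\end{array}\right]\left[\begin{array}{c} u\\ y\end{array}\right]=\left[\begin{array}{c} \bar{v}+d_{a}\\ r_{y}\end{array}\right].
\end{equation*}
The crux is then to show that this coefficient operator is inverted by $\left[\begin{smallmatrix} M & -\hat{Y}+MQ\\ N & \hat{X}+NQ\end{smallmatrix}\right]$. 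Multiplying the two blocks out and invoking the four scalar identities packaged in the Bezout identity (\ref{eq-3a}) --- namely $XM+YN=I$, $\hat{N}\hat{Y}+\hat{M}\hat{X}=I$, $\hat{M}N=\hat{N}M$, and $Y\hat{X}=X\hat{Y}$ --- makes every $Q$-dependent term cancel and returns the identity matrix. Left-multiplying the stacked system by this inverse then delivers (\ref{eq14-10}) directly.

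The only genuine work is the verification of this generalized Bezout identity; everything else is substitution and rearrangement. The main obstacle is bookkeeping the signs and the placement of the Youla parameter $Q$, which enters the top row as $+Q\hat{N}$ and $-Q\hat{M}$ while the candidate inverse carries $+MQ$ and $+NQ$. I would therefore check each of the four block products separately, confirming in particular that the linear-in-$Q$ cross terms combine through $XM+YN=I$ and $\hat{N}\hat{Y}+\hat{M}\hat{X}=I$, and that the quadratic $Q(\cdot)Q$ term vanishes through $\hat{N}M-\hat{M}N=0$.
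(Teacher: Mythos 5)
Your proposal is correct and takes essentially the same approach as the paper: the paper likewise reduces the closed loop to the stacked relation with coefficient operator $\left[\begin{smallmatrix} X+Q\hat{N} & Y-Q\hat{M} \\ -\hat{N} & \hat{M}\end{smallmatrix}\right]$ (reached by row-scaling $\left[\begin{smallmatrix} I & -K \\ -G & I\end{smallmatrix}\right]$) and inverts it as $\left[\begin{smallmatrix} M & -\hat{Y}+MQ \\ N & \hat{X}+NQ\end{smallmatrix}\right]$ via the Bezout identity. Your only deviations are cosmetic: you clear $(X+Q\hat{N})^{-1}$ from the controller equation at the outset rather than passing through the $(I,-K;-G,I)$ form, and you spell out the block-product verification (including the cancellation of the $Q(\cdot)Q$ term) that the paper leaves implicit.
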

\begin{proof}
It is straightforward that
\begin{align}
&u=u_{MC}+a_{u_{MC}}=Ky+v+d,d=Ka_y+a_{u_{MC}}. \label{eq-feedback11}
\end{align}
 Considering the process model (\ref{eq14-0a}) and (\ref{eq-feedback11}), together with the coprime factorization of $G,K$, it is easy to see
\begin{align*}
\left[ 
\begin{array}{c}
u \\ 
y%
\end{array}%
\right] 
& =\left[ 
\begin{array}{cc}
I & \text{ }-K \\ 
-G & \text{ }I%
\end{array}%
\right] ^{-1}\left[ 
\begin{array}{c}
v+d \\ 
\hat{M}^{-1}r_{y}%
\end{array}%
\right] \\
& =\left[ 
\begin{array}{cc}
X+Q\hat{N} & \text{ }Y-Q\hat{M} \\ 
-\hat{N} & \text{ }\hat{M}%
\end{array}%
\right] ^{-1}\left[ 
\begin{array}{c}
\left( X+Q\hat{N}\right) \left( v+d\right) \\ 
r_{y}%
\end{array}%
\right] \\
& =\left[ 
\begin{array}{cc}
M & \text{ }-\hat{Y}+MQ \\ 
N & \text{ }\hat{X}+NQ%
\end{array}%
\right] \left[ 
\begin{array}{c}
\left( X+Q\hat{N}\right) \left( v+d\right) \\ 
r_{y}%
\end{array}%
\right]
\end{align*}
which completes the proof.
\end{proof}

According to (\ref{eq-feedback11}), the equivalent configuration of the CPS under cyber-attacks is depicted in Fig. \ref{fig2} with $\bar{r}_y=\hat{M}^{-1}r_y$. It is of interest to notice that the impacts of the cyber-attacks on the
closed-loop dynamics are equivalently modelled  as an unknown input added
to the reference signal.
 
\begin{figure}[!t]
\centering\includegraphics[width=5.5cm]{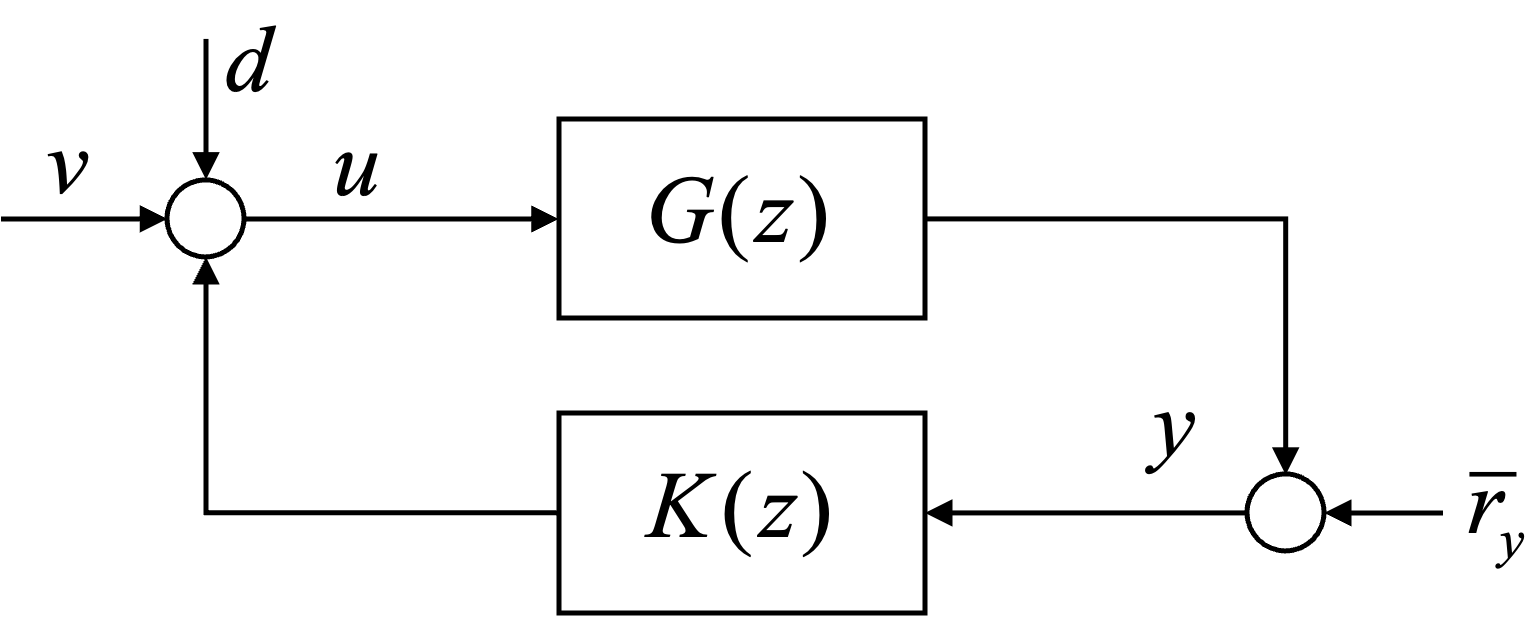}
\caption{The equivalent system dynamics for CPS systems with cyber-attacks}
\label{fig2}
\end{figure}

In what follows, we further analyze the system dynamics under the cyber-attacks  $a_{y}$ and $a_{u_{MC}}$ modelled by 
\begin{align}
\left[ 
\begin{array}{c}
a_{u_{MC}}(z) \\ 
a_y(z)
\end{array}%
\right]=\Pi^a(z)\left[ 
\begin{array}{c}
{u_{MC}}(z) \\ 
y(z)
\end{array}%
\right]+\left[ 
\begin{array}{c}
\epsilon_u(z) \\ 
\epsilon_y(z)
\end{array}%
\right]
\label{eq14-5b}
\end{align}%
where $\Pi^a$ is a stable dynamic system with $(u_{MC},y)$ as the input variables. $\epsilon_u$ and $\epsilon_y$ are $\mathcal{L}_2$-bounded, unknown, and assumed to be independent of $u,y$, which can be considered as the additive attacks. $\Pi^a(z)\left[ 
\begin{array}{c}
{u_{MC}}(z) \\ 
y(z)
\end{array}%
\right]$ is adopted to model the possible cyber-attacks which are designed and artificially designed by attackers in possession of data $(u_{MC},y)$, which can be interpreted as multiplicative attacks.
Here, without loss of generality, it is assumed that $\bar{\Pi}^{a}=I+\left[ 
\begin{array}{cc}
I & 0 \\ 
0 & 0%
\end{array}%
\right] \Pi ^{a}$ is invertible. 
\begin{theorem}
\label{th4}
Given the CPS system  (\ref{eq15-1a})-(\ref{eq14-0a}), the system dynamics under the cyber-attacks  (\ref{eq14-5b}) are governed by 
\begin{align}
&\left[ 
\begin{array}{c}
u \\ 
y%
\end{array}%
\right]  =\left[ 
\begin{array}{c}
M \\ 
N%
\end{array}%
\right] \bar{v}+\left[ 
\begin{array}{c}
-\hat{Y}+MQ \\ 
\hat{X}+NQ
\end{array}%
\right] r_{y}  \notag \\
& -\left[ 
\begin{array}{c}
M \\ 
N%
\end{array}%
\right] \left( I+\Phi \left[ 
\begin{array}{c}
M \\ 
N%
\end{array}%
\right] \right) ^{-1}\Phi \left[ 
\begin{array}{c}
M \\ 
N%
\end{array}%
\right] \bar{v}  \notag \\
& -\hspace{-2pt}\left[ 
\begin{array}{c}
M \\ 
N%
\end{array}%
\right] \hspace{-2pt}\left( \hspace{-2pt}I\hspace{-1pt}+\hspace{-1pt}\Phi \left[ 
\begin{array}{c}
M \\ 
N%
\end{array}%
\right] \right) ^{-1}\Phi \left[ 
\begin{array}{c}
-\hspace{-2pt}\hat{Y}\hspace{-2pt}+\hspace{-2pt}MQ \\ 
\hat{X}\hspace{-2pt}+\hspace{-2pt}NQ%
\end{array}%
\right] r_{y}\notag\\
&+\left[ 
\begin{array}{c}
M \\ 
N%
\end{array}%
\right] \left( I+\Phi \left[ 
\begin{array}{c}
M \\ 
N%
\end{array}%
\right] \right) ^{-1}\Psi \left[ 
\begin{array}{c}
\epsilon_u\\ 
\epsilon_y
\end{array}%
\right] \label{eq15-17} 
\end{align}%
where  
\begin{align}
&\Phi \hspace{-2pt} =\hspace{-2pt}-\hspace{-2pt}\left[ 
\begin{array}{cc}
X\hspace{-2pt}+\hspace{-2pt}Q\hat{N} & \text{ }-Y\hspace{-2pt}+\hspace{-2pt}Q\hat{M}%
\end{array}%
\right] \Pi ^{a}\hspace{-2pt}\left( \bar{\Pi}^{a}\right) ^{-1}  \notag\\
&\Psi\hspace{-2pt}=\hspace{-2pt}\left[ \hspace{-2pt}
\begin{array}{cc}
X\hspace{-2pt}+\hspace{-2pt}Q\hat{N} & -\hspace{-2pt}Y\hspace{-2pt}+\hspace{-2pt}Q\hat{M}%
\end{array}%
\hspace{-2pt}\right] \hspace{-2pt}-\hspace{-2pt}\Phi \left[ \hspace{-2pt}
\begin{array}{cc}
-I & 0 \\ 
0 & 0%
\end{array}%
\hspace{-2pt}\right]\hspace{-2pt},\bar{v}\hspace{-2pt}=\hspace{-2pt}\left( X\hspace{-2pt}+\hspace{-2pt}Q\hat{N}\right)v.\notag
\end{align}%
Moreover, the CPS dynamic can be equivalently expressed by 
\begin{align}
& \left\{ 
\begin{array}{l}
y\hspace{-2pt}=\hspace{-2pt}Gu\hspace{-2pt}+\hspace{-2pt}\hat{M}^{-1}r_{y} \\ 
u\hspace{-2pt}=\hspace{-2pt}K_{\Delta }y+\hspace{-2pt}\Delta_{\bar{v}} +d,%
\end{array}%
\right.  \label{eq15-18} \\
&\Delta _{\bar{v}} \hspace{-2pt}=\hspace{-2pt}\left( X\hspace{-2pt}+\hspace{-2pt}Q\hat{N}\hspace{-2pt}+\hspace{-2pt}\Delta K_{u}\right) ^{-1}\hspace{-2pt}\bar{v}, d\hspace{-2pt}=\hspace{-2pt}(X\hspace{-2pt}+\hspace{-2pt}Q\hat{N}\hspace{-2pt}+\hspace{-2pt}\Delta {K_u})^{-1}\Psi \hspace{-2pt}\left[ \hspace{-2pt}
\begin{array}{c}
\epsilon_u\\ \epsilon_y
\end{array}%
\hspace{-2pt}\right]\notag\\
&K_{\Delta } \hspace{-2pt}=\hspace{-2pt}-\hspace{-2pt}\left( X\hspace{-2pt}+\hspace{-2pt}Q\hat{N}\hspace{-2pt}+\hspace{-2pt}\Delta K_{u}\right) ^{-1}\left( Y\hspace{-2pt}-\hspace{-2pt}Q\hat{M}%
\hspace{-2pt}+\hspace{-2pt}\Delta K_{y}\right) , \notag\\
&\Delta {K_u}=\Phi(:,1:k_u),\Delta {K_y}\hspace{-2pt}=\hspace{-2pt}\Phi(:,k_u\hspace{-2pt}+\hspace{-2pt}1:k_u+k_y).\notag
\end{align}
\end{theorem}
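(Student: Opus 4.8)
The plan is to build on Theorem~\ref{th3}, which already expresses $\left[\begin{array}{c} u\\y\end{array}\right]$ for \emph{arbitrary} additive attacks $a_{u_{MC}},a_{y}$ in terms of $\bar v+d_a$ and $r_y$, and to specialise it to the structured attack model (\ref{eq14-5b}). The whole derivation reduces to two nested closed-loop inversions — first closing the ``attack loop'' created because $\Pi^a$ feeds back on $u_{MC}=u-a_{u_{MC}}$, then closing the ordinary feedback loop through $\left[\begin{array}{c} M\\N\end{array}\right]$. The only analytic tools needed are the two push-through identities $(I+AB)^{-1}A=A(I+BA)^{-1}$ and $(I+AB)^{-1}=I-A(I+BA)^{-1}B$. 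First I would write $d_a$ from Theorem~\ref{th3} compactly as $d_a=\Lambda\left[\begin{array}{c} a_{u_{MC}}\\a_y\end{array}\right]$ with $\Lambda=\left[\begin{array}{cc} X+Q\hat N & -Y+Q\hat M\end{array}\right]$. Since $u=u_{MC}+a_{u_{MC}}$, the signal driving $\Pi^a$ satisfies $\left[\begin{array}{c} u_{MC}\\y\end{array}\right]=\left[\begin{array}{c} u\\y\end{array}\right]-P\left[\begin{array}{c} a_{u_{MC}}\\a_y\end{array}\right]$ with $P=\left[\begin{array}{cc} I & 0\\0 & 0\end{array}\right]$. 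Substituting into (\ref{eq14-5b}) gives $(I+\Pi^a P)\left[\begin{array}{c} a_{u_{MC}}\\a_y\end{array}\right]=\Pi^a\left[\begin{array}{c} u\\y\end{array}\right]+\left[\begin{array}{c}\epsilon_u\\\epsilon_y\end{array}\right]$; invertibility of $\bar\Pi^a=I+P\Pi^a$ and the push-through identity then yield $\left[\begin{array}{c} a_{u_{MC}}\\a_y\end{array}\right]=\Pi^a(\bar\Pi^a)^{-1}\left[\begin{array}{c} u\\y\end{array}\right]+(I+\Pi^a P)^{-1}\left[\begin{array}{c}\epsilon_u\\\epsilon_y\end{array}\right]$, whence $d_a=-\Phi\left[\begin{array}{c} u\\y\end{array}\right]+\Psi\left[\begin{array}{c}\epsilon_u\\\epsilon_y\end{array}\right]$ with $\Phi=-\Lambda\Pi^a(\bar\Pi^a)^{-1}$ and $\Psi=\Lambda(I+\Pi^a P)^{-1}$. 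A one-line check that $(I+\Pi^a P)^{-1}=I-\Pi^a(\bar\Pi^a)^{-1}P$ converts $\Psi$ into the stated form $\Psi=\Lambda-\Phi\left[\begin{array}{cc} -I & 0\\0 & 0\end{array}\right]$.

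Next I substitute $d_a=-\Phi\left[\begin{array}{c} u\\y\end{array}\right]+\Psi\left[\begin{array}{c}\epsilon_u\\\epsilon_y\end{array}\right]$ into the Theorem~\ref{th3} identity and collect the $\left[\begin{array}{c} u\\y\end{array}\right]$ terms, obtaining the fixed-point equation $\left(I+\left[\begin{array}{c} M\\N\end{array}\right]\Phi\right)\left[\begin{array}{c} u\\y\end{array}\right]=\left[\begin{array}{c} M\\N\end{array}\right]\bar v+\left[\begin{array}{c} M\\N\end{array}\right]\Psi\left[\begin{array}{c}\epsilon_u\\\epsilon_y\end{array}\right]+\left[\begin{array}{c} -\hat Y+MQ\\\hat X+NQ\end{array}\right]r_y$. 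Applying the push-through identity on the $\bar v$- and $\epsilon$-channels to move the inverse from $I+\left[\begin{array}{c} M\\N\end{array}\right]\Phi$ onto $I+\Phi\left[\begin{array}{c} M\\N\end{array}\right]$, and the Woodbury form $\left(I+\left[\begin{array}{c} M\\N\end{array}\right]\Phi\right)^{-1}=I-\left[\begin{array}{c} M\\N\end{array}\right]\left(I+\Phi\left[\begin{array}{c} M\\N\end{array}\right]\right)^{-1}\Phi$ on the $r_y$-channel, reproduces (\ref{eq15-17}) term by term.

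Finally, for the equivalent form (\ref{eq15-18}) I keep the process relation $y=Gu+\hat M^{-1}r_y$ untouched and rewrite only the control law. Starting from $u=Ky+v+d_{\mathrm{orig}}$ with $d_{\mathrm{orig}}=\left[\begin{array}{cc} I & K\end{array}\right]\left[\begin{array}{c} a_{u_{MC}}\\a_y\end{array}\right]=(X+Q\hat N)^{-1}d_a$ (using $K=-(X+Q\hat N)^{-1}(Y-Q\hat M)$), I insert $d_a=-\Delta K_u\,u-\Delta K_y\,y+\Psi\left[\begin{array}{c}\epsilon_u\\\epsilon_y\end{array}\right]$ after partitioning $\Phi=\left[\begin{array}{cc}\Delta K_u & \Delta K_y\end{array}\right]$, move the $u$-dependent term to the left, and invert $I+(X+Q\hat N)^{-1}\Delta K_u$. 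Cancelling $(X+Q\hat N)K=-(Y-Q\hat M)$ collapses the $y$-coefficient to $K_\Delta$, the $v$-coefficient to $\Delta_{\bar v}$, and the $\epsilon$-term to $d$, giving $u=K_\Delta y+\Delta_{\bar v}+d$. I expect the main obstacle to be bookkeeping rather than depth: the two push-through inversions act on opposite sides and on blocks of different sizes — $\Pi^a,P$ live on the $(k_u+k_y)$-dimensional I/O space while $\Phi,\left[\begin{array}{c} M\\N\end{array}\right]$ close the $k_u$-dimensional reference loop — so each identity must be applied on the correct channel, and the one genuinely non-obvious step is the verification $\Psi=\Lambda(I+\Pi^a P)^{-1}=\Lambda-\Phi\left[\begin{array}{cc} -I & 0\\0 & 0\end{array}\right]$, which is what makes the additive gain $\Psi$ in (\ref{eq15-17}) consistent with the multiplicative gain $\Phi$.
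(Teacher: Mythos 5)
Your proposal is correct and takes essentially the same approach as the paper's own proof: both specialize Theorem \ref{th3}, convert the multiplicative attack model (\ref{eq14-5b}) into the implicit feedback $d_a=-\Phi\left[\begin{smallmatrix}u\\y\end{smallmatrix}\right]+\Psi\left[\begin{smallmatrix}\epsilon_u\\\epsilon_y\end{smallmatrix}\right]$ via one push-through inversion of the attack loop, close the outer loop and apply the Woodbury identity $\left(I+\left[\begin{smallmatrix}M\\N\end{smallmatrix}\right]\Phi\right)^{-1}=I-\left[\begin{smallmatrix}M\\N\end{smallmatrix}\right]\left(I+\Phi\left[\begin{smallmatrix}M\\N\end{smallmatrix}\right]\right)^{-1}\Phi$ to obtain (\ref{eq15-17}), and recover (\ref{eq15-18}) by multiplying the control law through by $X+Q\hat{N}$ and regrouping the $u$-dependent terms. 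The only cosmetic difference is that you solve the attack-loop fixed point for the attack vector $\left[\begin{smallmatrix}a_{u_{MC}}\\a_y\end{smallmatrix}\right]$ directly (which costs you the extra one-line check $\left(I+\Pi^a P\right)^{-1}=I-\Pi^a(\bar{\Pi}^a)^{-1}P$), whereas the paper solves it for $\left[\begin{smallmatrix}u_{MC}\\y\end{smallmatrix}\right]$ first; both yield identical $\Phi$ and $\Psi$.
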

\begin{proof}
It follows from the proof of Theorem \ref{th3} that the system dynamics can be described by 
\begin{align*}
&\left[  \hspace{-2pt}
\begin{array}{c}
u \\ 
y%
\end{array}%
 \hspace{-2pt}\right]  
 \hspace{-4pt} =\hspace{-4pt}\left[ \hspace{-2pt}
\begin{array}{c}
M \\ 
N%
\end{array}%
\hspace{-2pt}\right] \hspace{-4pt}\left( \hspace{-2pt}\bar{v}\hspace{-2pt}+\hspace{-4pt}\left[ \hspace{-2pt}
\begin{array}{cc}
X\hspace{-2pt}+\hspace{-2pt}Q\hat{N} & -\hspace{-2pt}Y\hspace{-2pt}+\hspace{-2pt}Q\hat{M}%
\end{array}%
\right] \hspace{-4pt}\left(\hspace{-2pt}\Pi ^{a}\hspace{-4pt}\left[ \hspace{-2pt}
\begin{array}{c}
u_{MC} \\ 
y%
\end{array}%
\hspace{-2pt}\right] \hspace{-4pt}+\hspace{-4pt}\left[ \hspace{-2pt}
\begin{array}{c}
\epsilon_u\\
\epsilon_y
\end{array}%
\hspace{-2pt}\right] \hspace{-2pt}\right)\hspace{-2pt}\right)  \\
& \;\;\;\;\;\;\;\;+\left[ \hspace{-2pt}
\begin{array}{c}
-\hat{Y}+MQ \\ 
\hat{X}+NQ%
\end{array}%
\right] r_{y}.
\end{align*}%
Observe that 
\begin{align*}
&\left[ \hspace{-2pt}
\begin{array}{c}
u_{MC} \\ 
y%
\end{array}%
\hspace{-2pt}\right] \hspace{-3pt}=\hspace{-3pt}\left[ 
\begin{array}{c}
u \\ 
y%
\end{array}%
\right] \hspace{-3pt}+\hspace{-3pt}\left[ \hspace{-2pt}
\begin{array}{cc}
-I & 0 \\ 
0 & 0%
\end{array}%
\hspace{-2pt}\right]\hspace{-2pt} \left(\Pi ^{a}\hspace{-2pt}\left[ \hspace{-2pt}
\begin{array}{c}
u_{MC} \\ 
y%
\end{array}%
\hspace{-2pt}\right]+\left[ \hspace{-2pt}
\begin{array}{c}
\epsilon_u\\
\epsilon_y
\end{array}%
\hspace{-2pt}\right]\right)\hspace{-4pt} \\
&=\hspace{-4pt}\left(\hspace{-2pt} I\hspace{-2pt}+\hspace{-2pt}\left[\hspace{-2pt} 
\begin{array}{cc}
I & 0 \\ 
0 & 0%
\end{array}%
\hspace{-2pt}\right] \hspace{-2pt}\Pi ^{a}\hspace{-2pt}\right) ^{\hspace{-2pt}-\hspace{-1pt}1}\hspace{-2pt}\left[ \hspace{-2pt}
\begin{array}{c}
u \\ 
y%
\end{array}%
\hspace{-2pt}\right] \hspace{-4pt}+\hspace{-4pt}\left(\hspace{-2pt} I\hspace{-2pt}+\hspace{-2pt}\left[ \hspace{-2pt}
\begin{array}{cc}
I & 0 \\ 
0 & 0%
\end{array}%
\hspace{-2pt}\right] \hspace{-3pt}\Pi ^{a}\hspace{-2pt}\right) ^{\hspace{-2pt}-\hspace{-1pt}1}\hspace{-2pt}\left[ \hspace{-4pt}
\begin{array}{cc}
-\hspace{-1pt}I & 0 \\ 
0 & 0%
\end{array}%
\hspace{-2pt}\right]\hspace{-3pt}\left[ \hspace{-2pt}
\begin{array}{c}
\epsilon_u\\ \epsilon_y
\end{array}%
\hspace{-2pt}\right]
\end{align*}%
which yields%
\begin{align*}
\left[ \hspace{-2pt}
\begin{array}{c}
u \\ 
y%
\end{array}%
\hspace{-2pt}\right]  \hspace{-2pt}&=\hspace{-2pt}\left[ \hspace{-2pt}
\begin{array}{c}
M \\ 
N%
\end{array}%
\hspace{-2pt}\right]\hspace{-2pt} \left( \hspace{-2pt}\bar{v}\hspace{-2pt}-\hspace{-2pt}\Phi \hspace{-2pt}\left[ 
\begin{array}{c}
u \\ 
y%
\end{array}%
\right] \hspace{-2pt}+\hspace{-2pt}\Psi\hspace{-2pt}\left[ 
\begin{array}{c}
\epsilon_u\\ \epsilon_y
\end{array}%
\right] \hspace{-2pt}\right) \hspace{-3pt}+\hspace{-2pt}\left[ \hspace{-2pt}
\begin{array}{c}
-\hspace{-2pt}\hat{Y}\hspace{-2pt}+\hspace{-2pt}MQ \\ 
\hat{X}\hspace{-2pt}+\hspace{-2pt}NQ%
\end{array}%
\hspace{-2pt}\right]\hspace{-2pt} r_{y}
  \\
&=\hspace{-2pt}\left( I\hspace{-2pt}+\hspace{-2pt}\left[ \hspace{-2pt}
\begin{array}{c}
M \\ 
N%
\end{array}%
\hspace{-2pt}\right] \Phi \right) ^{-1}\hspace{-4pt}\left( \hspace{-2pt}\left[ \hspace{-2pt}
\begin{array}{c}
M \\ 
N%
\end{array}%
\hspace{-2pt}\right]\hspace{-4pt} \left(\hspace{-2pt}\bar{v}\hspace{-2pt}+\hspace{-2pt}\Psi\hspace{-2pt}\left[ 
\begin{array}{c}
\epsilon_u\\ \epsilon_y
\end{array}%
\right] \hspace{-2pt} \right)\hspace{-3pt}+\hspace{-3pt}\left[ \hspace{-2pt}
\begin{array}{c}
-\hspace{-2pt}\hat{Y}\hspace{-2pt}+\hspace{-2pt}MQ \\ 
\hat{X}\hspace{-2pt}+\hspace{-2pt}NQ%
\end{array}%
\hspace{-2pt}\right]\hspace{-2pt} r_{y}\hspace{-2pt}\right) \hspace{-2pt}.
\end{align*}%
After some routine calculations using the relation 
\begin{equation*}
\setlength{\abovedisplayskip}{3pt}\setlength{\belowdisplayskip}{3pt}
\left( I\hspace{-2pt}+\hspace{-2pt}\left[ 
\begin{array}{c}
M \\ 
N%
\end{array}%
\right]\hspace{-2pt} \Phi \right) ^{-1}\hspace{-2pt}=\hspace{-2pt}I\hspace{-2pt}-\hspace{-2pt}\left[ 
\begin{array}{c}
M \\ 
N%
\end{array}%
\right] \left( I\hspace{-2pt}+\hspace{-2pt}\Phi \left[ 
\begin{array}{c}
M \\ 
N%
\end{array}%
\right] \right) ^{-1}\hspace{-2pt}\Phi ,
\end{equation*}%
we have (\ref{eq15-17}). 
Noticing that
\begin{align*}
&u=Ky^a+v+a_{u_{MC}}=Ky+Ka_y+v+a_{u_{MC}}\Longrightarrow\\
&(X\hspace{-2pt}+\hspace{-2pt}Q\hat{N})u\hspace{-2pt}=\hspace{-2pt}(-\hspace{-2pt}Y\hspace{-2pt}+\hspace{-2pt}Q\hat{M})(y\hspace{-2pt}+\hspace{-2pt}a_y)\hspace{-2pt}+\hspace{-2pt}(X\hspace{-2pt}+\hspace{-2pt}Q\hat{N})(v\hspace{-2pt}+\hspace{-2pt}a_{u_{MC}})\\
&\;\;\;\;=\hspace{-2pt}(-\hspace{-1pt}Y\hspace{-2pt}+\hspace{-2pt}Q\hat{M})y\hspace{-2pt}+\hspace{-2pt}(X\hspace{-2pt}+\hspace{-2pt}Q\hat{N})v\hspace{-2pt}-\hspace{-2pt}\Phi \left[ 
\begin{array}{c}
u\\y
\end{array}%
\right] \hspace{-2pt}+\hspace{-2pt}\Psi\left[ 
\begin{array}{c}
\epsilon_u\\ \epsilon_y
\end{array}%
\right],
\end{align*}
 (\ref{eq15-18}) is finally proved. 
\end{proof}

It follows from (\ref{eq15-17}) in Theorem \ref{th4}  that the system dynamics consists of five terms. The first two terms give the nominal system response (attack-free dynamic), while the third and fourth terms describe the deviations caused by the cyber-attacks modelled by (%
\ref{eq14-5b}) in the responses to the reference signal and to the plant residual, respectively. The fifth term gives the deviations caused by the additive cyber-attacks $\epsilon_u,\epsilon_y$. 

When $\epsilon_u=0,\epsilon_y=0$, $a_y,a_{u_{MC}}$ can be interpreted as the multiplicative cyber-attacks. The equivalent configuration of the CPS closed-loop dynamic under multiplicative cyber-attacks is sketched in Fig. \ref{fig3}. It is evident that the uncertainty in the feedback controller in case of the multiplicative
attacks,  may affect the loop stability and feedback performance.


\begin{remark}
It is worth mentioning that
the additive cyber-attacks (\ref{eq14-4b}) deviate the system response to the reference signal from the nominal value, while the multiplicative cyber-attacks (\ref{eq14-5b}) lead to the change both to the reference response and feedback control performance.
\end{remark}

\begin{figure}[!t]
\centering\includegraphics[width=5.5cm]{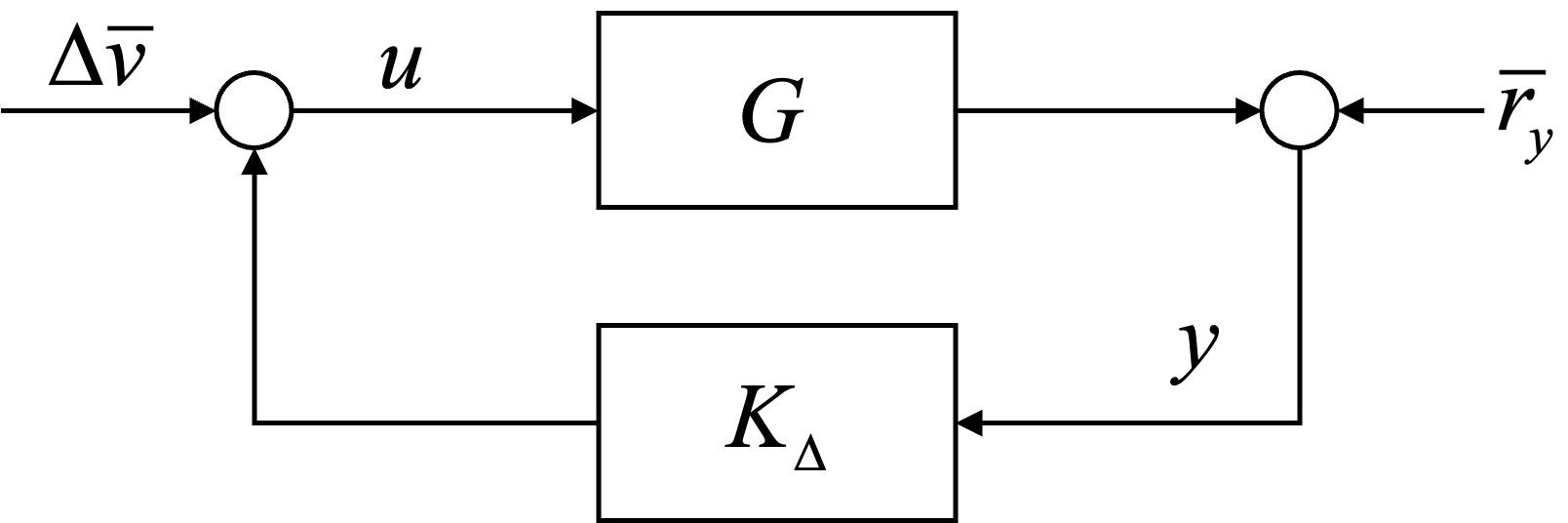}
\caption{The equivalent system dynamics for CPS systems with multiplicative cyber-attacks}
\label{fig3}
\end{figure}

\subsection{Analysis of system dynamics under cyber-attacks and faults}
On the basis of  Theorems \ref{th3}-\ref{th4}
and the Bezout identity 
\begin{align}
-\hat{N}M+\hat{M}N=0
\end{align}
it is evident that $r_y$ is independent of $a_y,a_{u_{MC}}$.
Thus, the following corollary is straightforward.
\begin{corollary}
Given CPS modelled by (\ref{eq15-1a})-(\ref{eq14-0a}),
both the additive attacks (\ref{eq14-4b}) and multiplicative attacks (\ref{eq14-5b}), lead to variations exclusively in the system image subspace $\mathcal{I}_G$, while the process uncertainties and faults, reflected by the residual $r_y$, results in change solely in the controller image  subspace $\mathcal{I}_K$.
\end{corollary}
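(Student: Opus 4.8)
The plan is to read the corollary off the parameterizations already obtained in Theorems \ref{th3}--\ref{th4}, using the Bezout identity (\ref{eq-3a}) to guarantee that the plant and controller images split the signal space as a \emph{direct sum} $\mathcal{H}_2^{k_u+k_y}=\mathcal{I}_G\oplus\mathcal{I}_K$. Because this split is \emph{unique}, the corollary reduces to checking which of the two coordinates each attack variable and each fault variable occupies in (\ref{eq14-10}) and (\ref{eq15-17}); the words ``exclusively'' and ``solely'' are then precisely the assertion that the complementary coordinate remains at its nominal value.

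First I would exhibit the two coordinate projectors directly from (\ref{eq-3a}). Reading the bottom row of the Bezout product gives $\left[\begin{smallmatrix}-\hat N & \hat M\end{smallmatrix}\right]\left[\begin{smallmatrix}M\\N\end{smallmatrix}\right]=-\hat N M+\hat M N=0$, and, using that same relation once more, $\left[\begin{smallmatrix}-\hat N & \hat M\end{smallmatrix}\right]\left[\begin{smallmatrix}-\hat Y+MQ\\ \hat X+NQ\end{smallmatrix}\right]=\hat N\hat Y+\hat M\hat X=I$. Thus the residual operator $\left[\begin{smallmatrix}-\hat N & \hat M\end{smallmatrix}\right]$ annihilates $\mathcal{I}_G$ and returns the $\mathcal{I}_K$-coordinate $r_y$ intact. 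Dually, a short computation shows that $\left[\begin{smallmatrix}X+Q\hat N & Y-Q\hat M\end{smallmatrix}\right]$ annihilates $\mathcal{I}_K$ and returns the $\mathcal{I}_G$-coordinate, so the two projections pin down the decomposition uniquely.

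Next I would apply these projectors to the two governing equations. In the additive case, (\ref{eq14-10}) already displays the split: the $\mathcal{I}_G$-coordinate is $\bar{v}+d_a$ and the $\mathcal{I}_K$-coordinate is $r_y$. The attacks enter only through $d_a=(X+Q\hat N)a_{u_{MC}}-(Y-Q\hat M)a_y$, hence live purely in $\mathcal{I}_G$, while $r_y$---generated from the plant faults and uncertainties via (\ref{eq-observer}) and independent of $a_y,a_{u_{MC}}$---sits purely in $\mathcal{I}_K$. In the multiplicative case I would group the five terms of (\ref{eq15-17}): the four terms carrying the prefix $\left[\begin{smallmatrix}M\\N\end{smallmatrix}\right]$ collapse into a single $\mathcal{I}_G$ element containing all the attack data $\Phi,\Psi,\epsilon_u,\epsilon_y$, while the lone remaining term $\left[\begin{smallmatrix}-\hat Y+MQ\\ \hat X+NQ\end{smallmatrix}\right]r_y$ is the $\mathcal{I}_K$ element. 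Applying $\left[\begin{smallmatrix}-\hat N & \hat M\end{smallmatrix}\right]$ to (\ref{eq15-17}) returns exactly $r_y$ for \emph{every} admissible attack, confirming that the attacks never disturb the $\mathcal{I}_K$-coordinate.

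The one step needing care is the fourth term of (\ref{eq15-17}), which contains $r_y$ yet carries the $\left[\begin{smallmatrix}M\\N\end{smallmatrix}\right]$ prefix and therefore belongs to $\mathcal{I}_G$. This does not contradict the claim, since that term is proportional to $\Phi$ and vanishes in the attack-free case $\Phi=0$; it is thus an \emph{attack-induced} redistribution of the residual response into $\mathcal{I}_G$, not an effect of the faults acting on their own. The two clean invariants I would state explicitly are that $\left[\begin{smallmatrix}-\hat N & \hat M\end{smallmatrix}\right]\left[\begin{smallmatrix}u\\y\end{smallmatrix}\right]=r_y$ irrespective of the attack, and that the $\mathcal{I}_G$-coordinate collapses to its nominal value $\bar{v}$ whenever the attacks are switched off. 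Uniqueness of the direct-sum split then delivers the corollary.
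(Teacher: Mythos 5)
Your proposal is correct and follows essentially the same route as the paper, which derives the corollary directly from the parameterizations in Theorems \ref{th3}--\ref{th4} together with the Bezout identity, observing that attacks enter only through the $\left[\begin{smallmatrix}M\\N\end{smallmatrix}\right]$-prefixed terms while $r_y$ is invariant to them. Your write-up is in fact more explicit than the paper's (which dismisses the claim as ``straightforward''), notably in exhibiting the two complementary projectors and in resolving the potential objection about the fourth, attack-induced $r_y$-dependent term of (\ref{eq15-17}).
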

It is apparent that the residual generator, 
$
r_{y}=\hat{M}y-\hat{N}u,
$
embedded and implemented at the plant side  
only reflects the uncertainties and faults in the residual subspace, and thus can not detect the cyber-attacks.
 In addition, the following claims are of considerable importance:


\begin{itemize}
\item detection and control of CPSs under cyber-attacks and dynamic
processes suffering uncertainties or faults are two different but dual
problems. This conclusion illustrates that the common view of considering
distinguishing/isolating cyber-attacks and process faults as a challenging
issue, because of their similar forms, is less convincing,

\item instead, the real challenges are the development of novel detection
and (feedback) control schemes that are capable to detect and control the
uncertainties in the plant image subspace. It is known that the traditional detection and control
methods are dedicated to handling  the uncertainties and faults in the plant.
Since those uncertainties, including unknown inputs and faults, are indeed
uncertainties in the controller image subspace, which is an 
complement of the system image subspace, the traditional methods are less (or even
not) capable to detect and control of cyber-attacks, and

\item concretely, methods should be developed to detect and control
of cyber-attacks in the plant image subspace. 
\end{itemize}
\noindent The above observations motivate us to find an alternative solution for the integrated design of detection  and resilient control  of both technical faults and cyber-attacks for CPSs in the subsequent sections.

\section{Alternative Representation of Process Signal}
The fact that  the output residual $r_{y}$ cannot detect and control cyber-attacks  implies
that additional information about the CPS dynamics is needed. 
Intuitively,
informations related to  the input residual is an
immediate candidate. 
This motivates us to define the  input residual $r_u$ on the plant side  as
\begin{equation}
r_u(z)\hspace{-2pt}:=\hspace{-2pt}(X(z)\hspace{-2pt}+\hspace{-2pt}Q(z)\hat{M}(z))u(z)\hspace{-2pt}+\hspace{-2pt}(Y(z)\hspace{-2pt}-\hspace{-2pt}Q(z)\hat{M}(z))y(z)\label{eq-ru-generate}
\end{equation}
Substituting (\ref{eq14-10}) into (\ref{eq-ru-generate}) leads to 
\begin{equation}
\setlength{\abovedisplayskip}{3pt}\setlength{\belowdisplayskip}{3pt}
r_u(z)\hspace{-2pt}:=\hspace{-2pt}\bar{v}+\hspace{-2pt}\left( \hspace{-1pt}X\hspace{-2pt}+\hspace{-2pt}Q\hat{N}\hspace{-1pt}\right) a_{u_{MC}}\hspace{-2pt}-\hspace{-2pt}\left( \hspace{-2pt}Y\hspace{-2pt}-\hspace{-2pt}Q\hat{M}\hspace{-2pt}\right)\hspace{-2pt}a_y.
\end{equation}
It is evident that $r_u$ is a natural information provider of the cyber-attacks $\left(
a_{y},a_{u_{CM}}\right)$. 
In this section,  the  role of $r_u$  in the closed-loop dynamics is further examined,
 which lays the foundation for the subsequent study. 


\subsection{The parameterization of closed-loop dynamics}
Without loss of generality, we consider the process (\ref{eq14-0a}) with the controller $u(z)=-X^{-1}(z)Y(z)y(z)+v(z)$. It follows from the previous discussions that the process can be equivalently described by
\begin{align}
\begin{cases}\hat{x}(k\hspace{-2pt}+\hspace{-2pt}1)\hspace{-2pt}=\hspace{-3pt}(A+BF)\hat{x}(k)\hspace{-2pt}+\hspace{-2pt}Br_u(k)\hspace{-2pt}+\hspace{-2pt}Lr_y(k)\\
y(k)=C\hat{x}(k)+Du(k)+r_{y}(k)\\
u(k)=F\hat{x}(k)+\bar{v}(k),\bar{v}(z)=X(z)v(z)\\
r_y(k)=y(k)-C\hat{x}(k)-Du(k)\label{eq-observera0}
\end{cases}
\end{align}
where
\begin{equation}
\setlength{\abovedisplayskip}{3pt}\setlength{\belowdisplayskip}{3pt}
r_u(z):=X(z)u(z)+Y(z)y(z)=u(z)-F\hat{x}(z).
\end{equation}
It is noteworthy that the selection of $F,L$ has no influence on the transfer function $G(z)$. That is, $G(z)$ is invariant to $F,L$.  Consequently, we have the following theorem.

\begin{theorem}\label{th1}
Given $F_i\neq F,L_i\neq L,i=1,\cdots,\kappa$ which ensures that $A_{F_i}=A+BF_i$ is Schure, (\ref{eq-observera0}) can be equivalently realized by 
\begin{align}\hspace{-2pt}
\begin{cases}\hspace{-2pt}
\hat{x}(k\hspace{-2pt}+\hspace{-2pt}1)\hspace{-2pt}=\hspace{-3pt}A_{F_i}\hat{x}(k)\hspace{-2pt}+\hspace{-2pt}Br_{u,i}(k)\hspace{-2pt}+\hspace{-2pt}L_ir_{y,i}(k)\notag\\
\hspace{-2pt}u(k)\hspace{-2pt}=\hspace{-2pt}F_i\hat{x}(k)\hspace{-2pt}+\hspace{-2pt}r_{Q_i}\hspace{-2pt}(k)\hspace{-2pt}+\hspace{-2pt}\bar{v}_i(k)\hspace{-1pt},\bar{v}_i(z)\hspace{-3pt}=\hspace{-3pt}(X_i(z)\hspace{-3pt}+\hspace{-2pt}Q_i(z)\hat{N}_i(z))v(z)\notag\\
\hspace{-2pt}r_{Q_i}(z)\hspace{-2pt}=\hspace{-2pt}M_i(z)Q_i(z)r_{y,i}(z)\notag\\
\hspace{-2pt}y(k)=C\hat{x}(k)+Du(k)+r_{y,i}(k)\notag
\end{cases}
\label{eq-observera}
\end{align}
where  $X_i,Y_i,M_i,N_i,\hat{X}_i,\hat{Y}_i,\hat{N}_i,\hat{M}_i$ are transfer functions given by (\ref{eq-TF-MN})
with $F=F_i,L=L_i$, and  
\begin{align*}
&r_{u,i}(z):=X_i(z)u(z)+Y_i(z)y(z)=u(z)-F_i\hat{x}(z)\\
&r_{y,i}(z)=-\hat{N}_i(z)u(z)\hspace{-2pt}+\hspace{-2pt}\hat{M}_i(z)y(z)\hspace{-2pt}=\hspace{-2pt}y(z)\hspace{-2pt}-\hspace{-2pt}C\hat{x}(z)\hspace{-2pt}-\hspace{-2pt}Du(z)\\
&\bar{V}_{i0}=\left[ 
\begin{array}{cc}
X & \text{ }Y%
\end{array}%
\right] \left[ 
\begin{array}{c}
-\hat{Y}_i\\ 
\hat{X}_i
\end{array}%
\right] ,\bar{R}_{i0}=\left[ 
\begin{array}{cc}
X_i & \text{ }Y_i%
\end{array}%
\right] \left[ 
\begin{array}{c}
-\hat{Y} \\ 
\hat{X}
\end{array}%
\right] \\
&V_{i0}=I+(F_i-F)(zI-A_{F_i})^{-1}B,Q_i(z)=\bar{R}_{i0}(z)R_{0i}(z)\\
&R_{0i}=I-C(zI-A_{L})^{-1}(L-L_i),r_{Q_i}(z)=Q_i(z)r_{y,i}(z).
\end{align*}
\end{theorem}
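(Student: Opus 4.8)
The plan is to lean on the fact, recorded just before the statement, that $G(z)$ is invariant under the choice of $(F,L)$: the plant block is the same in both realizations, so the only thing to prove is that the feedback law encoded in (\ref{eq-observera}) reproduces the controller $u=-X^{-1}Yy+v$ built into (\ref{eq-observera0}), i.e.\ the Youla parameterization (\ref{eq-Youla}) with $Q=0$ relative to the $(F,L)$-factorization. I would first check the internal self-consistency of (\ref{eq-observera}): substituting $r_{u,i}=u-F_i\hat{x}$ and $r_{y,i}=y-C\hat{x}-Du$ into its first line collapses the state recursion to $\hat{x}(k+1)=(A-L_iC)\hat{x}(k)+(B-L_iD)u(k)+L_iy(k)$, which shows that $\hat{x}$ is genuinely the gain-$L_i$ observer state and that $r_{y,i}=-\hat{N}_iu+\hat{M}_iy$, $r_{u,i}=X_iu+Y_iy$ are the LCF/RCF residuals attached to $(F_i,L_i)$. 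This step is routine bookkeeping straight from the state-space forms (\ref{eq-TF-MN}).

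Next I would establish the two transformation laws linking the subscripted factors to the unsubscripted ones. Changing the observer gain post-filters the output residual by a bistable factor, $r_y=R_{0i}r_{y,i}$ with $R_{0i}=\hat{M}\hat{M}_i^{-1}=I-C(zI-A_L)^{-1}(L-L_i)\in\mathcal{RH}_\infty$ and $R_{0i}^{-1}\in\mathcal{RH}_\infty$, whence $\hat{N}=R_{0i}\hat{N}_i$, $\hat{M}=R_{0i}\hat{M}_i$, equivalently $\hat{N}_i=R_{0i}^{-1}\hat{N}$ and $\hat{M}_i=R_{0i}^{-1}\hat{M}$; the dual statement for the right factors is carried by $V_{i0}=I+(F_i-F)(zI-A_{F_i})^{-1}B$ (with $\bar{V}_{i0},\bar{R}_{i0}$ the mixed Bezout products measuring the mismatch of the two factorizations). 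Each identity follows from a short state-space cascade computation using the change of coordinates $e=x_1-x_2$ on the realizations in (\ref{eq-TF-MN}).

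With these in hand I would reduce the control law. Eliminating $r_{u,i}=r_{Q_i}+\bar{v}_i$ turns the input equation of (\ref{eq-observera}) into the Youla-parameterized feedback $(X_i+Q_i\hat{N}_i)(u-v)+(Y_i-Q_i\hat{M}_i)y=0$ over the base $(F_i,L_i)$, so that $u=K_i(Q_i)y+v$ with $K_i(Q_i)=-(X_i+Q_i\hat{N}_i)^{-1}(Y_i-Q_i\hat{M}_i)$. Inserting $Q_i=\bar{R}_{i0}R_{0i}$ together with $\hat{N}_i=R_{0i}^{-1}\hat{N}$, $\hat{M}_i=R_{0i}^{-1}\hat{M}$ cancels the $R_{0i}$ factors, leaving $Q_i\hat{N}_i=\bar{R}_{i0}\hat{N}$ and $Q_i\hat{M}_i=\bar{R}_{i0}\hat{M}$ with $\bar{R}_{i0}=-X_i\hat{Y}+Y_i\hat{X}$. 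I would then invoke the two Bezout identities (\ref{eq-3a}) in both orders -- in particular $MX+\hat{Y}\hat{N}=I$, $\hat{X}\hat{N}=NX$, $\hat{Y}\hat{M}=MY$, $\hat{X}\hat{M}=I-NY$ -- to collapse $X_i+Q_i\hat{N}_i=(X_iM+Y_iN)X$ and $Y_i-Q_i\hat{M}_i=(X_iM+Y_iN)Y$. The common bistable factor $T=X_iM+Y_iN$ (note $T(\infty)=I$) cancels in $K_i(Q_i)=-(TX)^{-1}(TY)=-X^{-1}Y=K$, while the prefilter $\bar{v}_i=(X_i+Q_i\hat{N}_i)v$ makes the reference enter identically as $u=Ky+v$. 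Since the plant is gain-invariant, the closed-loop map $v\mapsto(u,y)$ of (\ref{eq-observera}) then agrees with that of (\ref{eq-observera0}).

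The main obstacle is precisely this Bezout collapse: tracking how all eight transfer matrices $X_i,Y_i,M_i,N_i,\hat{X}_i,\hat{Y}_i,\hat{N}_i,\hat{M}_i$ transform under the simultaneous change $(F,L)\to(F_i,L_i)$, recognizing that the $R_{0i}$ inside $Q_i$ is exactly what annihilates the $R_{0i}^{-1}$ coming from $\hat{N}_i,\hat{M}_i$, and then applying both Bezout identities to extract the single left factor $T$ without sign or ordering slips. The accompanying checks -- that $\hat{x}$ is the $L_i$-observer state and that the prefilter $\bar{v}_i$ (where the dual factor $V_{i0}$ enters) stays consistent with $\bar{v}=Xv$ -- are secondary bookkeeping that must be kept aligned throughout.
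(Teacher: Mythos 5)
Your proof is correct, but it runs in the opposite direction to the paper's and with a leaner toolkit, so a comparison is worthwhile. The paper derives (\ref{eq-observera}) \emph{forward} from (\ref{eq-observera0}): writing $u=F\hat{x}+\bar{v}=-\hat{Y}r_{y}+M\bar{v}$, it expands the base-$(F,L)$ factors in the $(F_i,L_i)$-factorization, identifies all four expansion coefficients ($Q_{1i}=\bar{R}_{i0}$, $Q_{2i}=R_{i0}$, $\bar{Q}_{1i}=V_{i0}$, $\bar{Q}_{2i}=\bar{V}_{i0}$) by multiplying out Bezout products, and then substitutes, using $R_{i0}r_{y}=r_{y,i}$, to land on $u=F_i\hat{x}+r_{Q_i}+\bar{v}_i$. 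You instead verify \emph{backward}: you reduce (\ref{eq-observera}) to the I/O law $(X_i+Q_i\hat{N}_i)(u-v)+(Y_i-Q_i\hat{M}_i)y=0$ and collapse it---using $Q_i\hat{N}_i=\bar{R}_{i0}\hat{N}$, $Q_i\hat{M}_i=\bar{R}_{i0}\hat{M}$ together with the reverse-order Bezout identities $MX+\hat{Y}\hat{N}=I$, $\hat{X}\hat{N}=NX$, $\hat{Y}\hat{M}=MY$, $\hat{X}\hat{M}=I-NY$ (valid because the two matrices in (\ref{eq-3a}) are square inverses of each other)---to $T\bigl(X(u-v)+Yy\bigr)=0$ with $T=X_iM+Y_iN$ invertible ($T(\infty)=I$), which is exactly the law $Xu+Yy=Xv$ encoded in (\ref{eq-observera0}); invariance of $G$ under the choice of $(F,L)$ then closes the argument. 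Your route needs only the observer-gain transformation law $\hat{N}=R_{0i}\hat{N}_i$, $\hat{M}=R_{0i}\hat{M}_i$; the dual feedback-side law $M_i=MV_{i0}$, $N_i=NV_{i0}$ and the coefficients $V_{i0},\bar{V}_{i0}$ never enter, so your argument is shorter and pins the role of the particular choice $Q_i=\bar{R}_{i0}R_{0i}$ to a single cancellation. What the paper's longer route buys in exchange is constructive content: the expansion shows how $Q_i$ and the decomposition $r_{u,i}=r_{Q_i}+\bar{v}_i$ are \emph{discovered} rather than merely checked, and its coefficient identities are reused later in Corollary \ref{th2}. Note finally that your reading $r_{Q_i}=Q_ir_{y,i}$ (the where-clause definition) is the internally consistent one: the extra factor $M_i$ in the theorem's displayed equations is a typo, since matching the $(F_i,L_i)$ closed-loop identity $u=-\hat{Y}_ir_{y,i}+M_ir_{u,i}$ against $u=F_i\hat{x}+r_{u,i}$ forces $r_{u,i}=Q_ir_{y,i}+(X_i+Q_i\hat{N}_i)v$, exactly as your derivation and the last line of the paper's own proof require.
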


\begin{proof}
It is straightforward that
\begin{align}
u(k)=F\hat{x}+\bar{v}=-\hat{Y}r_y+M\bar{v}.
\end{align}
By means of Bezout identity (\ref{eq-3a}), $\left[ 
\begin{array}{cc}
X & \text{ }Y%
\end{array}%
\right]$ and $\left[ 
\begin{array}{c}
-\hat{Y} \\ 
\hat{X}%
\end{array}%
\right]$ can be parameterized respectively by 
\begin{align*}
\left[ \hspace{-2pt}
\begin{array}{c}
-\hspace{-2pt}\hat{Y} \\ 
\hat{X}%
\end{array}%
\hspace{-2pt}\right]\hspace{-2pt}=\hspace{-2pt}\left[ \hspace{-2pt}
\begin{array}{cc}
M_i & -\hat{Y}_i\\ 
N_i & \hat{X}_i
\end{array}%
\hspace{-2pt}\right]\hspace{-2pt}\left[ \hspace{-2pt}
\begin{array}{c}
Q_{1i} \\ 
Q_{2i} 
\end{array}%
\hspace{-2pt}\right]\hspace{-2pt},\left[ \hspace{-2pt}
\begin{array}{cc}
X & Y%
\end{array}%
\hspace{-2pt}\right]\hspace{-2pt}=\hspace{-2pt}\left[ \hspace{-2pt}
\begin{array}{cc}
\bar{Q}_{1i} & \bar{Q}_{2i} 
\end{array}%
\hspace{-2pt}\right]\hspace{-2pt}\left[ 
\begin{array}{cc}
X_i & Y_i\\
-\hspace{-2pt}\hat{N}_i & \hat{M}_i
\end{array}%
\hspace{-2pt}\right]
\end{align*}
for some $Q_{1i},Q_{2i},\bar{Q}_{1i} ,\bar{Q}_{2i} \in \mathcal{RH}_\infty$. 
It follows from\cite{Ding2020} that
\begin{align*}
\left[ 
\begin{array}{cc}
-\hspace{-2pt}\hat{N}_i & \hat{M}_i
\end{array}%
\hspace{-2pt}\right]=R_{i0}\left[ 
\begin{array}{cc}
-\hspace{-2pt}\hat{N} & \hat{M}
\end{array}%
\hspace{-2pt}\right],\left[ \hspace{-2pt}
\begin{array}{c}
M_i \\ 
N_i 
\end{array}%
\hspace{-2pt}\right]=\left[ \hspace{-2pt}
\begin{array}{c}
M \\ 
N
\end{array}%
\hspace{-2pt}\right]V_{i0}
\end{align*}
and $R_{i0}^{-1}=R_{0i}, V_{i0}^{-1}=V_{0i}$. Consequently, we have
\begin{align*}
&\left[ \hspace{-2pt}
\begin{array}{cc}
-\hspace{-2pt}\hat{N} & \hat{M}
\end{array}%
\hspace{-2pt}\right]\hspace{-4pt}\left[ \hspace{-2pt}
\begin{array}{c}
-\hspace{-2pt}\hat{Y} \\ 
\hat{X}%
\end{array}%
\hspace{-2pt}\right]\hspace{-4pt}=\hspace{-2pt}R_{0i}\hspace{-2pt}\left[ \hspace{-2pt}
\begin{array}{cc}
-\hspace{-2pt}\hat{N}_i & \hat{M}_i
\end{array}%
\hspace{-2pt}\right]\hspace{-3pt}\left[ \hspace{-2pt}
\begin{array}{cc}
M_i & -\hat{Y}_i\\ 
N_i & \hat{X}_i
\end{array}%
\hspace{-2pt}\right]\hspace{-2pt}\left[ \hspace{-2pt}
\begin{array}{c}
Q_{1i} \\ 
Q_{2i} 
\end{array}%
\hspace{-2pt}\right]\hspace{-4pt}=\hspace{-2pt}R_{0i}Q_{2i} \hspace{-2pt}=\hspace{-2pt}I\\
&\left[ \hspace{-2pt}
\begin{array}{cc}
X & Y%
\end{array}%
\hspace{-2pt}\right]\hspace{-2pt}\left[ \hspace{-2pt}
\begin{array}{c}
M \\ 
N
\end{array}%
\hspace{-2pt}\right]\hspace{-2pt}=\hspace{-2pt}\left[ \hspace{-2pt}
\begin{array}{cc}
\bar{Q}_{1i} & \bar{Q}_{2i} 
\end{array}%
\hspace{-2pt}\right]\hspace{-2pt}\left[ 
\begin{array}{cc}
X_i & Y_i\\
-\hspace{-2pt}\hat{N}_i & \hat{M}_i
\end{array}%
\hspace{-2pt}\right]\hspace{-2pt}\left[ \hspace{-2pt}
\begin{array}{c}
M_i \\ 
N_i 
\end{array}%
\hspace{-2pt}\right]\hspace{-2pt}V_{0i}\hspace{-2pt}=\hspace{-2pt}\bar{Q}_{1i}V_{0i}\hspace{-2pt}=\hspace{-2pt}I
\end{align*}
Moreover, it is evident that
\begin{align*}
&V_{i0}=\left[ 
\begin{array}{cc}
X & Y
\hspace{-2pt}\end{array}%
\hspace{-2pt}\right]\hspace{-2pt}\left[ \hspace{-2pt}
\begin{array}{c}
-\hspace{-2pt}\hat{Y}_i \\ 
\hat{X}_i%
\end{array}%
\hspace{-2pt}\right]=\left[ \hspace{-2pt}
\begin{array}{cc}
\bar{Q}_{1i} & \bar{Q}_{2i} 
\end{array}%
\hspace{-2pt}\right]\hspace{-2pt}\left[ 
\begin{array}{cc}
X_i & Y_i\\
-\hspace{-2pt}\hat{N}_i & \hat{M}_i
\end{array}%
\hspace{-2pt}\right]\hspace{-2pt}\left[ \hspace{-2pt}
\begin{array}{c}
-\hspace{-2pt}\hat{Y}_i \\ 
\hat{X}_i%
\end{array}%
\hspace{-2pt}\right]=\bar{Q}_{2i} \\
&\bar{R}_{i0}=\left[ 
\begin{array}{cc}
X_i & Y_i
\hspace{-2pt}\end{array}%
\hspace{-2pt}\right]\hspace{-2pt}\left[ \hspace{-2pt}
\begin{array}{c}
-\hspace{-2pt}\hat{Y} \\ 
\hat{X}%
\end{array}%
\hspace{-2pt}\right]=\left[ 
\begin{array}{cc}
X_i & Y_i
\hspace{-2pt}\end{array}%
\hspace{-2pt}\right]\hspace{-2pt}\left[ \hspace{-2pt}
\begin{array}{cc}
M_i & -\hat{Y}_i\\ 
N_i & \hat{X}_i
\end{array}%
\hspace{-2pt}\right]\hspace{-2pt}\left[ \hspace{-2pt}
\begin{array}{c}
Q_{1i} \\ 
Q_{2i} 
\end{array}%
\hspace{-2pt}\right]=Q_{1i}.
\end{align*}
As a result, one has that
\begin{align*}
u&=-\hat{Y}_iR_{i0}r_y+M_i\bar{R}_{i0}r_y+M_iV_{i0}(V_{0i}X_i-\bar{V}_{i0}\hat{N}_i)v\\
&=-\hat{Y}_ir_{y,i}+M_i\bar{R}_{i0}R_{0i}r_{y,i}+M_i(X_i-V_{0i}\bar{V}_{i0}\hat{N}_i)v\\
&=F_i\hat{x}+r_{Q_i}+\bar{v}_i
\end{align*}
which completes the proof.
\end{proof}
\begin{remark}
Theorem \ref{th1} reveals that  varying $F,L$ to $F_i, L_i$ is equivalent to adding a stable post-filter and additional residual signal  to $u,v$ without changing the process dynamics.  \end{remark}


 The closed-loop dynamics for different setting of $L_i,F_i$ is analyzed in the following corollary.
 \begin{corollary}\label{th2}
The closed-loop dynamics for process (\ref{eq-observera0}) can be equivalently described by
\begin{align}
&\left[ \hspace{-2pt}
\begin{array}{c}
u \\ 
y%
\end{array}%
\hspace{-2pt}\right] \hspace{-2pt}=\hspace{-2pt}\left[ \hspace{-2pt}
\begin{array}{c}
M \\ 
N
\end{array}%
\hspace{-2pt}\right] r_u\hspace{-2pt}+\hspace{-4pt}\left[ \hspace{-2pt}
\begin{array}{cc}
 -\hspace{-2pt}\hat{Y} \\ 
 \hat{X}
\end{array}%
\hspace{-2pt}\right]r_y\hspace{-2pt} = \hspace{-2pt}\left[ \hspace{-2pt}
\begin{array}{c}
M \\ 
N%
\end{array}%
\hspace{-2pt}\right]\hspace{-2pt}Q_{u_i} r_{u,i}\hspace{-2pt}+\hspace{-4pt}\left[ \hspace{-2pt}
\begin{array}{c}
-\hspace{-2pt}\hat{Y}\hspace{-2pt}+\hspace{-2pt}MQ_{e_i} \\ 
\hat{X}\hspace{-2pt}+\hspace{-2pt}NQ_{e_i}%
\end{array}%
\hspace{-2pt}\right] \hspace{-2pt}r_{y,i}\notag\\
&Q_{u_i}=V_{i0},Q_{e_i}=\bar{V}_{i0}+\bar{R}_{i0}R_{i0}^{-1}.
\label{eq-ruya}
\end{align}
 \end{corollary}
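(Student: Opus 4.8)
The plan is to establish the two equalities in (\ref{eq-ruya}) in turn: first the image representation over the residual pair $(r_u,r_y)$ belonging to the nominal gains $(F,L)$, then its reparameterization over $(r_{u,i},r_{y,i})$ belonging to $(F_i,L_i)$.

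For the first equality I would work straight from the realization (\ref{eq-observera0}). Solving the state equation gives $\hat{x}(z)=(zI-A-BF)^{-1}(Br_u+Lr_y)$, and substituting into $u=F\hat{x}+r_u$ and $y=(C+DF)\hat{x}+Dr_u+r_y$ yields two transfer matrices acting on $r_u$ and on $r_y$. Comparing with the state-space data in (\ref{eq-TF-MN}) — where the four blocks $M,N,-\hat{Y},\hat{X}$ all share the dynamics $A+BF$ — identifies the coefficient of $r_u$ as $\left[\begin{smallmatrix} M\\ N\end{smallmatrix}\right]$ and that of $r_y$ as $\left[\begin{smallmatrix} -\hat{Y}\\ \hat{X}\end{smallmatrix}\right]$. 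A shorter, realization-free check, which I would actually present, is to apply the two rows of the Bezout matrix (\ref{eq-3a}) to the claimed identity: hitting it with $\left[\begin{smallmatrix} X & Y\end{smallmatrix}\right]$ returns $r_u$ on both sides and hitting it with $\left[\begin{smallmatrix} -\hat{N} & \hat{M}\end{smallmatrix}\right]$ returns $r_y$ on both sides, and invertibility of the Bezout matrix then forces the identity.

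For the second equality the lever is Theorem \ref{th1}: because the $(F_i,L_i)$-realization reproduces the very same signals $u,y$, the argument of the first equality applied verbatim to the $i$-th realization gives $\left[\begin{smallmatrix} u\\ y\end{smallmatrix}\right]=\left[\begin{smallmatrix} M_i\\ N_i\end{smallmatrix}\right]r_{u,i}+\left[\begin{smallmatrix} -\hat{Y}_i\\ \hat{X}_i\end{smallmatrix}\right]r_{y,i}$. I would then trade the $i$-indexed factors for the nominal ones using the transformation relations proved inside Theorem \ref{th1}: the relation $\left[\begin{smallmatrix} M_i\\ N_i\end{smallmatrix}\right]=\left[\begin{smallmatrix} M\\ N\end{smallmatrix}\right]V_{i0}$ converts the first term and delivers $Q_{u_i}=V_{i0}$ at once, while the parameterization $\left[\begin{smallmatrix} -\hat{Y}\\ \hat{X}\end{smallmatrix}\right]=\left[\begin{smallmatrix} M_i\\ N_i\end{smallmatrix}\right]\bar{R}_{i0}+\left[\begin{smallmatrix} -\hat{Y}_i\\ \hat{X}_i\end{smallmatrix}\right]R_{i0}$, inverted with $R_{i0}^{-1}=R_{0i}$, rewrites the second term. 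Splitting $\left[\begin{smallmatrix} -\hat{Y}_i\\ \hat{X}_i\end{smallmatrix}\right]r_{y,i}$ into the part proportional to $\left[\begin{smallmatrix} M\\ N\end{smallmatrix}\right]$ and the complementary part is what is meant to produce the additive correction $Q_{e_i}=\bar{V}_{i0}+\bar{R}_{i0}R_{i0}^{-1}$ sitting on top of $\left[\begin{smallmatrix} -\hat{Y}\\ \hat{X}\end{smallmatrix}\right]$.

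The hard part — and the step I would treat most carefully — is precisely this last reparameterization, namely tracking the factors $R_{i0}$, $R_{0i}$ that link $r_{y,i}$ to $r_y$ (recall $r_{y,i}=R_{i0}r_y$) alongside the input post-filter $V_{i0}$ that links $r_{u,i}$ to $r_u$, and checking that the output-residual coefficient collapses exactly to the stated $Q_{e_i}$. I would carry this out with factorization algebra rather than state space, leaning on the Bezout identities of the $i$-th factorization ($-\hat{N}_iM_i+\hat{M}_iN_i=0$, $\hat{N}_i\hat{Y}_i+\hat{M}_i\hat{X}_i=I$) and on the defining expressions $\bar{V}_{i0}=\left[\begin{smallmatrix} X & Y\end{smallmatrix}\right]\left[\begin{smallmatrix} -\hat{Y}_i\\ \hat{X}_i\end{smallmatrix}\right]$ and $\bar{R}_{i0}=\left[\begin{smallmatrix} X_i & Y_i\end{smallmatrix}\right]\left[\begin{smallmatrix} -\hat{Y}\\ \hat{X}\end{smallmatrix}\right]$, since keeping the poles implicit makes the cancellations transparent. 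Throughout, I would use the projections $\left[\begin{smallmatrix} -\hat{N} & \hat{M}\end{smallmatrix}\right]$ and $\left[\begin{smallmatrix} X & Y\end{smallmatrix}\right]$ as running consistency checks to isolate the output- and input-residual content of each intermediate expression; this is exactly where any mismatch between the $r_{y,i}$- and $r_y$-factors would surface, so it is the place that demands the closest attention to the direction of the $R$-transformations.
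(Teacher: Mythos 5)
Your handling of the first equality is correct and coincides with the paper's: the paper inverts $\left[\begin{smallmatrix} r_u \\ r_y \end{smallmatrix}\right]=\left[\begin{smallmatrix} X & Y \\ -\hat N & \hat M \end{smallmatrix}\right]\left[\begin{smallmatrix} u \\ y \end{smallmatrix}\right]$ via the Bezout identity, which is exactly your row-projection check run in reverse.

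The gap is in the second equality, and it sits precisely at the step you defer as ``the hard part'' --- deferring it is fatal here, because along the route you chose that step cannot be completed. Write $[M;N]$ for the stacked pair $\left[\begin{smallmatrix} M \\ N \end{smallmatrix}\right]$, and similarly for the other blocks. The paper does \emph{not} pass through the $i$-th representation and convert matrices back, as you propose; it stays with the nominal representation $[u;y]=[M;N]r_u+[-\hat Y;\hat X]r_y$, substitutes the two expansions from the proof of Theorem~\ref{th1}, namely $r_u=V_{i0}r_{u,i}+\bar V_{i0}r_{y,i}$ and $[-\hat Y;\hat X]=[M_i;N_i]\bar R_{i0}+[-\hat Y_i;\hat X_i]R_{i0}$, and then absorbs every observer-gain factor into the \emph{signal} via $R_{i0}r_y=r_{y,i}$. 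Your direction instead replaces $[-\hat Y_i;\hat X_i]$ by $\bigl([-\hat Y;\hat X]-[M_i;N_i]\bar R_{i0}\bigr)R_{0i}$, which leaves $R_{0i}$ as a right factor on the transfer matrix $[-\hat Y;\hat X]$, so the coefficient of $r_{y,i}$ you produce has the form $[-\hat Y;\hat X]R_{0i}+[M;N](\cdot)$. Such a term can never be rearranged into the target form $[-\hat Y;\hat X]+[M;N]Q_{e_i}$ when $L_i\neq L$: apply the annihilator $[\,-\hat N \;\; \hat M\,]$ to the difference; it kills $[M;N]$ and maps $[-\hat Y;\hat X]$ to $I$ by Bezout, so equality would force $R_{0i}=I$. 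The same computation rules out the ``splitting'' of $[-\hat Y_i;\hat X_i]$ into $[-\hat Y;\hat X]$ plus a range-of-$[M;N]$ part that your outline relies on (use $[\,-\hat N \;\; \hat M\,]=R_{0i}[\,-\hat N_i \;\; \hat M_i\,]$ to get $R_{0i}-I$ on one side and $0$ on the other). So the decomposition you promise to produce by ``careful factorization algebra'' does not exist along your route: the gain change must be transferred onto the residual signal, not onto the coefficient matrices. Be aware also that this step is delicate even on the paper's route --- pushing its substitutions through yields the $r_{y,i}$-coefficient $[-\hat Y_i;\hat X_i]+[M;N]\bigl(\bar V_{i0}+V_{i0}\bar R_{i0}R_{i0}^{-1}\bigr)$, a Youla-type expression based at the $i$-th pair, and reconciling this with the stated $Q_{e_i}=\bar V_{i0}+\bar R_{i0}R_{i0}^{-1}$ based at $[-\hat Y;\hat X]$ is exactly the bookkeeping that your proposal (and, admittedly, the paper's own terse proof) leaves unresolved.
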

 \begin{proof}
%
It follows from the poof of Theorem \ref{th1} that the system dynamics (\ref{eq-observera0}) can be equivalently written as 
\begin{align*}
&\left[ 
\begin{array}{c}
u \\ 
y%
\end{array}%
\right] \hspace{-2pt}=\left[ 
\begin{array}{cc}
X& \text{ }Y\\ 
-\hat{N} & \text{ }\hat{M}%
\end{array}%
\right]^{-1}\left[ 
\begin{array}{c}
r_u \\ 
r_y
\end{array}%
\right] = \hspace{-2pt}\left[ 
\begin{array}{c}
M \\ 
N 
\end{array}%
\right]r_u+\left[ 
\begin{array}{cc}
 -\hat{Y} \\ 
 \hat{X}
\end{array}%
\right]r_y\\
&=\left[ 
\begin{array}{c}
M \\ 
N
\end{array}%
\right]\hspace{-2pt}\left(V_{i0}r_{u,i}\hspace{-2pt}+\hspace{-2pt}\bar{V}_{i0}r_{y,i}\right)\hspace{-2pt}+\hspace{-2pt}\left(\left[ 
\begin{array}{c}
-\hspace{-2pt}\hat{Y}_{i} \\ 
\hat{X}_{i}%
\end{array}%
\right]R_{i0}\hspace{-2pt}+\hspace{-2pt}\left[ 
\begin{array}{c}
M_{1} \\ 
N_{1}%
\end{array}%
\right]\bar{R}_{i0} \right)\hspace{-2pt}r_y.
\end{align*}
On the basis of \cite{Ding2014}, we have
\begin{align*}
R_{i0}\left[ 
\begin{array}{cc}
-\hat{N}& \text{ }\hat{M}
\end{array}%
\right]=\left[ 
\begin{array}{cc}
-\hat{N}_i& \text{ }\hat{M}_i
\end{array}%
\right]\Longrightarrow R_{i0}r_y=r_{y,i}
\end{align*}
which completes the proof.\end{proof}

\begin{remark}
(\ref{eq-ruya}) can be considered as a full parameterization of CPS closed-loop dynamic with $Q_{u_i},Q_{e_i}\in \mathcal{RH}_\infty$ as the parameter systems.
\end{remark}

\subsection{The process signal subspace and residual subspaces}
It follows from the proof of Corollary \ref{th2} that
 \begin{equation}
 \setlength{\abovedisplayskip}{5pt}\setlength{\belowdisplayskip}{5pt}
\left[ 
\begin{array}{c}
r_u\\ 
r_{y}%
\end{array}%
\right]=\left[ 
\begin{array}{cc}
X& \text{ }Y\\ 
-\hat{N} & \text{ }\hat{M}%
\end{array}%
\right]\left[ 
\begin{array}{c}
u \\ 
y%
\end{array}%
\right]. \label{eq-aa1}\end{equation}
It is evident that the
process data $\left( u,y\right) $ and the I/O residual signals $\left(
r_{u},r_{y}\right) $ is one-to-one mapping. 
Notice that $r_u,r_y$ 
build the system and controller  image subspaces $\mathcal{I}_G$ and $\mathcal{I}_K$, respectively, which are complementary. 
 Consequently, we have the following lemma.
 \begin{lemma}
 The system signal space 
 \begin{equation}
 \setlength{\abovedisplayskip}{5pt}\setlength{\belowdisplayskip}{5pt}
 \mathcal{S}=\left\{\left[ 
\begin{array}{c}
u\\y
\end{array}%
\right]\in \mathcal{H}_2^{k_u+k_y}\right\}
 \end{equation}
with $u,y$ as the input and output of the process (\ref{eq2-b}),
  can be constructed by 
 \begin{equation}
 \setlength{\abovedisplayskip}{5pt}\setlength{\belowdisplayskip}{5pt}
\mathcal{S}=\mathcal{I}_G \oplus \mathcal{I}_K.
\end{equation}
\end{lemma}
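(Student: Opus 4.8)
The plan is to verify the two conditions that together define a direct-sum decomposition: (i) $\mathcal{S}=\mathcal{I}_G+\mathcal{I}_K$, so that every admissible signal splits into an $\mathcal{I}_G$-component and an $\mathcal{I}_K$-component; and (ii) $\mathcal{I}_G\cap\mathcal{I}_K=\{0\}$, so that the splitting is unique. Both parts are driven by the one-to-one map (\ref{eq-aa1}) together with the observation that, by the Bezout identity (\ref{eq-3a}), the transfer matrix $\left[\begin{array}{cc} X & Y\\ -\hat{N} & \hat{M}\end{array}\right]$ is inverted by $\left[\begin{array}{cc} M & -\hat{Y}\\ N & \hat{X}\end{array}\right]$.

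For the sum in (i), I would take an arbitrary $\left[\begin{array}{c} u\\ y\end{array}\right]\in\mathcal{S}$ and define $r_u,r_y$ through (\ref{eq-aa1}). Since $X,Y,\hat{N},\hat{M}\in\mathcal{RH}_\infty$ map $\mathcal{H}_2$ into itself, this gives $r_u\in\mathcal{H}_2^{k_u}$ and $r_y\in\mathcal{H}_2^{k_y}$, so the pieces below are genuine elements of the signal space. Inverting (\ref{eq-aa1}) with (\ref{eq-3a}) then produces
\begin{equation*}
\left[\begin{array}{c} u\\ y\end{array}\right]=\left[\begin{array}{c} M\\ N\end{array}\right]r_u+\left[\begin{array}{c} -\hat{Y}\\ \hat{X}\end{array}\right]r_y ,
\end{equation*}
where the first summand lies in $\mathcal{I}_G$ (generating signal $v=r_u$) and the second lies in $\mathcal{I}_K$ (generating residual $r_y$, with $Q=0$ matching the controller $u=-X^{-1}Yy+v$ fixed in this subsection). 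The opposite inclusion $\mathcal{I}_G+\mathcal{I}_K\subseteq\mathcal{S}$ is automatic, since both are subspaces of $\mathcal{H}_2^{k_u+k_y}=\mathcal{S}$.

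For the intersection in (ii), I would suppose $\left[\begin{array}{c} u\\ y\end{array}\right]$ lies in both subspaces, so that $\left[\begin{array}{c} M\\ N\end{array}\right]v=\left[\begin{array}{c} -\hat{Y}\\ \hat{X}\end{array}\right]r_y$ for suitable $v,r_y$, and then left-multiply by $\left[\begin{array}{cc} -\hat{N} & \hat{M}\end{array}\right]$. Reading the two relevant entries of (\ref{eq-3a}), namely $-\hat{N}M+\hat{M}N=0$ and $\hat{N}\hat{Y}+\hat{M}\hat{X}=I$, the left-hand side collapses to $0$ while the right-hand side reduces to $r_y$; hence $r_y=0$ and the common element vanishes. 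This yields $\mathcal{I}_G\cap\mathcal{I}_K=\{0\}$, and combined with (i) gives $\mathcal{S}=\mathcal{I}_G\oplus\mathcal{I}_K$.

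I do not anticipate a substantive obstacle, since the argument is really a repackaging of the invertibility of the Bezout matrix and of the complementarity of $r_u$ and $r_y$ already noted before the lemma. The only points requiring care are bookkeeping ones: checking that the split stays inside $\mathcal{H}_2$ (covered by stability of all factors) and confirming that the parameter $Q$ instantiating $\mathcal{I}_K$ is the one attached to the controller driving (\ref{eq-aa1}). For the general-$Q$ controller the same two entries of (\ref{eq-3a}), augmented by $(-\hat{N}M+\hat{M}N)Q=0$, make the identical computation go through verbatim.
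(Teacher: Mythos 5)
Your proof is correct and takes essentially the same route as the paper: the paper obtains the lemma directly from the one-to-one mapping (\ref{eq-aa1}) and the invertibility of the Bezout matrix in (\ref{eq-3a}), leaving the sum decomposition and trivial-intersection checks as "evident," which you simply spell out explicitly (inverting (\ref{eq-aa1}) for existence, and left-multiplying by $\left[\begin{array}{cc}-\hat{N} & \hat{M}\end{array}\right]$ for uniqueness). Your remark on the general-$Q$ case, using $(-\hat{N}M+\hat{M}N)Q=0$, correctly reconciles the argument with the paper's definition of $\mathcal{I}_K$.
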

\begin{remark}
To summarize, the system I/O signal  consists of the process image and controller image subspaces. 
The image subspace of the process is driven by $r_u$, which describes the response of I/O signal to the reference and cyber-attacks, and the image subspace of the controller is composed of the I/O data as the responses to the uncertainties/faults in the plant induced by $r_y$.
\end{remark}
Recall that $r_u$ delivers the information for cyber-attacks which can be applied for both attack detection and resilient control, while $r_y$ provides the information for process faults for fault detection and fault-tolerant control. 
The one-to-one mapping between the process data $(u,y)$ and the residual signals $(r_u,r_y)$, and the role of  $r_u,r_y$ in the parameterization of the closed-loop dynamics naturally triggers our inquisitiveness for alternative solutions and modifications on system configuration in the subsequent section.

\section{A Modified CPS Configuration and the Integrated Design Scheme}\label{sec4}

In this section,  a modified CPS configuration is developed. It is followed by the integrated design involving fault detection,  attack detection,  resilient and fault-tolerant control.


\subsection{A modified CPS configuration}
Before proceeding further, the main requirements for the  CPS configuration are summarized first:
i) achieving ``fail-safe" cyber-security,
ii) detecting the process faults and cyber-attacks,
iii) guaranteeing high fault-tolerance against process faults and high resilience of CPSs under cyber-attacks,
iv) ensuring the data privacy, and
v) saving online computation and communication effort.

To fulfill the above requirements,  we consider the CPS system, which consists of two main parts, the plant
with the embedded sensors and actuators as well as a computing unit,  and the monitoring and control (MC) station, at which  MC algorithms are
implemented. The subsequent CPS configuration (as shown in Fig. \ref{fig1}) is proposed:

\begin{itemize}
\item on the plant side, the received signal transmitted by the MC-station
is $u_{MC}^{a},$%
\begin{equation}
u_{MC}^{a}=u_{MC}+a_{u_{MC}},\label{eq15-38a}
\end{equation}%
the performed (online) computation is%
\begin{equation}
 \setlength{\abovedisplayskip}{5pt}\setlength{\belowdisplayskip}{5pt}
\left\{ \hspace{-2pt}
\begin{array}{l}
\hat{x}(k\hspace{-2pt}+\hspace{-2pt}1)\hspace{-2pt}=\hspace{-2pt}\left(\hspace{-2pt} A\hspace{-2pt}-\hspace{-2pt}LC\hspace{-1pt}\right)\hspace{-2pt} \hat{x}(k)\hspace{-2pt}+\hspace{-2pt}\left(\hspace{-2pt} B\hspace{-2pt}-\hspace{-2pt}LD\hspace{-1pt}\right) \hspace{-2pt}u(k)\hspace{-2pt}+\hspace{-2pt}Ly(k) \\ 
r_{u}(k)=u(k)-F\hat{x}(k)-v_0(k) ,v_0(z)=Q_v(z)\bar{v}_0(z)\\ 
r_{y}(k)\hspace{-2pt}=\hspace{-2pt} y(k)\hspace{-2pt}-\hspace{-2pt}\hat{y}(k)\hspace{-2pt}=\hspace{-2pt} y(k)\hspace{-2pt}-\hspace{-2pt}C\hat{x}%
(k)\hspace{-2pt}-\hspace{-2pt}Du(k) \\ 
u(z)=F\hat{x}(z)+v_0(z)+u_{MC}^{a}(z)\\
 r_{y,u}(z)=Q_{r,1}(z)r_y(z)+Q_{r,2}(z)r_u(z)
\end{array}%
\right.  \label{eq15-38}
\end{equation}%
where $Q_{r,1}, Q_{r,2}\hspace{-4pt}\in\hspace{-4pt} \mathcal{RH}_\infty$,  $Q_v$ is a pre-filter, $\bar{v}_0$ is the baseline for the reference, and
 $ r_{y,u} $ is sent to the MC-station,

\item at the MC-station, the received signals are 
\begin{equation*}
 \setlength{\abovedisplayskip}{5pt}\setlength{\belowdisplayskip}{5pt}
r_{y,u}^a=r_{y,u}+a_{r_{y,u}},
\end{equation*}%
and the following  online computations are performed with $u_{MC}$ being sent to the plant

\begin{itemize}
\item for the control purpose%
\begin{align}
&u_{MC}(z)\hspace{-2pt}=\hspace{-2pt}Q_{u_{MC}}(z)\hspace{-2pt}\left(\hspace{-1pt} r_{y,u}^{a}(z)\hspace{-2pt}-\hspace{-2pt}Q_{r,2}(z)v(z)\right)\hspace{-2pt}+\hspace{-2pt}v(z)\notag\\
&v(z)=Q_v(z)(\bar{v}(z)-\bar{v}_0(z), Q_v\in \mathcal{RH}_\infty\label{eq15-38c}
\end{align}
where $Q_{u_{MC}}\in \mathcal{RH}_\infty$ is the design parameter, 
$\bar{v}$ is the target reference,

\item for attack detection 
 \begin{equation}
 \setlength{\abovedisplayskip}{2pt}\setlength{\belowdisplayskip}{2pt}
\bar{r}_{\eta_a}(z)=\bar{R}(z)(r_{y,u}^a(z)-Q_{r,2}(z)v(z))\label{eq-ad-post-filter}
\end{equation}
where $\bar{R}(z)$ is the post-filter to be designed.
\end{itemize}
\end{itemize}

\begin{remark}
 The crucial arguments for the above CPS configuration are
\begin{itemize}
\item ``fail-safe" cyber-security: With the embedded controller $F\hat{x}(k)$ in (\ref%
{eq15-38}) at the plant side, the system operation is assured for the case that the
communications fail,
\item with the embedded $v_0$  as the baseline of the reference, the transmission of the full information of the target reference $v$ is avoided which increases the data security,

\item with $r_{y,u} $ sufficient informations for detection and control are
available at the MC-station, and

\item the transmission of  $r_{y,u}$
instead of the process data $y,$ increases the cyber-security with respect
to the data privacy and meanwhile saves the communication effort.
\end{itemize}
\end{remark}

It is noteworthy that a model aware adversary can design attacks by injecting false data which causes damage to the control performance based on the identified process model. To handle this issue, the moving target defense strategies have been widely applied by introducing the time-varying parameters in the control systems \cite{MT-method-IEEE-TAC2020}.
This motivates us to adopt (\ref{eq-observera}) as a time-varying model of the process with $A_{F_{i_k}}, L_{i_k},F_{i_k}$ belong to a finite set of modes 
 \begin{equation}
 \setlength{\abovedisplayskip}{5pt}\setlength{\belowdisplayskip}{5pt}
\Gamma=\{ (A_{F_1}, L_1, F_1),\cdots,(A_{F_{\kappa}}, L_{\kappa}, F_{\kappa})\}.
\end{equation}
Here, $i_k$ represents the index of the process model at the $k$-th sample. 
 In this sense, $r_{y,u,i}$ with $F_i,L_i$ is generated from the plant side, and then $r_{y,u}$ is computed  at the MC-station based on the received data $r_{y,u,i}^a=r_{y,u,i}+a_{r_{y,u}}$ and Theorem \ref{th1}.
Even though $\Gamma$ may be available to the attacker, the switching rule will be unknown.
As a result, the information available to the defender and the attacker  at time $k$ are  
\begin{equation}
\begin{aligned}
 \setlength{\abovedisplayskip}{5pt}\setlength{\belowdisplayskip}{5pt}
&\Gamma_k^d=\{ A,B,C,D,F,L,A_{F_{i_k}}, L_{i_k},F_{i_k},u_{MC}^a,y,r_{u,y,i_k})\}\notag
\\
&\Gamma_k^a=\{ A,B,C,D,F,L,u_{MC},r_{u,y,i_k}^a)\}\notag
\end{aligned}
\end{equation}
respectively. 
A significant advantage of this moving target scheme over the existing methods lies in adopting the time-varying parameters $L_i,F_i$ without changing the process dynamics.
To ease the presentation of the design scheme of the CPS system, this issue will not be discussed in details.

\begin{figure}[!t]
\centering\includegraphics[width=6.5cm]{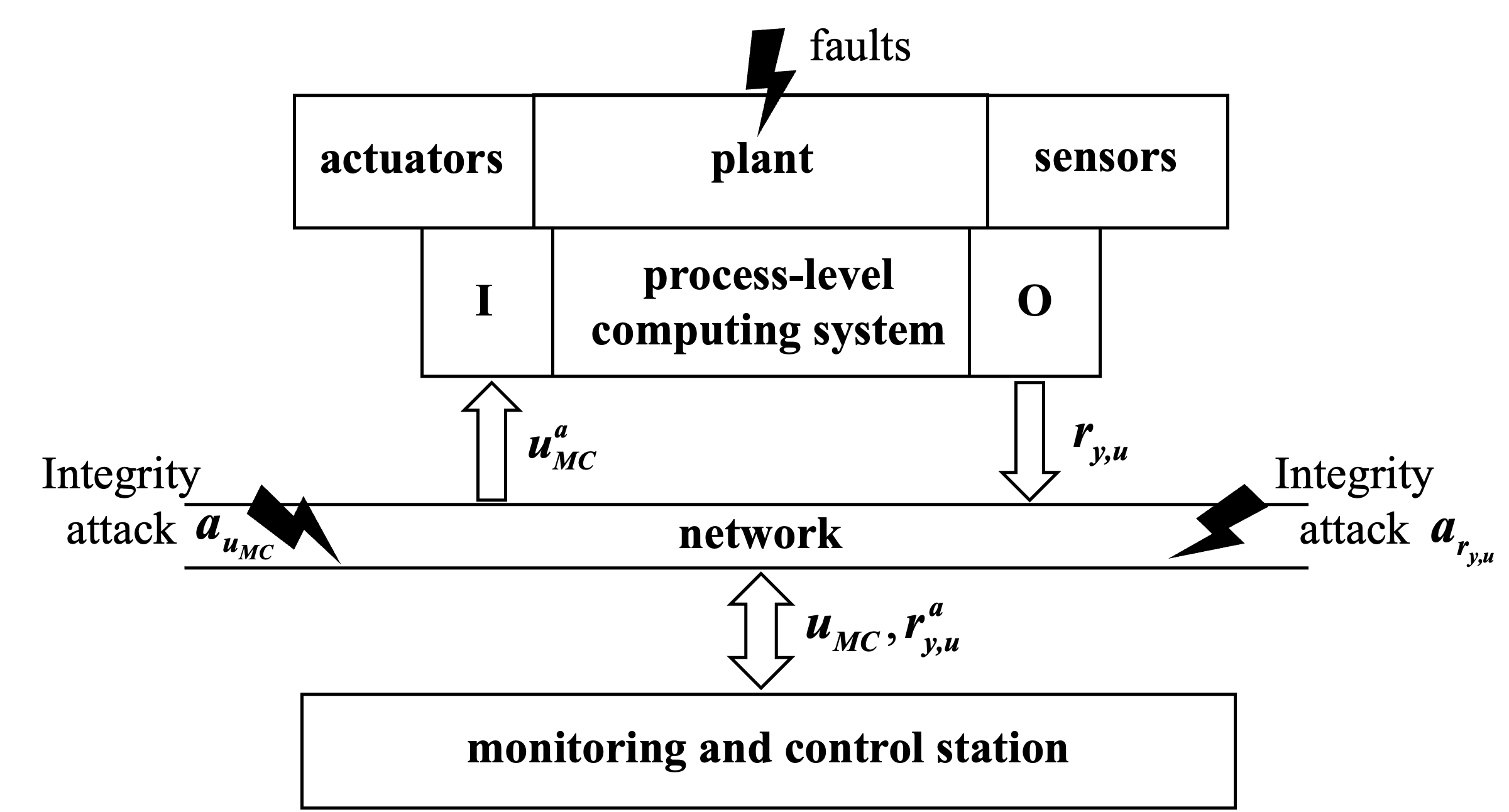}
\caption{A modified CPS configuration}
\label{fig1}
\end{figure}

\begin{remark}
 Recall that transmitting  $r_{u},r_y$ to the
MC-station makes it easier to cope with detection and control of cyber-attacks and faults. However, it implies increasing data transmissions. Furthermore,
transmitting raw data $r_{u}$ increases the risk of eavesdropping attacks
that enable an easy generation of stealthy cyber-attacks. On account of
these concerns, the residual $r_{y,u}$ as a fusion of $r_{y}$ and $%
r_{u},$ as given in (\ref{eq15-38}), is generated and transmitted over the
network. In order to reduce the stress of data transmission, the dimension
of $r_{y,u}$, and thus the demanded channel capacity, is limited by %
$
\dim \left( r_{y,u}\right) =k_y.
$
\end{remark}

\begin{remark}
It is noted that one of the major differences of this work with the existing literatures on the attack detection and resilient control of CPSs  \cite{LWZCY-AUTO-2023,ZHOU2023110723} lies in considering the reference signal $v$. This is motivated by the engineering practice since $v$ is an essential signal that commends the CPS to perform a defined operation. In particular, such a reference signal is event and task depending and will often be online computed as a part of a decision made on the MC-station.
\end{remark}

\subsection{On fault detector design}
On account of the fact that the residual signal $r_y$ is perfectly decoupled from cyber-attacks, an observer-based residual generation system running on the plant side is designed for detecting faults.
In what follows, the fault detection issue is addressed. To this end,
 the following Kalman filter is applied
 \begin{align}
&\hat{x}(k+1)=A\hat{x}(k)+Bu(k)+L(y(k)-\hat{y}(k))\notag\\
&r_{y}(k)=(y(k)-\hat{y}(k)),\hat{y}(k)=C\hat{x}(k)+Du(k)\label{eq-observer-ka}
\end{align}
with
\begin{equation}
\begin{aligned}
 \setlength{\abovedisplayskip}{5pt}\setlength{\belowdisplayskip}{5pt}
&L=APC^T\Sigma_{r_y}^{-1}, \Sigma_{r_y}=CPC^T+\Sigma_{\nu}\\
&P=APA^T+\Sigma_w-APC^T\Sigma_{r_y}^{-1}CPA^T.
\end{aligned}\notag
\end{equation}
As a result, we have $r_y\sim \mathcal{N}(0,\Sigma_{r_y})$. Thus, 
 the following test statistic and threshold with given specification $0<\alpha<1$ can be applied for fault detection purpose
\begin{equation}
 \setlength{\abovedisplayskip}{5pt}\setlength{\belowdisplayskip}{5pt}
J_{rel}=r_y^T \Sigma^{-1}_{r_y} r_y\sim \mathcal{X}^2(k_y),J_{rel,th} =\mathcal{X}_\alpha^2(k_y) \label{eq15-55a}
\end{equation}
where $\mathcal{X}^2(k_y)$ denotes the $\mathcal{X}^2$ distribution with $k_y$ degrees of freedom, and $\mathcal{X}_\alpha^2(k_y)$ is determined based on the chi-squared distribution table.


\begin{remark}
Recall that the estimates $\hat{y}$ contain full information of 
 the nominal system dynamic, and the generated residuals are of the property  $\mathbf{E} (\hat{y}^Tr_y)=0$. In other words, $r_y$ are perpendicular to the estimates and thus do not comprise process information. That means, transmission of residual signals prevents the data privacy efficiently.
\end{remark}

\subsection{On attack detector design}
Distinguishing
from typical technical faults caused by ageing or damage of system components,
cyber-attacks are artificially generated signals injected into the system
over a certain time interval. It is similar to the so-called intermittent
faults \cite{Zhou_review_2020}. In this regard, the detection task comprises not only detecting
injection of cyber-attacks as timely and reliable as possible, but also
detecting switching-off of cyber-attacks. In the sequel, these two tasks are addressed separately. The latter is of a special
importance for the reason that the cyber-attack resilient control law is to
be switched to the nominal controller, once the injected cyber-attacks
disappear. 
%
%

For our purpose, cyber-attack detection is performed at the MC-station under the condition that no fault is detected by the FD system on the plant side. For our purpose, the impact of cyber-attacks on the residual dynamics (\ref{eq-ad-post-filter}) is examined first.

\begin{theorem}\label{th4a}
Given CPS modelled by (\ref{eq14-0a}) with the CPS configuration given in (\ref{eq15-38a})-(\ref{eq15-38c}). With the post-filter as
\begin{equation}
 \setlength{\abovedisplayskip}{5pt}\setlength{\belowdisplayskip}{5pt}
 \bar{R}=R(I-Q_{r,2}Q_{u_{MC}}),R\in \mathcal{RH}_\infty\label{eq-post-filter}
\end{equation}
 the residual dynamics  (\ref{eq-ad-post-filter})  is equivalent to 
\begin{equation}
 \setlength{\abovedisplayskip}{5pt}\setlength{\belowdisplayskip}{5pt}
\bar{r}_{\eta _{a}}=R(Q_{r,1}r_{y}+\eta _{a}),\eta _{a} =Q_{r,2}a_{u_{MC}}+a_{r_{y,u}}.\label{eq16-10a}
\end{equation}
\end{theorem}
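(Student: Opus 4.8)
The plan is to substitute the configuration equations (\ref{eq15-38a})--(\ref{eq15-38c}) directly into the post-filter definition (\ref{eq-ad-post-filter}) and then to exploit the recursive structure they induce. First I would observe that the two lines of (\ref{eq15-38}) defining $u(z)=F\hat{x}(z)+v_0(z)+u_{MC}^a(z)$ and $r_u(z)=u(z)-F\hat{x}(z)-v_0(z)$ telescope to the clean identity $r_u=u_{MC}^a=u_{MC}+a_{u_{MC}}$; the embedded feedback term $F\hat{x}$ and the baseline $v_0$ cancel exactly, which is precisely the purpose of generating $r_u$ in this form.

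Next I would introduce the shorthand $s:=r_{y,u}^a-Q_{r,2}v$, the signal actually fed into the post-filter. Using $r_{y,u}=Q_{r,1}r_y+Q_{r,2}r_u$ and $r_{y,u}^a=r_{y,u}+a_{r_{y,u}}$ together with the control law $u_{MC}=Q_{u_{MC}}(r_{y,u}^a-Q_{r,2}v)+v=Q_{u_{MC}}s+v$, I would substitute $r_u=Q_{u_{MC}}s+v+a_{u_{MC}}$ into the expansion of $s$. The term $Q_{r,2}v$ contributed by $Q_{r,2}r_u$ cancels against the $-Q_{r,2}v$ in the definition of $s$, leaving the implicit relation $s=Q_{r,1}r_y+Q_{r,2}Q_{u_{MC}}s+\eta_a$ with $\eta_a=Q_{r,2}a_{u_{MC}}+a_{r_{y,u}}$.

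Collecting the $s$ terms gives $(I-Q_{r,2}Q_{u_{MC}})s=Q_{r,1}r_y+\eta_a$. At this point the choice of post-filter in (\ref{eq-post-filter}) does the remaining work: since $\bar{R}=R(I-Q_{r,2}Q_{u_{MC}})$, applying $\bar{R}$ to $s$ yields $\bar{r}_{\eta_a}=\bar{R}s=R(I-Q_{r,2}Q_{u_{MC}})s=R(Q_{r,1}r_y+\eta_a)$, which is exactly (\ref{eq16-10a}).

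The step I would treat most carefully is the implicit algebraic-loop dependence: $s$ appears on both sides because $r_u$ feeds the transmitted residual $r_{y,u}$ while the control law reconstructs $u_{MC}$ from the same transmitted residual. Recognizing that this loop is resolved not by explicitly inverting $(I-Q_{r,2}Q_{u_{MC}})$ but by absorbing that very factor into the designed post-filter $\bar{R}$ is what makes the statement collapse to (\ref{eq16-10a}). I would also note in passing that well-posedness requires $(I-Q_{r,2}Q_{u_{MC}})$ to be boundedly invertible in $\mathcal{RH}_\infty$, which is implicit in the design of $Q_{r,2}$ and $Q_{u_{MC}}$.
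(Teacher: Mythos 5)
Your proof is correct, and it reaches the paper's identity by a slightly different organization of the same substitution argument. The paper first solves the algebraic loop explicitly for $r_u$: it derives $r_u=(I-Q_{u_{MC}}Q_{r,2})^{-1}\left(Q_{u_{MC}}Q_{r,1}r_y+\vartheta_a\right)+v$ with $\vartheta_a=Q_{u_{MC}}a_{r_{y,u}}+a_{u_{MC}}$, then invokes the push-through identity $Q_{r,2}\left(I-Q_{u_{MC}}Q_{r,2}\right)^{-1}=\left(I-Q_{r,2}Q_{u_{MC}}\right)^{-1}Q_{r,2}$ to assemble the explicit closed form $r_{y,u}^a=\left(I-Q_{r,2}Q_{u_{MC}}\right)^{-1}\left(Q_{r,1}r_y+\eta_a\right)+Q_{r,2}v$, and only at the end applies $\bar R$ so that the inverse cancels. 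You instead keep the loop implicit: writing $s:=r_{y,u}^a-Q_{r,2}v$ and substituting the control law and the residual fusion gives $\left(I-Q_{r,2}Q_{u_{MC}}\right)s=Q_{r,1}r_y+\eta_a$ directly, after which $\bar R s=R\left(Q_{r,1}r_y+\eta_a\right)$ is immediate. Your route is shorter, needs no inversion and no push-through lemma, and isolates the real design insight (the loop factor is absorbed into $\bar R$, not inverted); your remark that invertibility of $I-Q_{r,2}Q_{u_{MC}}$ is only a well-posedness requirement is also accurate. What the paper's longer route buys is the explicit expression for $r_{y,u}^a$ (its equation for the attacked transmitted residual), which is reused verbatim in the subsequent corollary on resilient and fault-tolerant control; with your version that corollary would need the inversion step to be carried out separately. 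One small point in your favor: the paper's proof writes $r_u=u-F\hat x=u_{MC}^a$, silently dropping the baseline term $v_0$, whereas your telescoping step accounts for it explicitly.
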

\begin{proof}
Recall that 
\begin{equation}
 \setlength{\abovedisplayskip}{5pt}\setlength{\belowdisplayskip}{5pt}
r_{u} =Xu+Yy=u-F\hat{x}=u_{MC}^{a}. \notag
\end{equation}%
As a result, one has that
\begin{align}
&r_u
\hspace{-2pt}=\hspace{-2pt}Q_{u_{MC}}\hspace{-2pt}\left( Q_{r,1}r_{y}\hspace{-2pt}+\hspace{-2pt}Q_{r,2}r_u\hspace{-2pt}+\hspace{-2pt}a_{r_{y,u}}\hspace{-3pt}-\hspace{-2pt}Q_{r,2}v\right) \hspace{-2pt}+\hspace{-2pt}v\hspace{-2pt}+\hspace{-2pt}a_{u_{MC}}\notag
\\
& =\hspace{-2pt}(I\hspace{-2pt}-\hspace{-2pt}Q_{u_{MC}}Q_{r,2})^{-1}\left(Q_{u_{MC}}Q_{r,1}r_{y}+\vartheta _{a}\right)\hspace{-2pt}+\hspace{-2pt}v\label{eq-ruama}
\end{align}
where $\vartheta _{a}=Q_{u_{MC}}a_{r_{y,u}}+a_{u_{MC}}$.
With the aid of the relation
\begin{equation}
 \setlength{\abovedisplayskip}{5pt}\setlength{\belowdisplayskip}{5pt}
Q_{r,2}(I\hspace{-2pt}-\hspace{-2pt}Q_{u_{MC}}Q_{r,2})^{-1}=(I\hspace{-2pt}-\hspace{-2pt}Q_{r,2}Q_{u_{MC}})^{-1}Q_{r,2}
\end{equation}
one has that
\begin{equation}
 \setlength{\abovedisplayskip}{5pt}\setlength{\belowdisplayskip}{5pt}
Q_{r,2}r_{u}
=(I\hspace{-2pt}-\hspace{-2pt}Q_{r,2}Q_{u_{MC}})^{-1}Q_{r,2}\left(Q_{u_{MC}}Q_{r,1}r_{y}\hspace{-2pt}+\hspace{-2pt}\vartheta _{a}\right)\hspace{-2pt}+\hspace{-2pt}Q_{r,2}v.\hspace{-2pt}\notag
\end{equation}
Consequently, we have
\begin{equation}
 \setlength{\abovedisplayskip}{5pt}\setlength{\belowdisplayskip}{5pt}
r_{y,u}^{a}=\left( I\hspace{-2pt}-\hspace{-2pt}Q_{r,2}Q_{u_{MC}}\right) ^{-1}\hspace{-2pt}\left(
Q_{r,1}r_{y}\hspace{-2pt}+\hspace{-2pt}\eta _{a} \right) \hspace{-1pt}+\hspace{-1pt}Q_{r,2}v,  \label{eq16-10}
\end{equation}%
which leads to
\begin{equation}
 \setlength{\abovedisplayskip}{5pt}\setlength{\belowdisplayskip}{5pt}
\bar{r}_{\eta _{a}}=\bar{R}(z)(r_{y,u}^a-Q_{r,2}v)=R(Q_{r,1}r_{y}+\eta _{a})
\end{equation}
and completes the proof.
\end{proof}
It is evident that, $\bar{r}_{\eta _{a}}$ solely contains redundant information about the
plant uncertainties comprised in $r_{y},$ when there exists no cyber-attack.
Consequently, the detection scheme can be designed depending on the
specifications of the plant uncertainties.
Assumed that $Q_{r,1}$ has the following state-space representation 
\begin{equation}
 \setlength{\abovedisplayskip}{2pt}\setlength{\belowdisplayskip}{0pt}
Q_{r,1}=\left( A_{r,1},B_{r,1},C_{r,1},D_{r,1}\right) ,  \label{eq16-13}
\end{equation}%
with $D_{r,1}$ as a full row rank. 
Recalling that any observer (and so Kalman
filter) can be equivalently realized by a post filter, we have the following theorem.
\begin{theorem}\label{th6}
Consider CPS modelled by (\ref{eq14-0a}) with the CPS configuration given in (\ref{eq15-38a})-(\ref{eq15-38c}),  the residual dynamics  (\ref{eq-ad-post-filter}) with the post-filter as (\ref{eq-post-filter}).
Setting
\begin{equation}
 \setlength{\abovedisplayskip}{5pt}\setlength{\belowdisplayskip}{5pt}
R=\left( A_{r,1}-LC_{r,1},-L,\Sigma _{r_{,1}}^{-1/2}C_{r,1},\Sigma
_{r_{,1}}^{-1/2}\right)  \label{eq16-14}
\end{equation}%
with  $L$ as the Kalman filter gain satisfying 
\begin{equation}
 \setlength{\abovedisplayskip}{5pt}\setlength{\belowdisplayskip}{5pt}
\begin{aligned}
&L =\left( A_{r,1}PC_{r,1}^{T}+B_{r,1}\Sigma _{r_{y}}D_{r,1}^{T}\right)
\Sigma _{r_{,1}}^{-1}, \\
&\Sigma _{r_{,1}} =C_{r,1}PC_{r,1}^{T}+D_{r,1}\Sigma _{r_{y}}D_{r,1}^{T},
\end{aligned}%
\end{equation}%
and $P$ as the solution to the following Riccati equation, 
\begin{equation*}
 \setlength{\abovedisplayskip}{5pt}\setlength{\belowdisplayskip}{5pt}
P-A_{r,1}PA_{r,1}^{T}-B_{r,1}\Sigma _{r_{y}}B_{r,1}^{T}+L\Sigma
_{r_{,1}}L^{T}=0
\end{equation*}%
leads to
\begin{align*}
&RQ_{r,1} \hspace{-2pt}=\hspace{-2pt}\left( A_{r,1}\hspace{-2pt}-\hspace{-2pt}LC_{r,1},B_{r,1}\hspace{-2pt}-\hspace{-2pt}LD_{r,1},\Sigma
_{r_{,1}}^{-1}C_{r,1},\Sigma _{r_{,1}}^{-1}D_{r,1}\right) , \\
&\bar{r}_{\eta _{a}} :=RQ_{r,1}r_{y}\sim \mathcal{N}\left( 0,I\right).
\end{align*}%
\end{theorem}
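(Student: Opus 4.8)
The plan is to split the statement into two linked claims: first, that the series connection $RQ_{r,1}$ collapses to the stated $n$-dimensional realization, and second, that in the attack-free case this cascade is exactly the normalized innovations (whitening) filter of $r_{y}$, so that its output is standard Gaussian white noise. By Theorem~\ref{th4a}, setting $\eta_a=0$ yields $\bar{r}_{\eta_a}=R(Q_{r,1}r_y)=RQ_{r,1}r_y$, so it suffices to analyze the cascade $RQ_{r,1}$ and the statistics of its output.

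For the realization part, I would write $Q_{r,1}=(A_{r,1},B_{r,1},C_{r,1},D_{r,1})$ with state $\xi$ and internal signal $s=C_{r,1}\xi+D_{r,1}r_y$, and $R=(A_{r,1}-LC_{r,1},-L,\Sigma_{r_{,1}}^{-1/2}C_{r,1},\Sigma_{r_{,1}}^{-1/2})$ with state $\zeta$ driven by $s$. Forming the standard order-$2n$ series realization and then applying the state transformation $e=\xi+\zeta$, the two state recursions combine into $e(k+1)=(A_{r,1}-LC_{r,1})e(k)+(B_{r,1}-LD_{r,1})r_y(k)$, while the output becomes $\bar{r}_{\eta_a}(k)=\Sigma_{r_{,1}}^{-1/2}C_{r,1}e(k)+\Sigma_{r_{,1}}^{-1/2}D_{r,1}r_y(k)$. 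This is precisely the claimed reduced form $RQ_{r,1}=(A_{r,1}-LC_{r,1},B_{r,1}-LD_{r,1},\Sigma_{r_{,1}}^{-1/2}C_{r,1},\Sigma_{r_{,1}}^{-1/2}D_{r,1})$; the order drops from $2n$ to $n$ because the complementary mode $\xi$ is rendered unobservable by this transformation.

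For the statistical part, the key observation is that the realization $(A_{r,1},B_{r,1},C_{r,1},D_{r,1})$ driven by the white sequence $r_y\sim\mathcal{N}(0,\Sigma_{r_y})$, which is white by the innovations property of the plant Kalman filter~(\ref{eq-observer-ka}), is a linear stochastic system whose process noise $B_{r,1}r_y$ and measurement noise $D_{r,1}r_y$ are mutually correlated with cross-covariance $B_{r,1}\Sigma_{r_y}D_{r,1}^{T}$. I would then identify $\Sigma_{r_{,1}}=C_{r,1}PC_{r,1}^{T}+D_{r,1}\Sigma_{r_y}D_{r,1}^{T}$ as the steady-state innovation covariance and $L=(A_{r,1}PC_{r,1}^{T}+B_{r,1}\Sigma_{r_y}D_{r,1}^{T})\Sigma_{r_{,1}}^{-1}$ as the steady-state predictor Kalman gain for this correlated-noise problem, with $P$ the predictor error covariance. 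Substituting $A_{r,1}PC_{r,1}^{T}+B_{r,1}\Sigma_{r_y}D_{r,1}^{T}=L\Sigma_{r_{,1}}$ into the predictor Riccati recursion reproduces exactly the stated equation $P-A_{r,1}PA_{r,1}^{T}-B_{r,1}\Sigma_{r_y}B_{r,1}^{T}+L\Sigma_{r_{,1}}L^{T}=0$, confirming the consistency of the gains. By the innovations theorem, $e(k)=s(k)-C_{r,1}\hat{\xi}(k)$ is then white with covariance $\Sigma_{r_{,1}}$; since the reduced output is $\bar{r}_{\eta_a}=\Sigma_{r_{,1}}^{-1/2}e$, it follows that $\bar{r}_{\eta_a}\sim\mathcal{N}(0,\Sigma_{r_{,1}}^{-1/2}\Sigma_{r_{,1}}\Sigma_{r_{,1}}^{-1/2})=\mathcal{N}(0,I)$.

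The main obstacle is the whitening step: rigorously justifying that the cascade $RQ_{r,1}$ coincides with the normalized steady-state innovations filter and that its output is a white Gaussian sequence. This rests on steady-state Kalman innovations theory in the correlated process/measurement noise case, together with verifying that the gain $L$, the covariance $\Sigma_{r_{,1}}$, and the supplied Riccati equation are exactly the standard predictor-form quantities; once this correspondence is in place, both the order reduction and the distribution $\mathcal{N}(0,I)$ follow by routine linear algebra. I note that for the output covariance to equal $I$ the output injection in $RQ_{r,1}$ must carry the factor $\Sigma_{r_{,1}}^{-1/2}$, consistent with the definition of $R$ in~(\ref{eq16-14}).
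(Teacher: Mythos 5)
Your proof is correct and supplies exactly the argument the paper leaves implicit: the paper states Theorem~\ref{th6} without proof, appealing only to the preceding remark that any Kalman filter can be equivalently realized as a post-filter, and your two steps — the series realization of $RQ_{r,1}$ collapsed to order $n$ via the transformation $e=\xi+\zeta$ (which is precisely the predictor error state, since the $R$-state satisfies $\zeta=-\hat{\xi}$), plus the identification of $L$, $\Sigma_{r_{,1}}$, $P$ as the steady-state Kalman predictor quantities for the correlated-noise system $Q_{r,1}$ driven by the white plant innovation $r_y$ — are the intended reasoning. One point worth recording: your cascade computation yields output matrices $\Sigma_{r_{,1}}^{-1/2}C_{r,1}$ and $\Sigma_{r_{,1}}^{-1/2}D_{r,1}$ for $RQ_{r,1}$, whereas the theorem as printed writes $\Sigma_{r_{,1}}^{-1}C_{r,1}$ and $\Sigma_{r_{,1}}^{-1}D_{r,1}$; the printed exponent is an inconsistency in the paper, since only the square-root normalization agrees with the definition of $R$ in (\ref{eq16-14}) and makes the conclusion $\bar{r}_{\eta_a}\sim\mathcal{N}(0,I)$ true, so your derivation proves the corrected (and evidently intended) form of the statement.
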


From Theorem \ref{th6} that  the residual  $\bar{r}_{\eta _{a}}$  contains the
necessary information for detecting the cyber-attacks $\left(
a_{u_{MC}},a_{r_{y,u}}\right) .$ Consequently, the $\chi ^{2}$ test statistic 
\begin{equation*}
 \setlength{\abovedisplayskip}{3pt}\setlength{\belowdisplayskip}{3pt}
\chi ^{2}\left( k_y\right) =\bar{r}_{\eta _{a}}^{T}\bar{r}_{\eta _{a}}\overset{%
\mathcal{H}_{0}}{\underset{\mathcal{H}_{1}}{\lessgtr }}J_{th,\chi _{\alpha
}^{2}}
\end{equation*}%
is adopted for the detection end. Thus, Algorithm \ref{al3} can be applied for attack detection.
\begin{algorithm}[!t]
	\caption{An attack detection algorithm}%
	\begin{%
		algorithmic}[1]
	\vspace{0.2cm}
	\State Set $R$ as (\ref{eq16-14}) and $\bar{R}=R(I-Q_{r,2}Q_{u_{MC}})$;
	\State Run the residual generator (\ref{eq-ad-post-filter});
	\State Calculate the test statistic and threshold 
	\begin{equation}
	 \setlength{\abovedisplayskip}{5pt}\setlength{\belowdisplayskip}{5pt}
J=\bar{r}_{\eta _{a}}^T  \bar{r}_{\eta _{a}}\sim \mathcal{X}^2(k_y),J_{th} =\mathcal{X}_\alpha^2(k_y) \label{eq15-55a}
\end{equation}
	\State  Fault detection by means of the detection logic%
\begin{equation}
 \setlength{\abovedisplayskip}{5pt}\setlength{\belowdisplayskip}{0pt}
\text{Detection logic} \text{:}\left\{ 
\begin{array}{l}
J>J_{th}\Longrightarrow \text{cyber-attack} \\ 
J\leq J_{th}\Longrightarrow \text{ attack-free.}%
\end{array}%
\right.  \notag 
\end{equation}
	\end{algorithmic}\label{al3}
\end{algorithm}


\begin{remark}
Both $Q_{r,1},Q_{r,2}$ are, together with $Q_{u_{MC}},$ design parameters
for the controller design. In other words, the conditions determined by
solving the above-defined detection problem are to be considered, when
necessary, as the controller design is addressed. Moreover, although $%
Q_{r,2} $ is not explicitly included in the model (\ref{eq16-10a}), it is a
part of the residual generator  (\ref{eq15-38})  for $r_{y,u}$, on which the
residual $r_{\eta _{a}}$ is built.
\end{remark}

\subsection{On monitoring of cyber-attacks and detecting switching off of
cyber-attacks }
We are now in a position to investigate a scheme to monitor and detect the switching off of cyber-attacks. 
Given the residual model (\ref{eq-post-filter})-(\ref{eq16-10a}), the main objective is to develop a post-filter $R,$ a test statistic $J$ as well as a
threshold so that switching off of cyber-attacks is optimally detected. 
In our study,  it is assumed that residual data $\bar{r}_{\eta _{a}}\left( k_{0}\right) ,\cdots ,\bar{r}_{\eta _{a}}\left(
k_{0}+s\right) $ are available for the detection purpose, where $s$ is an
integer and serves as a hyperparameter.

On account of the resilient control strategy, ``weak attacks" can
be well tolerated. In this context, it is said%
\begin{equation}
 \setlength{\abovedisplayskip}{5pt}\setlength{\belowdisplayskip}{5pt}
\left\{ 
\begin{array}{l}
\left\Vert \eta _{a}\right\Vert _{RMS,\left[ k,k+\tau \right] }\leq L_{\eta
_{a},l}\Longrightarrow \text{attack-free} \\ 
\left\Vert \eta _{a}\right\Vert _{RMS,\left[ k,k+\tau \right] }\geq L_{\eta
_{a},u}\Longrightarrow \text{cyber-attack.}%
\end{array}%
\right.  \label{eq16-17}
\end{equation}%
where $\tau >>s$, and
 $L_{\eta _{a},l},L_{\eta _{a},u}$ are the lower- and upper-bounds, $%
L_{\eta _{a},l}<L_{\eta _{a},u}.$

Now, we begin with the formulation of the detection problem to be addressed.
Consider the model%
\begin{equation*}
 \setlength{\abovedisplayskip}{5pt}\setlength{\belowdisplayskip}{5pt}
\bar{r}_{\eta _{a}}=RQ_{r,1}r_{y}+\bar{\eta}_{a}\sim \mathcal{N}\left( \bar{%
\eta}_{a},I\right) ,\bar{\eta}_{a}=R\eta _{a}\in \mathbb{C}^{m},\eta _{a}\in 
\mathbb{C}^{m},
\end{equation*}%
and write it into 
\begin{align}
\bar{r}_{\eta _{a},s}\left( k_{0}\right) & =\varepsilon _{\eta _{a},s}\left(
k_{0}\right) +\bar{\eta}_{a,s}\left( k_{0}\right) ,\label{eq16-19}\\
\bar{r}_{\eta
_{a},s}\left( k_{0}\right) &=\left[ 
\begin{array}{c}
\bar{r}_{\eta _{a}}\left( k_{0}\right) \\ 
\vdots \\ 
\bar{r}_{\eta _{a}}\left( k_{0}+s\right)%
\end{array}%
\right] ,\bar{\eta}_{a,s}\left(
k_{0}\right) =\left[ 
\begin{array}{c}
\bar{\eta}_{a}\left( k_{0}\right) \\ 
\vdots \\ 
\bar{\eta}_{a}\left( k_{0}+s\right)%
\end{array}%
\right]  \notag \\
\varepsilon _{\eta _{a},s}\left( k_{0}\right) & =\left[ 
\begin{array}{c}
\varepsilon _{\eta _{a}}\left( k_{0}\right) \\ 
\vdots \\ 
\varepsilon _{\eta _{a}}\left( k_{0}+s\right)%
\end{array}%
\right] \sim \mathcal{N}\left( 0,I\right) .  \notag
\end{align}%
By means of the state space realisation (\ref{eq16-14}) of $R$ with $%
x_{R}\left( k\right) $ denoting its state vector, $\bar{\eta}_{a,s}\left(
k_{0}\right) $ is written into%
\begin{align*}
&\bar{\eta}_{a,s}\hspace{-2pt}\left( k_{0}\right) \hspace{-2pt} =\hspace{-2pt}H_{o,s}A_{R,L}^{\gamma }x_{R}\hspace{-2pt}\left(
k_{0}\hspace{-2pt}-\hspace{-2pt}\gamma \right) \hspace{-2pt}+\hspace{-2pt}H_{\bar{\eta}_{a,s,\gamma }}\eta _{a,s+\gamma }\hspace{-2pt}\left(
k_{0}\hspace{-2pt}-\hspace{-2pt}\gamma \hspace{-2pt}+\hspace{-2pt}1\right)\\
&A_{R,L} \hspace{-2pt}=\hspace{-2pt}A_{r,1}\hspace{-2pt}-\hspace{-2pt}LC_{r,1},B_{R}\hspace{-2pt}=\hspace{-2pt}-\hspace{-2pt}L,C_{R}\hspace{-2pt}=\hspace{-2pt}\Sigma
_{r_{,1}}^{-1}C_{r,1},D_{R}\hspace{-2pt}=\hspace{-2pt}\Sigma _{r_{,1}}^{-1}, \\
&H_{o,s} \hspace{-2pt}=\hspace{-2pt}\left[ 
\begin{array}{c}
C_{R} \\ 
C_{R}A_{R,L} \\ 
\vdots \\ 
C_{R}A_{R,L}^{s}%
\end{array}%
\right]\hspace{-2pt},\eta _{a,s+\gamma }\left( k_{0}\hspace{-2pt}-\hspace{-2pt}\gamma \hspace{-2pt}+\hspace{-2pt}1\right) \hspace{-2pt}=\hspace{-4pt}\left[ \hspace{-2pt}
\begin{array}{c}
\eta _{a}\hspace{-2pt}\left( k_{0}\hspace{-2pt}-\hspace{-2pt}\gamma\hspace{-2pt} +\hspace{-2pt}1\right) \\ 
\vdots \\ 
\eta _{a}\hspace{-2pt}\left( k_{0}\hspace{-2pt}+\hspace{-2pt}s\right)%
\end{array}%
\hspace{-2pt}\right]\hspace{-2pt}, \\
&H_{\bar{\eta}_{a,s,\gamma }}\hspace{-4pt}=\hspace{-4pt}\left[ \hspace{-2pt}
\begin{array}{cc}
H_{\bar{\eta}_{a,s,\gamma },1} & H_{\bar{\eta}_{a,s,\gamma },2}%
\end{array}%
\hspace{-2pt}\right]\hspace{-2pt},H_{\bar{\eta}_{a,s,\gamma },1}\hspace{-3pt}=\hspace{-3pt}H_{o,s}\hspace{-3pt}\left[ \hspace{-4pt}
\begin{array}{ccc}
A_{R,L}^{\gamma\hspace{-1pt} -\hspace{-1pt}1} &\hspace{-1pt} \cdots \hspace{-2pt}& \hspace{-1pt}A_{R,L}%
\end{array}%
\hspace{-2pt}\right] \hspace{-3pt}, \\
&H_{\bar{\eta}_{a,s,\gamma },2} \hspace{-2pt}=\hspace{-4pt}\left[ \hspace{-2pt}
\begin{array}{cccc}
D_{R} & 0 & \cdots & 0 \\ 
C_{R}B_{R} & D_{R} & \ddots & \vdots \\ 
\vdots & \ddots & \ddots & 0 \\ 
C_{R}A_{R,L}^{s-1}B_{R} & \cdots & C_{R}B_{R} & D_{R}%
\end{array}%
\hspace{-2pt}\right] \hspace{-4pt}\in \hspace{-2pt}\mathbb{R}^{k_y\left( s+1\right) \times k_y\left( s+1\right) }.
\end{align*}%
On the assumption that the integer $\gamma $ is sufficiently large so that $%
A_{R,L}^{\gamma }\approx 0,$ $\bar{\eta}_{a,s}\left( k_{0}\right) $ is well
approximated by 
\begin{equation*}
\bar{\eta}_{a,s}\left( k_{0}\right) =H_{\bar{\eta}_{a,s,\gamma }}\eta
_{a,s+\gamma }\left( k_{0}-\gamma +1\right) .
\end{equation*}%
Note that $rank\left( H_{\bar{\eta}_{a,s,\gamma }}\right) \leq k_y\left(
s+1\right) .$ In the sequel, without loss of generality, it is assumed that 
\begin{equation*}
 \setlength{\abovedisplayskip}{5pt}\setlength{\belowdisplayskip}{5pt}
rank\left( H_{\bar{\eta}_{a,s,\gamma }}\right) =k_y\left( s+1\right) 
\end{equation*}%
and  
$\eta _{a,s+\gamma }\left( k_{0}-\gamma +1\right) $ doesn't belong to the
null space of $H_{\bar{\eta}_{a,s,\gamma }},$ i.e. 
\begin{equation*}
 \setlength{\abovedisplayskip}{5pt}\setlength{\belowdisplayskip}{5pt}
H_{\bar{\eta}_{a,s,\gamma }}\eta _{a,s+\gamma }\left( k_{0}-\gamma +1\right)
\neq 0.
\end{equation*}%
Moreover, the detection logic (\ref{eq16-17}) is re-formulated as 
\begin{equation}
 \setlength{\abovedisplayskip}{5pt}\setlength{\belowdisplayskip}{5pt}
\left\{ \hspace{-4pt}
\begin{array}{l}
\left\Vert \bar{\eta}_{a,s}\hspace{-2pt}\left( k_{0}\right) \right\Vert ^{2}\hspace{-2pt}\leq \hspace{-2pt} \tau
\bar{\sigma}_{\min}L_{0}^{2}\Longrightarrow \text{attack-free} \\ 
\left\Vert \bar{\eta}_{a,s}\hspace{-2pt}\left( k_{0}\right) \right\Vert ^{2}\hspace{-2pt}>\hspace{-2pt}\tau
\bar{\sigma}_{\max} L_{0}^{2}\hspace{-2pt}\Longrightarrow \text{cyber-attack}%
\end{array}%
\right.  \label{eq16-17a}
\end{equation}%
for $\tau =s+\gamma $ and some constant $L_{0}>0.$ Here, 
\begin{equation*}
 \setlength{\abovedisplayskip}{5pt}\setlength{\belowdisplayskip}{5pt}
\bar{\sigma}_{\max}\hspace{-2pt}=\hspace{-2pt}\lambda _{\max
}\hspace{-2pt}\left(\hspace{-2pt} H_{\bar{\eta}_{a,s,\gamma }}H_{\bar{\eta}_{a,s,\gamma }}^{T}\hspace{-2pt}\right),\bar{\sigma}_{\min}\hspace{-2pt}=\hspace{-2pt}\lambda _{\min }\hspace{-2pt}\left( \hspace{-2pt}H_{\bar{\eta}_{a,s,\gamma }}H_{\bar{\eta}%
_{a,s,\gamma }}^{T}\hspace{-2pt}\right)
\end{equation*}
 are the maximum and minimum eigenvalues of $H_{%
\bar{\eta}_{a,s,\gamma }}H_{\bar{\eta}_{a,s,\gamma }}^{T}$, respectively.

The detection logic (\ref{eq16-17a}) is interpreted as follows. Suppose that 
$\eta _{a,s+\gamma }\left( k_{0}-\gamma +1\right) $ doesn't belong to the
null space of $H_{\bar{\eta}_{a,s,\gamma }},$ i.e. 
$
H_{\bar{\eta}_{a,s,\gamma }}\eta _{a,s+\gamma }\left( k_{0}-\gamma +1\right)
\neq 0.
$
Then, 
\begin{equation*}
 \setlength{\abovedisplayskip}{5pt}\setlength{\belowdisplayskip}{5pt}
\begin{aligned}
&\left\Vert \bar{\eta}_{a,s}\hspace{-3pt}\left( k_{0}\right) \hspace{-1pt}\right\Vert ^{2} \hspace{-3pt}=\hspace{-2pt}\eta
_{a,s+\gamma }^{T}\left( k_{0}\hspace{-3pt}-\hspace{-2pt}\gamma \hspace{-2pt}+\hspace{-2pt}1\right) \hspace{-2pt}H_{\bar{\eta}_{a,s,\gamma
}}^{T}\hspace{-2pt}H_{\bar{\eta}_{a,s,\gamma }}\hspace{-2pt}\eta _{a,s+\gamma }\hspace{-2pt}\left( k_{0}\hspace{-2pt}-\hspace{-2pt}\gamma
\hspace{-2pt}+\hspace{-2pt}1\right) \\
&\left\Vert \bar{\eta}_{a,s}\left( k_{0}\right) \right\Vert ^{2} \geq
\bar{\sigma}_{\min}\eta _{a,s+\gamma }^{T}\left( k_{0}\hspace{-2pt}-\hspace{-2pt}\gamma \hspace{-2pt}+\hspace{-2pt}1\right) \eta
_{a,s+\gamma }\left( k_{0}\hspace{-2pt}-\hspace{-2pt}\gamma \hspace{-2pt}+\hspace{-2pt}1\right) , \\
&\left\Vert \bar{\eta}_{a,s}\left( k_{0}\right) \right\Vert ^{2} \leq
\bar{\sigma}_{\max} \eta _{a,s+\gamma }^{T}\left( k_{0}\hspace{-2pt}-\hspace{-2pt}\gamma \hspace{-2pt}+\hspace{-2pt}1\right) \eta
_{a,s+\gamma }\left( k_{0}\hspace{-2pt}-\hspace{-2pt}\gamma \hspace{-2pt}+\hspace{-2pt}1\right) .
\end{aligned}%
\end{equation*}
Consequently, together with the detection logic (\ref{eq16-17a}),  we have
\begin{equation*}
 \setlength{\abovedisplayskip}{5pt}\setlength{\belowdisplayskip}{5pt}
\begin{aligned}
\left\Vert \eta _{a,s+\gamma }\hspace{-2pt}\left( k_{0}\hspace{-2pt}-\hspace{-2pt}\gamma \hspace{-2pt}+\hspace{-2pt}1\right) \right\Vert
^{2}\hspace{-4pt}\leq \hspace{-2pt}\tau \hspace{-1pt}L_{0}^{2}\Longleftrightarrow \hspace{-4pt}\sqrt{\frac{1}{\tau }\hspace{-2pt}%
\sum\limits_{i=k_{0}\hspace{-1pt}-\hspace{-1pt}\gamma +1}^{k_{0}+s}\hspace{-2pt}\eta _{a}^{T}\hspace{-2pt}\left( i\right) \hspace{-2pt}\eta
_{a}\hspace{-2pt}\left( i\right) }\hspace{-2pt}\leq \hspace{-2pt}L_{0}, \\
\left\Vert \eta _{a,s+\gamma }\hspace{-2pt}\left( k_{0}\hspace{-2pt}-\hspace{-2pt}\gamma \hspace{-2pt}+\hspace{-2pt}1\right) \right\Vert
^{2}\hspace{-2pt}\geq\hspace{-2pt} \tau L_{0}^{2}\hspace{-2pt}\Longleftrightarrow\hspace{-4pt} \sqrt{\frac{1}{\tau }\hspace{-2pt}%
\sum\limits_{i=k_{0}-\gamma +1}^{k_{0}+s}\hspace{-6pt}\eta _{a}^{T}\left( i\right)\hspace{-2pt} \eta
_{a}\left( i\right) }\hspace{-2pt}\geq\hspace{-2pt} L_{0}.
\end{aligned}%
\end{equation*}
Now,  given the model (\ref{eq16-19})
and the log-likelihood ratio (LLR)
\begin{equation*}
 \setlength{\abovedisplayskip}{5pt}\setlength{\belowdisplayskip}{5pt}
\begin{aligned}
& \ln \frac{p_{\left\Vert \bar{\eta}_{a,s}\left( k_{0}\right) \right\Vert
^{2}\leq L_{l}}\left( \bar{r}_{\eta _{a},s}\left( k_{0}\right) \right) }{%
p_{\left\Vert \bar{\eta}_{a,s}\left( k_{0}\right) \right\Vert ^{2}\geq
L_{u}}\left( \bar{r}_{\eta _{a},s}\left( k_{0}\right) \right) }, \\
&L_{l}^{2}:=\tau \bar{\sigma}_{\min} L_{0}^{2},L_{u}^{2}:=\tau \bar{\sigma}_{\max}
L_{0}^{2},
\end{aligned}%
\end{equation*}
the detection problem is formulated as finding the LLR
\begin{equation}
 \setlength{\abovedisplayskip}{5pt}\setlength{\belowdisplayskip}{5pt}
J=\ln \frac{\sup\limits_{\bar{\eta}_{a,s}\left( k_{0}\right) }p_{\left\Vert 
\bar{\eta}_{a,s}\left( k_{0}\right) \right\Vert ^{2}\leq L_{l}^{2}}\left( 
\bar{r}_{\eta _{a},s}\left( k_{0}\right) \right) }{\sup\limits_{\bar{\eta}%
_{a,s}\left( k_{0}\right) }p_{\left\Vert \bar{\eta}_{a,s}\left( k_{0}\right)
\right\Vert ^{2}\geq L_{u}^{2}}\left( \bar{r}_{\eta _{a},s}\left(
k_{0}\right) \right) }  \label{eq16-16}
\end{equation}%
as the test statistic and the corresponding threshold so that 
$
J\overset{\mathcal{H}_{0}}{\underset{\mathcal{H}_{1}}{\lessgtr }}J_{th},
$ 
where the null hypothesis $\mathcal{H}_{0}$ is for attacks, i.e. $\left\Vert
\eta _{a}\right\Vert _{RMS,\left[ k_{0}-\gamma +1,k_{0}-\gamma +\tau \right]
}\geq L_{0},$ and the alternative hypothesis is for attack-free, $\left\Vert
\eta _{a}\right\Vert _{RMS,\left[ k+1,k+\tau \right] }<L_{0}.$ Below, the
solution is outlined.

\begin{theorem}
Considering the model (\ref{eq16-19}), the solution to (\ref{eq16-16}) is given by 
\begin{equation}
 \setlength{\abovedisplayskip}{2pt}\setlength{\belowdisplayskip}{0pt}
J\hspace{-2pt}=\hspace{-2pt}\left\{ \hspace{-2pt}
\begin{array}{l}
\frac{1}{2}\hspace{-2pt}\left( \left\Vert \bar{r}_{\eta _{a},s}\left( k_{0}\right)
\right\Vert \hspace{-2pt}-\hspace{-2pt}L_{u}\right) ^{2},\left\Vert \bar{\eta}_{a,s}\left(
k_{0}\right) \right\Vert ^{2}\leq L_{l}^{2} \\ 
\frac{1}{2}\hspace{-2pt}\left( \hspace{-2pt}
\begin{array}{c}
\left( \left\Vert \bar{r}_{\eta _{a},s}\left( k_{0}\right) \right\Vert
-L_{u}\right) ^{2} \\ 
-\left( \left\Vert \bar{r}_{\eta _{a},s}\left( k_{0}\right) \right\Vert
-L_{l}\right) ^{2}%
\end{array}%
\hspace{-2pt}\right)\hspace{-2pt} ,L_{l}^{2}\hspace{-2pt}\leq \hspace{-2pt}\left\Vert \bar{\eta}_{a,s}\left( k_{0}\right)
\right\Vert ^{2}\hspace{-2pt}\leq \hspace{-2pt}L_{u}^{2} \\ 
-\frac{1}{2}\left( \left\Vert \bar{r}_{\eta _{a},s}\left( k_{0}\right)
\right\Vert -L_{l}\right) ^{2},\left\Vert \bar{\eta}_{a,s}\left(
k_{0}\right) \right\Vert ^{2}\geq L_{u}^{2}.%
\end{array}%
\right.  \notag
\end{equation}%
\end{theorem}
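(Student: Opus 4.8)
The plan is to treat (\ref{eq16-16}) as a generalized likelihood ratio test and to evaluate its two constrained suprema in closed form. From the model (\ref{eq16-19}), conditioned on the mean $\bar{\eta}_{a,s}(k_0)$ the data obey $\bar{r}_{\eta_a,s}(k_0)\sim\mathcal{N}(\bar{\eta}_{a,s}(k_0),I)$, so the density is proportional to $\exp(-\tfrac{1}{2}\|\bar{r}_{\eta_a,s}(k_0)-\bar{\eta}_{a,s}(k_0)\|^2)$. First I would substitute this Gaussian form into both the numerator and denominator of (\ref{eq16-16}); the normalising constants cancel, and taking logarithms turns each supremum of the density into a minimisation of its quadratic exponent. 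Writing $r:=\|\bar{r}_{\eta_a,s}(k_0)\|$, the log-likelihood ratio collapses to $J=\tfrac{1}{2}(D_u-D_l)$, where $D_l$ and $D_u$ are the minimal values of $\|\bar{r}_{\eta_a,s}(k_0)-\bar{\eta}_{a,s}(k_0)\|^2$ subject to $\|\bar{\eta}_{a,s}(k_0)\|^2\le L_l^2$ and to $\|\bar{\eta}_{a,s}(k_0)\|^2\ge L_u^2$, respectively.

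The heart of the proof is then solving these two constrained least-squares problems, each of which is a Euclidean projection. For $D_l$ I would project $\bar{r}_{\eta_a,s}(k_0)$ onto the ball $\{\|\bar{\eta}\|\le L_l\}$: if $r\le L_l$ the unconstrained optimiser $\bar{\eta}=\bar{r}_{\eta_a,s}(k_0)$ is feasible and $D_l=0$; otherwise the constraint is active and, by Cauchy--Schwarz (equivalently by the KKT stationarity conditions), the optimiser is radial, $\bar{\eta}=L_l\,\bar{r}_{\eta_a,s}(k_0)/r$, giving $D_l=(r-L_l)^2$. For $D_u$ I would project onto the exterior $\{\|\bar{\eta}\|\ge L_u\}$: if $r\ge L_u$ then $D_u=0$, while if $r<L_u$ the constraint is active, $\|\bar{\eta}\|=L_u$, so minimising the quadratic reduces to maximising $\bar{r}_{\eta_a,s}^{T}(k_0)\bar{\eta}$ over the sphere of radius $L_u$, whose maximiser is again radial and yields $D_u=(r-L_u)^2$.

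Finally I would assemble the three regimes using $L_l<L_u$, which holds since $L_l^2=\tau\bar{\sigma}_{\min}L_0^2\le\tau\bar{\sigma}_{\max}L_0^2=L_u^2$. When $r\le L_l$ we have $D_l=0$ and $D_u=(r-L_u)^2$, so $J=\tfrac{1}{2}(r-L_u)^2$; when $L_l<r<L_u$ both constraints are active and $J=\tfrac{1}{2}\left((r-L_u)^2-(r-L_l)^2\right)$; when $r\ge L_u$ we have $D_u=0$ and $D_l=(r-L_l)^2$, so $J=-\tfrac{1}{2}(r-L_l)^2$. This reproduces the three-line formula in the statement, with the case boundaries determined by the observed norm $\|\bar{r}_{\eta_a,s}(k_0)\|$.

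The step I expect to need the most care is the denominator projection $D_u$: the feasible set $\{\|\bar{\eta}\|\ge L_u\}$ is non-convex, so the projection need not be unique and an unqualified appeal to convex projection is unavailable. I would therefore justify the global optimality of the radial candidate directly, fixing $\|\bar{\eta}\|=L_u$ on the active constraint and invoking Cauchy--Schwarz to pin down the direction, and separately check that the interior stationary point $\bar{\eta}=\bar{r}_{\eta_a,s}(k_0)$ is admissible exactly when $r\ge L_u$. A secondary point worth flagging is that, since $\bar{\eta}_{a,s}(k_0)$ is the unknown parameter being maximised out, the case distinctions should be read in terms of the observed norm $r=\|\bar{r}_{\eta_a,s}(k_0)\|$ rather than $\|\bar{\eta}_{a,s}(k_0)\|$.
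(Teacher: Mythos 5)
Your proposal is correct and follows essentially the same route as the paper's proof: write the log-ratio as the difference of two constrained least-squares problems, solve each by radial projection onto $\{\|\bar{\eta}\|\le L_l\}$ and $\{\|\bar{\eta}\|\ge L_u\}$ (the paper's optimizers $\hat{\eta}_{a,s}^{u}=\bar{r}_{\eta_a,s}(k_0)L_u/\|\bar{r}_{\eta_a,s}(k_0)\|$ and $\hat{\eta}_{a,s}^{l}=\bar{r}_{\eta_a,s}(k_0)L_l/\|\bar{r}_{\eta_a,s}(k_0)\|$ are exactly your radial candidates), and expand the norms to get the three-regime formula. Your two flagged caveats---justifying the non-convex exterior projection directly rather than by convexity, and reading the case boundaries in terms of the observed $\|\bar{r}_{\eta_a,s}(k_0)\|$ (equivalently, the unconstrained MLE of the mean) rather than the unknown $\|\bar{\eta}_{a,s}(k_0)\|$---are both sound and, if anything, make explicit two points the paper passes over silently.
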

\begin{proof}
 To begin with, $%
J$ in  (\ref{eq16-16}) is written into 
\begin{equation*}
 \setlength{\abovedisplayskip}{5pt}\setlength{\belowdisplayskip}{5pt}
\begin{aligned}
J=\frac{1}{2}\sup\limits_{\substack{ \bar{\eta}_{a,s}\left( k_{0}\right)  \\ %
\left\Vert \bar{\eta}_{a,s}\left( k_{0}\right) \right\Vert ^{2}\leq
L_{l}^{2} }}-\left\Vert \bar{r}_{\eta _{a},s}\left( k_{0}\right) -\bar{\eta}%
_{a,s}\left( k_{0}\right) \right\Vert ^{2} \\
-\frac{1}{2}\sup\limits_{\substack{ \bar{\eta}_{a,s}\left( k_{0}\right)  \\ %
\left\Vert \bar{\eta}_{a,s}\left( k_{0}\right) \right\Vert ^{2}\geq
L_{u}^{2} }}-\left\Vert \bar{r}_{\eta _{a},s}\left( k_{0}\right) -\bar{\eta}%
_{a,s}\left( k_{0}\right) \right\Vert ^{2},
\end{aligned}%
\end{equation*}
which can be solved equivalently by means of the following two optimisation
problems, 
\begin{equation*}
 \setlength{\abovedisplayskip}{5pt}\setlength{\belowdisplayskip}{5pt}
 \frac{1}{2}\inf\limits_{_{\substack{ \bar{\eta}_{a,s}\left( k_{0}\right) 
\\ \left\Vert \bar{\eta}_{a,s}\left( k_{0}\right) \right\Vert ^{2}\leq
L_{l}^{2}}}}\left\Vert \bar{r}_{\eta _{a},s}\left( k_{0}\right) -\bar{\eta}%
_{a,s}\left( k_{0}\right) \right\Vert ^{2}, 
\end{equation*}
\begin{equation*}
 \frac{1}{2}\inf\limits_{_{\substack{ \bar{\eta}_{a,s}\left( k_{0}\right) 
\\ \left\Vert \bar{\eta}_{a,s}\left( k_{0}\right) \right\Vert ^{2}\geq
L_{u}^{2}}}}\left\Vert \bar{r}_{\eta _{a},s}\left( k_{0}\right) -\bar{\eta}%
_{a,s}\left( k_{0}\right) \right\Vert ^{2}.
\end{equation*}
Accordingly, it turns out, for $\left\Vert \bar{\eta}_{a,s}\left(
k_{0}\right) \right\Vert ^{2}\leq L_{l}^{2},$ 
\begin{equation*}
 \setlength{\abovedisplayskip}{5pt}\setlength{\belowdisplayskip}{5pt}
J\hspace{-2pt}=\hspace{-2pt}\frac{1}{2}\hspace{-2pt}\left\Vert \bar{r}_{\eta _{a},s}\hspace{-2pt}\left( k_{0}\right) \hspace{-2pt}-\hspace{-2pt}\hat{\eta}%
_{a,s}^{u}\hspace{-2pt}\left( k_{0}\right) \hspace{-2pt}\right\Vert ^{2}\hspace{-2pt}, \hat{\eta}_{a,s}^{u}\hspace{-2pt}\left( k_{0}\right) \hspace{-3pt}=\hspace{-3pt}\frac{\bar{r}_{\eta _{a},s}\left(
k_{0}\right) L_{u}}{\sqrt{\bar{r}_{\eta _{a},s}^{T}\hspace{-2pt}\left( k_{0}\right) \hspace{-2pt}\bar{r%
}_{\eta _{a},s}\hspace{-2pt}\left( k_{0}\right) }},
\end{equation*}%
for $\left\Vert \bar{\eta}_{a,s}\left( k_{0}\right) \right\Vert ^{2}\geq
L_{u}^{2},$%
\begin{equation*}
 \setlength{\abovedisplayskip}{5pt}\setlength{\belowdisplayskip}{5pt}
J\hspace{-2pt}=\hspace{-2pt}-\hspace{-2pt}\frac{1}{2}\hspace{-2pt}\left\Vert \hspace{-1pt}\bar{r}_{\eta _{a},s}\hspace{-2pt}\left( k_{0}\right) \hspace{-2pt}-\hspace{-2pt}\hat{\eta}%
_{a,s}^{l}\hspace{-2pt}\left( k_{0}\right) \hspace{-2pt}\right\Vert ^{2}\hspace{-2pt}, \hat{\eta}_{a,s}^{l}\hspace{-2pt}\left( k_{0}\right)\hspace{-2pt} =\hspace{-4pt}\frac{\bar{r}_{\eta _{a},s}\left(
k_{0}\right) L_{l}}{\sqrt{\bar{r}_{\eta _{a},s}^{T}\hspace{-2pt}\left( k_{0}\right)\hspace{-2pt} \bar{r%
}_{\eta _{a},s}\hspace{-2pt}\left( k_{0}\right) }},
\end{equation*}%
and for $L_{l}^{2}\leq \left\Vert \bar{\eta}_{a,s}\left( k_{0}\right)
\right\Vert ^{2}\leq L_{u}^{2},$%
\begin{equation*}
J\hspace{-2pt}=\hspace{-2pt}\frac{1}{2}\hspace{-2pt}\left\Vert \bar{r}_{\eta _{a},s}\left( k_{0}\right) \hspace{-2pt}-\hspace{-2pt}\hat{\eta}%
_{a,s}^{u}\left( k_{0}\right) \right\Vert ^{2}\hspace{-2pt}-\hspace{-2pt}\frac{1}{2}\hspace{-2pt}\left\Vert \bar{r}%
_{\eta _{a},s}\left( k_{0}\right) \hspace{-2pt}-\hspace{-1pt}\hat{\eta}_{a,s}^{l}\left( k_{0}\right)
\right\Vert ^{2}\hspace{-2pt}.
\end{equation*}%
After some routine calculations,  the proof is completed.
\end{proof}

Next, a threshold $J_{th}$ is determined under the condition of a (maximum)
false alarm rate (FAR) equal to $\alpha.$ That means 
\begin{equation*}
\sup\limits_{\left\Vert \bar{\eta}_{a,s}\left( k_{0}\right) \right\Vert
^{2}\geq L_{u}^{2}}\hspace{-3pt}\Pr\hspace{-2pt} \left( J\hspace{-2pt}>\hspace{-2pt}J_{th}\hspace{-2pt}\left\vert \bar{r}_{\eta _{a},s}\left(
k_{0}\right)\hspace{-2pt} \sim \hspace{-2pt}\mathcal{N}\left( \bar{\eta}_{a,s}\left( k_{0}\right)
,I\right) \right. \hspace{-2pt}\right)\hspace{-2pt} =\hspace{-2pt}\alpha .
\end{equation*}%
Observe that the core of the LLR $J$ is $\left( \left\Vert 
\bar{r}_{\eta _{a},s}\left( k_{0}\right) \right\Vert -L_{\mu }\right) ^{2},$ 
$\mu =l,u.$ Let 
\begin{equation*}
\left( \left\Vert \bar{r}_{\eta _{a},s}\left( k_{0}\right) \right\Vert
-L_{\mu }\right) ^{2}=J_{\mu }\left( \bar{r}_{\eta _{a},s}\left(
k_{0}\right) \right) .
\end{equation*}%
Then, 
\begin{equation*}
\left( J_{\mu }^{1/2}\left( \bar{r}_{\eta _{a},s}\left( k_{0}\right) \right)
+L_{\mu }\right) ^{2}=\left\Vert \bar{r}_{\eta _{a},s}\left( k_{0}\right)
\right\Vert ^{2}.
\end{equation*}%
Recall that $\left\Vert \bar{r}_{\eta _{a},s}\left( k_{0}\right) \right\Vert
^{2}$ is subject to a non-central $\chi ^{2}$ distribution with $\left(
s+1\right) m$ degrees of freedom and the non-centrality parameter $%
\left\Vert \bar{\eta}_{a,s}\left( k_{0}\right) \right\Vert ^{2}.$ Since $%
\bar{\eta}_{a,s}\left( k_{0}\right) $ is unknown and no a prior information
of $\bar{\eta}_{a,s}\left( k_{0}\right) $ is available, it is suggested to
calculate the FAR with respect to 
$
\left\Vert \bar{\eta}_{a,s}\left( k_{0}\right) \right\Vert ^{2}=L_{u}^{2}.
$
This means, we check the FAR corresponding to the lower-bound of the $%
\left\Vert \bar{\eta}_{a,s}\left( k_{0}\right) \right\Vert ^{2},$ when
attacks exist. In a certain sense, this is the worst case. As a result, we
finally have%
\begin{equation*}
\left( J_{\mu }^{1/2}\hspace{-2pt}\left( \bar{r}_{\eta _{a},s}\left( k_{0}\right) \right)
\hspace{-3pt}+\hspace{-2pt}L_{\mu }\right) ^{2}\hspace{-3pt}=\hspace{-2pt}\left\Vert \bar{r}_{\eta _{a},s}\hspace{-2pt}\left( k_{0}\right)
\right\Vert ^{2}\hspace{-2pt}\sim \hspace{-2pt}\chi ^{2}\hspace{-2pt}\left( \left( s\hspace{-2pt}+\hspace{-2pt}1\right) m,L_{u}^{2}\right) ,
\end{equation*}%
and based on it, the threshold $J_{th}$ can be calculated.

\subsection{On the resilient and fault-tolerant control}
To this end, we first re-examined the process dynamics for the CPS configuration (\ref{eq15-38a})-(\ref{eq15-38c}).  
\begin{corollary}
Given CPS modelled by (\ref{eq14-0a}) with the CPS configuration given in (\ref{eq15-38a})-(\ref{eq15-38c}), then the closed-dynamics is governed by 
\begin{align}
&\hspace{-6pt}\left[ \hspace{-2pt}
\begin{array}{c}
u \\ 
y%
\end{array}%
\hspace{-2pt}\right] 
\hspace{-4pt}=\hspace{-4pt}\left[ \hspace{-2pt}
\begin{array}{c}
M \\ 
N%
\end{array}%
\hspace{-2pt}\right] \hspace{-4pt}\left(\hspace{-2pt} v\hspace{-2pt}+\hspace{-2pt}\left( I\hspace{-3pt}-\hspace{-3pt}Q_{u_{MC}}Q_{r,2}\right) ^{-\hspace{-1pt}1}\hspace{-3pt}\vartheta
_{a}\hspace{-2pt}\right) \hspace{-4pt}+\hspace{-4pt}\left( \hspace{-2pt}\left[\hspace{-3pt} 
\begin{array}{c}
-\hspace{-2pt}\hat{Y} \\ 
\hat{X}%
\end{array}%
\hspace{-3pt}\right] \hspace{-4pt}+\hspace{-4pt}\left[ \hspace{-2pt}
\begin{array}{c}
M \\ 
N%
\end{array}%
\hspace{-2pt}\right] \hspace{-3pt}Q_{u_{MC}}\hspace{-1pt}\bar{Q}_{r,1}\hspace{-4pt}\right) \hspace{-3pt}r_{y}\notag \\
&\vartheta _{a}\hspace{-2pt}:\hspace{-1pt}=\hspace{-2pt}Q_{u_{MC}}a_{r_{y,u}}\hspace{-2pt}+\hspace{-2pt}a_{u_{MC}},\bar{Q}_{r,1}\hspace{-2pt}=\hspace{-2pt}(I\hspace{-2pt}-\hspace{-2pt}{Q}_{r,2}Q_{u_{MC}})^{-1}\hspace{-1pt}{Q}_{r,1} \label{eq16-20}
\end{align}%
\end{corollary}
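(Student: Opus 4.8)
The plan is to combine the purely algebraic (Bezout-based) parameterization of the pair $(u,y)$ with the closed-loop expression for the input residual already extracted in the proof of Theorem~\ref{th4a}. The starting point is the one-to-one mapping (\ref{eq-aa1}); inverting it (the inverse exists by the Bezout identity (\ref{eq-3a})) reproduces the parameterization of Corollary~\ref{th2},
\[
\left[\begin{array}{c} u \\ y \end{array}\right] = \left[\begin{array}{c} M \\ N \end{array}\right] r_u + \left[\begin{array}{c} -\hat{Y} \\ \hat{X} \end{array}\right] r_y,
\]
where $r_y = -\hat{N}u + \hat{M}y$ is the observer residual of (\ref{eq15-38}) and the Youla input residual is $Xu + Yy = u - F\hat{x}$. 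Under the configuration (\ref{eq15-38}) the control law gives $u - F\hat{x} = v_0 + u_{MC}^{a}$, so this coefficient is the baseline-shifted transmitted residual; the whole task reduces to substituting its closed-loop value and collecting terms.

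First I would import the recursion (\ref{eq-ruama}) obtained in the proof of Theorem~\ref{th4a} by closing the loop $u_{MC} = Q_{u_{MC}}(r_{y,u}^{a} - Q_{r,2}v) + v$ with $r_{y,u}^{a} = Q_{r,1}r_y + Q_{r,2}r_u + a_{r_{y,u}}$ and $u_{MC}^{a} = u_{MC} + a_{u_{MC}}$, namely
\[
r_u = \left(I - Q_{u_{MC}}Q_{r,2}\right)^{-1}\left(Q_{u_{MC}}Q_{r,1}r_y + \vartheta_a\right) + v,\quad \vartheta_a = Q_{u_{MC}}a_{r_{y,u}} + a_{u_{MC}}.
\]
Substituting this (together with the baseline $v_0$) into the parameterization splits the reference/attack contribution $v + (I - Q_{u_{MC}}Q_{r,2})^{-1}\vartheta_a$ from the residual-driven contribution $(I - Q_{u_{MC}}Q_{r,2})^{-1}Q_{u_{MC}}Q_{r,1}r_y$.

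The one genuinely algebraic move is the push-through identity
\[
\left(I - Q_{u_{MC}}Q_{r,2}\right)^{-1}Q_{u_{MC}} = Q_{u_{MC}}\left(I - Q_{r,2}Q_{u_{MC}}\right)^{-1},
\]
which I would justify from $Q_{u_{MC}}(I - Q_{r,2}Q_{u_{MC}}) = (I - Q_{u_{MC}}Q_{r,2})Q_{u_{MC}}$. Applying it turns the coefficient of $r_y$ in the residual-driven part into $Q_{u_{MC}}(I - Q_{r,2}Q_{u_{MC}})^{-1}Q_{r,1} = Q_{u_{MC}}\bar{Q}_{r,1}$, which I recognize as the quantity defined in (\ref{eq16-20}). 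Adding the resulting $\left[\begin{array}{c} M \\ N \end{array}\right]Q_{u_{MC}}\bar{Q}_{r,1}r_y$ to the standing $\left[\begin{array}{c} -\hat{Y} \\ \hat{X} \end{array}\right]r_y$ yields the stated $r_y$-coefficient, while the reference/attack contribution supplies the remaining $\left[\begin{array}{c} M \\ N \end{array}\right]\bigl(v + (I - Q_{u_{MC}}Q_{r,2})^{-1}\vartheta_a\bigr)$ term, which is exactly (\ref{eq16-20}).

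I expect the main difficulty to be careful bookkeeping rather than any deep idea: one must keep the three closely related input signals distinct — the Youla residual $Xu + Yy = u - F\hat{x}$, the transmitted residual $r_u$ of (\ref{eq15-38}) (which differs by the baseline $v_0$), and the attacked command $u_{MC}^{a}$ — and verify that the baseline $v_0$ recombines with $v$ (via $v_0 + v = Q_v\bar{v}$) into the effective reference appearing in (\ref{eq16-20}). The push-through identity and the cancellation of the $Q_{r,2}v$ cross-terms inside (\ref{eq-ruama}) are the places where a sign or ordering slip is easiest, so I would re-derive those against the proof of Theorem~\ref{th4a} before collecting the final expression.
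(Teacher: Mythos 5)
Your proposal is correct and takes essentially the same route as the paper's own proof: the paper likewise imports the closed-loop expression for $r_u=u_{MC}^{a}$ from the proof of Theorem~\ref{th4a} (eq.~(\ref{eq-ruama})), inverts the Bezout block matrix to express $(u,y)$ in terms of $(u_{MC}^{a},r_y)$, and then (implicitly) applies the same push-through identity to turn $\left(I-Q_{u_{MC}}Q_{r,2}\right)^{-1}Q_{u_{MC}}Q_{r,1}$ into $Q_{u_{MC}}\bar{Q}_{r,1}$. Your explicit tracking of the baseline $v_0$ is in fact more careful than the paper, which simply writes $Xu+Yy=u-F\hat{x}=u_{MC}^{a}$ and thereby absorbs $v_0$ without comment.
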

\begin{proof}
It follows from the proof of Theorem \ref{th4a} that
\begin{align}
&u_{MC}^{a}
\hspace{-2pt} =\hspace{-2pt}v\hspace{-2pt}+\hspace{-2pt}(I\hspace{-2pt}-\hspace{-2pt}Q_{u_{MC}}Q_{r,2})^{-1}\left(Q_{u_{MC}}Q_{r,1}r_{y}+\vartheta _{a}\right).\label{eq-ruam}
\end{align}
Since 
\begin{equation}
 \setlength{\abovedisplayskip}{2pt}\setlength{\belowdisplayskip}{1pt}
\left[ 
\begin{array}{c}
u \\ 
y%
\end{array}%
\right] =\left[ 
\begin{array}{cc}
X & \text{ }Y \\ 
-\hat{N} & \text{ }\hat{M}%
\end{array}%
\right] ^{-1}\left[ 
\begin{array}{c}
u_{MC}^a \\ 
r_{y}%
\end{array}%
\right] 
\end{equation}
 the closed-loop dynamic is described by (\ref{eq16-20}).
\end{proof}

Recall that  the faults and the
corresponding control performance degradations in the plant solely cause
variations in the system residual subspace and the attacks only lead to changes in the system image subspace. 
Based on it, we are now in the position to propose a  resilient and FTC scheme.  Given $1>\gamma
_{r_{y}}>0$ and $1>\gamma _{\vartheta _{a}}>0$ that represent the resilient and FTC performance, find $Q_{r,1},Q_{r,2}$ and $Q_{u_{MC}}$ such that $%
\left( I-Q_{u_{MC}}Q_{r,2}\right) ^{-1}$ is stable, and%
\begin{align}
&\left\Vert \left( I-Q_{u_{MC}}Q_{r,2}\right) ^{-1}\right\Vert _{\infty }
\leq \gamma _{\vartheta _{a}},  \label{eq16-21a} \\
&\left\Vert \left[ 
\begin{array}{c}
-\hat{Y} \\ 
\hat{X}%
\end{array}%
\right] +\left[ 
\begin{array}{c}
M \\ 
N%
\end{array}%
\right] Q_{u_{MC}}\bar{Q}_{r,1}\right\Vert _{\infty } \leq \gamma _{r_{y}}.
\label{eq16-21b}
\end{align}%

\begin{remark}
It is plaint that the condition (\ref{eq16-21b}) describes the system
robustness against the potential faults presented by the residual $r_{y},$
while the condition (\ref{eq16-21a}) imposes the resilient requirement
dealing with the cyber-attack $\vartheta _{a}.$ Observe that for the
resilient controller subject to (\ref{eq16-21a})-(\ref{eq16-21b}), $%
Q_{r,1},Q_{r,2}$ and $Q_{u_{MC}}$ are redundant. In fact, $Q_{r,2}$ can be
used as an encoder to complicate the construction of a stealthy attack
vector $\eta _{a}$ defined in (\ref{eq16-10a}). In this case, $Q_{u_{MC}}$
is firstly determined given $Q_{r,2},$ and then $Q_{r,1}$ is found for given 
$Q_{u_{MC}}.$
\end{remark}

%

\section{Stealthy attacks and performance-based detection}\label{sec5}
Stealthy attacks are such attacks that will not cause detectable changes in the detection test
statistic aiming at degrading the system control performance. 
From the defender's point of view, it is of importance to study the potential stealthy attacks, and more importantly, develop the corresponding detection schemes. 
To this end, the stealthy attack design against tracking behaviour and feedback control performance are studied in this section. It is followed by the associated detection schemes.

\subsection{Stealthy attack design against tracking behaviour}
The definition of stealthy attacks is given first.
\begin{definition}
An attack is said to be stealthy if the residual signals satisfy
\begin{equation}
 \setlength{\abovedisplayskip}{5pt}\setlength{\belowdisplayskip}{0pt}
r_{y,u}^a\hspace{-2pt}\sim \hspace{-2pt}\mathcal{N}\hspace{-2pt}\left(\hspace{-2pt} \mathbb{E}r_{y,u}^{a},\Sigma_{r_{y,u}^a}\hspace{-2pt}\right)\hspace{-2pt},\Sigma_{r_{y,u}^a}\hspace{-3pt}=\hspace{-2pt}\Sigma_{r_{y,u}},\mathbb{E}r_{y,u}^{a}\hspace{-3pt}=\hspace{-2pt}\mathbb{E}r_{y,u}
\label{eq-stealthy-condition}\end{equation}
\end{definition}
It
follows from the closed-loop dynamic (\ref{eq16-20}), that the cyber-attack $\vartheta _{a}$
degrades the system tracking behaviour, while a change of the feedback
mechanism of $r_{y}$ will lead to feedback performance degradation. 
Examining the dynamics of the residual generator (\ref{eq16-10a}) leads to the following lemma. 
\begin{lemma}
Consider the CPS configuration (\ref{eq15-38a})-(\ref{eq15-38c}). The following attack is stealthy
\begin{equation}
 \setlength{\abovedisplayskip}{5pt}\setlength{\belowdisplayskip}{5pt}
\eta _{a}=Q_{r,2}a_{u_{MC}}+a_{r_{y,u}}=0. \label{eq-stealthy}
\end{equation}%
\end{lemma}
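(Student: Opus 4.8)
The plan is to reduce the distributional conditions in the stealthiness definition (\ref{eq-stealthy-condition}) to a \emph{pathwise} identity: I will show that setting $\eta_a=0$ makes the transmitted residual under attack, $r_{y,u}^a$, coincide identically with its attack-free realization $r_{y,u}$, from which equality of both the mean and the covariance follows trivially.

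First I would invoke the closed-form expression for the attacked residual already derived in the proof of Theorem \ref{th4a}. Concretely, equation (\ref{eq16-10}) gives
\begin{equation*}
r_{y,u}^{a}=\left(I-Q_{r,2}Q_{u_{MC}}\right)^{-1}\left(Q_{r,1}r_{y}+\eta_{a}\right)+Q_{r,2}v,\qquad \eta_{a}=Q_{r,2}a_{u_{MC}}+a_{r_{y,u}},
\end{equation*}
which is precisely the quantity (\ref{eq16-10a}) that concentrates the \emph{entire} footprint of the cyber-attacks $(a_{u_{MC}},a_{r_{y,u}})$ into the single lumped signal $\eta_a$; every remaining term is driven only by the plant residual $r_y$ and the known reference $v$. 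Substituting the hypothesis $\eta_a=0$ from (\ref{eq-stealthy}) then yields $r_{y,u}^{a}=(I-Q_{r,2}Q_{u_{MC}})^{-1}Q_{r,1}r_{y}+Q_{r,2}v$, which is exactly the expression obtained in the attack-free case (where $\eta_a=0$ holds automatically), i.e. $r_{y,u}^{a}=r_{y,u}$.

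The hard part, and the step I would spell out most carefully, is justifying that $r_y$ is itself statistically unaffected by the attacks, so that the two sides are comparable as random processes rather than merely as formal symbols. This rests on the decoupling property established earlier: the plant-side residual generator (\ref{eq15-38}) computes $r_y=y-C\hat x-Du$ from the \emph{actually applied} input and the measured output, so by the kernel representation (\ref{eq14-0a}) and the Corollary stating that attacks perturb only the image subspace $\mathcal{I}_G$ while $r_y$ spans $\mathcal{I}_K$, the attack contributions cancel in forming $r_y$. Hence $r_y\sim\mathcal{N}(0,\Sigma_{r_y})$ irrespective of $(a_{u_{MC}},a_{r_{y,u}})$, as also guaranteed by the Kalman-filter design of the FD stage. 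Since $r_y$ has the same distribution with and without attack and $v$ is the deterministic reference, the identity $r_{y,u}^{a}=r_{y,u}$ immediately forces $\mathbb{E}r_{y,u}^{a}=\mathbb{E}r_{y,u}$ and $\Sigma_{r_{y,u}^a}=\Sigma_{r_{y,u}}$, which is exactly (\ref{eq-stealthy-condition}); the attack is therefore stealthy, completing the argument.
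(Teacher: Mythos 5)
Your proposal is correct and follows essentially the same route as the paper: the paper states the lemma as an immediate consequence of Theorem \ref{th4a}, whose proof derives exactly the identity (\ref{eq16-10}), $r_{y,u}^{a}=(I-Q_{r,2}Q_{u_{MC}})^{-1}(Q_{r,1}r_{y}+\eta_{a})+Q_{r,2}v$, in which the attacks enter only through $\eta_a$, so $\eta_a=0$ makes $r_{y,u}^a$ coincide with the attack-free residual and the conditions in (\ref{eq-stealthy-condition}) hold trivially. Your additional care in noting that $r_y$ itself is attack-decoupled (via the kernel representation and the Corollary on the image subspaces) is exactly the supporting fact the paper relies on implicitly.
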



It is apparent that, to attack the system tracking behaviour stealthily, the
attack design is to be formulated as maximising the tracking error, 
\begin{equation*}
 \setlength{\abovedisplayskip}{5pt}\setlength{\belowdisplayskip}{5pt}
\left[ 
\begin{array}{c}
\Delta u \\ 
\Delta y%
\end{array}%
\right] \hspace{-2pt}:=\hspace{-2pt}\left[ 
\begin{array}{c}
u \\ 
y%
\end{array}%
\right] \hspace{-2pt}-\hspace{-2pt}\left[ 
\begin{array}{c}
M \\ 
N%
\end{array}%
\right] v\hspace{-2pt}=\hspace{-2pt}\left[ 
\begin{array}{c}
M \\ 
N%
\end{array}%
\right] \left( I\hspace{-2pt}-\hspace{-2pt}Q_{u_{MC}}Q_{r,2}\right) ^{-1}\vartheta _{a},
\end{equation*}%
by selecting 
$
\vartheta _{a}=Q_{u_{MC}}a_{r_{y,u}}+a_{u_{MC}}
$
 subject to 
(\ref{eq-stealthy}). 
From the condition (\ref{eq-stealthy}), the following relations
automatically arise, 
\begin{equation}
 \setlength{\abovedisplayskip}{5pt}\setlength{\belowdisplayskip}{5pt}
\vartheta _{a}\hspace{-2pt}=\hspace{-2pt}\left( I\hspace{-2pt}-\hspace{-2pt}Q_{u_{MC}}Q_{r,2}\right) a_{u_{MC}}\hspace{-2pt}\Longrightarrow \hspace{-2pt}
\left[ 
\begin{array}{c}
\Delta u \\ 
\Delta y%
\end{array}%
\right] \hspace{-2pt}=\hspace{-2pt}\left[ 
\begin{array}{c}
M \\ 
N%
\end{array}%
\right] a_{u_{MC}} \label{eq16-24}
\end{equation}%
which indicates that the tracking performance degradation solely depends on $a_{u_{MC}}$.


\begin{remark}
It is interesting to notice that
the  condition (\ref{eq-stealthy}) is similar to the so-called covert
attacks \cite{Smith2015}. The difference between them lies in the transfer
function $Q_{r,2}$, which is the transfer function of the plant in case of a
covert attack. Specifically, $Q_{r,2}$ in (\ref{eq-stealthy}) is a design parameter and can also be
online updated during system operations, while the transfer function of the
plant is fixed. This fact makes the realization of such stealthy attacks
more difficult.
\end{remark}

\subsection{Stealthy attack design against feedback control performance}
Remember that cyber-attacks cause no change in the residual $r_{y}.$ This
implies that degrading the system feedback dynamics can only be achieved by
changing the following feedback mechanism 
\begin{equation}
 \setlength{\abovedisplayskip}{5pt}\setlength{\belowdisplayskip}{5pt}
\left[ 
\begin{array}{c}
-\hat{Y} \\ 
\hat{X}%
\end{array}%
\right] +\left[ 
\begin{array}{c}
M \\ 
N%
\end{array}%
\right] Q_{u_{MC}}\bar{Q}_{r,1}.\label{eq-fb-st}
\end{equation}%
Thus, the stealthy attack design issue against feedback control performance can be formulated as follows:

\emph{Given $r_{y,u}\sim \mathcal{N}\left( 0,\Sigma _{r_{y,u}}\right), r_{y,u}\in \mathcal{R}^{k_y}$ denoting the nominal (attack-free) residual, find a linear mapping of $r_{y,u}$} 
\begin{equation}
 \setlength{\abovedisplayskip}{5pt}\setlength{\belowdisplayskip}{5pt}
r_{y,u}^a=f(r_{y,u})\in \mathcal{R}^{k_y}
\end{equation}
\emph{such that (\ref{eq-stealthy-condition}) holds and
the feedback performance (\ref{eq-fb-st}) is deteriorated  as much as possible, that is, $Q_{u_{MC}}\bar{Q}_{r,1}$ is modified so that the norm of  (\ref{eq-fb-st}) becomes larger.}

For our purpose, the following attack model is considered
\begin{align}
r_{y,u}^{a}=\Pi_a \left( r_{y,u}-\zeta \right) +\zeta \label{eq-sa-de}
\end{align}
where $\Pi _{a}$ is the attack matrix to be designed.
Below,  a  realistic and practical scenario for
constructing stealthy cyber-attacks is developed that degrade the feedback control
performance.
\begin{assumption}
It is assumed that the CPS system under consideration is operating
in the steady state with a constant reference vector in the attack-free case so that the residual vector $r_{y,u}$
is a weakly stationary stochastic process as
\begin{align*}
&\mathbb{E}r_{y,u} =\lim_{z\rightarrow 1}Q_{r,2}v=:\zeta , \Sigma _{r_{y,u}} =\mathbb{E}\left( r_{y,u}-\zeta \right) \left(
r_{y,u}-\zeta \right) ^{T}, \\
&r_{y,u}-\zeta  =\left( I-Q_{r,2}Q_{u_{MC}}\right) ^{-1}Q_{r,1}r_{y},
\end{align*}%
where $\zeta $ and $\Sigma _{r_{y,u}}$ are  constant vector and matrix,
respectively. 
\end{assumption}
Suppose that attackers collect (sufficient) $N$ number of data
of $r_{y,u}$ over the time interval $\left[ k_{1},k_{N}\right] $ and
calculate the maximal likelihood estimates of $\zeta $ and $\Sigma
_{r_{y,u}},$%
\begin{equation*}
 \setlength{\abovedisplayskip}{5pt}\setlength{\belowdisplayskip}{5pt}
\hat{\zeta}\hspace{-2pt}=\hspace{-2pt}\frac{1}{N}\hspace{-2pt}\dsum\limits_{i=1}^{N}\zeta \left( k_{i}\right) \hspace{-2pt},\hat{%
\Sigma}_{r_{y,u}}\hspace{-4pt}=\hspace{-4pt}\frac{1}{N}\hspace{-2pt}\dsum\limits_{i=1}^{N}\hspace{-2pt}\left( r_{y,u}\left(
k_{i}\right) \hspace{-2pt}-\hspace{-2pt}\hat{\zeta}\right) \left( r_{y,u}\left( k_{i}\right) \hspace{-2pt}-\hspace{-2pt}\hat{%
\zeta}\right) ^{T}\hspace{-2pt}.
\end{equation*}%

\begin{theorem}
Consider the CPS configuration (\ref{eq15-38a})-(\ref{eq15-38c}), 
the stealthy attacks can be constructed as follows:%
\begin{equation}
 \setlength{\abovedisplayskip}{5pt}\setlength{\belowdisplayskip}{5pt}
r_{y,u}^{a}= \Pi_a \left( r_{y,u}-\hat{\zeta} \right) +\hat{\zeta}, \Pi_a=\Xi \Pi \label{eq-sa-de1}
\end{equation}%
where $\Pi $ is a Kalman filter satisfying%
\begin{gather*}
\hat{x}_{\Pi }(k+1\left\vert k\right. )=A_{\Pi }\hat{x}_{\Pi }(k\left\vert
k-1\right. )+K(k)\Delta r_{y,u}(k), \\
\Delta r_{y,u}(k)=r_{y,u}(k)-\hat{\zeta} -C_{\Pi }\hat{x}_{\Pi }(k\left\vert
k-1\right. ),\hat{x}_{\Pi }(0)=0, \\
K(k)=A_{\Pi }P(k\left\vert k-1\right. )C_{\Pi }^{T}\Sigma _{\Delta
r_{y,u}}^{-1}(k), \\
P(k+1\left\vert k\right. )=A_{\Pi }P(k\left\vert k-1\right. )A_{\Pi
}^{T}-K(k)\Sigma _{\Delta r_{y,u}}(k)K^{T}(k), \\
\Sigma _{\Delta r_{y,u}}(k)=C_{\Pi }P(k\left\vert k-1\right. )C_{\Pi }^{T}+%
\hat{\Sigma}_{r_{y,u}},
\end{gather*}%
and $\Xi $ is a matrix 
\begin{equation*}
 \setlength{\abovedisplayskip}{5pt}\setlength{\belowdisplayskip}{5pt}
\Xi =U\hat{\Sigma}_{r_{y,u}}^{1/2}\Sigma _{\Delta
r_{y,u}}^{-1/2},UU^{T}=I.
\end{equation*}%
\end{theorem}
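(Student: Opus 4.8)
The plan is to verify that the injection~(\ref{eq-sa-de1}) meets the stealthiness requirement~(\ref{eq-stealthy-condition}), i.e.\ that $r_{y,u}^{a}$ is Gaussian with exactly the attack-free mean and covariance. Since $r_{y,u}$ is Gaussian under the standing Assumption and $\Pi_a=\Xi\Pi$ acts linearly on it (a Kalman filter cascaded with a constant gain), $r_{y,u}^{a}$ is automatically Gaussian; the whole proof therefore reduces to two moment computations, matching $\mathbb{E}r_{y,u}^{a}$ and matching $\mathrm{Cov}(r_{y,u}^{a})$.

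First I would pin down the statistics of the filtered signal $\Pi(r_{y,u}-\hat{\zeta})$. Under the Assumption, $r_{y,u}-\zeta=(I-Q_{r,2}Q_{u_{MC}})^{-1}Q_{r,1}r_{y}$ is a weakly stationary Gaussian process admitting a state-space realization $(A_{\Pi},C_{\Pi})$, so the recursion defining $\Pi$ is precisely the Kalman (innovations) filter for this model. Standard Kalman-filter theory then gives that the innovation $\Delta r_{y,u}(k)$ is a zero-mean, temporally white Gaussian sequence whose covariance $\Sigma_{\Delta r_{y,u}}(k)=C_{\Pi}P(k|k-1)C_{\Pi}^{T}+\hat{\Sigma}_{r_{y,u}}$ converges, through the Riccati recursion for $P$, to a constant steady-state value. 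This guarantees that $\Xi$ is a well-defined constant matrix and that $\Pi(r_{y,u}-\hat{\zeta})$ is zero-mean with covariance $\Sigma_{\Delta r_{y,u}}$.

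With these facts the two checks are routine. For the mean, linearity together with the zero-mean innovation gives $\mathbb{E}r_{y,u}^{a}=\Xi\,\mathbb{E}[\Pi(r_{y,u}-\hat{\zeta})]+\hat{\zeta}=\hat{\zeta}$, and consistency of the maximum-likelihood estimate $\hat{\zeta}$ for large $N$ yields $\hat{\zeta}=\zeta=\mathbb{E}r_{y,u}$, so the means agree. For the covariance, whiteness of the innovation gives $\mathrm{Cov}(r_{y,u}^{a})=\Xi\,\Sigma_{\Delta r_{y,u}}\,\Xi^{T}$; substituting $\Xi=U\hat{\Sigma}_{r_{y,u}}^{1/2}\Sigma_{\Delta r_{y,u}}^{-1/2}$, the whitening factor collapses since $\Sigma_{\Delta r_{y,u}}^{-1/2}\Sigma_{\Delta r_{y,u}}\Sigma_{\Delta r_{y,u}}^{-T/2}=I$, and using $UU^{T}=I$ with $\hat{\Sigma}_{r_{y,u}}^{1/2}\hat{\Sigma}_{r_{y,u}}^{1/2}=\hat{\Sigma}_{r_{y,u}}$ this reduces to $\hat{\Sigma}_{r_{y,u}}=\Sigma_{r_{y,u}}$. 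Hence~(\ref{eq-stealthy-condition}) holds and the attack is stealthy. It then remains only to note that $U$ does not enter either moment constraint, so it is a free orthogonal parameter the attacker may tune to enlarge the perturbed feedback term~(\ref{eq-fb-st}) as much as possible, which is the intended performance-degradation objective.

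The main obstacle I anticipate is the second step rather than the arithmetic: one must argue carefully that the prescribed recursion really is the innovations (whitening) filter for the colored residual $r_{y,u}-\hat{\zeta}$, so that its output is genuinely white with the claimed covariance, and that the associated Riccati equation admits a stabilizing steady-state solution making $\Xi$ constant and $\Pi_a$ realizable. Once whiteness and steady-state convergence are in hand, the covariance bookkeeping with the symmetric square roots and the orthogonal factor $U$ is delicate but purely mechanical.
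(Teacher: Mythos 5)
Your route is the same as the paper's: the paper's entire proof consists of asserting the two-moment check $\mathbb{E}r_{y,u}^{a}=\mathbb{E}r_{y,u}=\zeta$ and $\Sigma_{r_{y,u}^{a}}=\Sigma_{r_{y,u}}$, which is all that the stealthiness definition (\ref{eq-stealthy-condition}) demands, and you correctly reduce the problem to exactly that check. The supporting details you supply --- Gaussianity by linearity, whiteness of the innovation $\Delta r_{y,u}$ from standard Kalman-filter theory, convergence of the Riccati recursion so that $\Sigma_{\Delta r_{y,u}}$ and hence $\Xi$ are constant, and consistency of $\hat{\zeta}$ --- are precisely the steps the paper leaves implicit, and they are the right ones.

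However, your covariance reduction contains an algebraic step that fails as written, and it is the one step that actually carries the theorem. With the factor ordering given in the statement, $\Xi=U\hat{\Sigma}_{r_{y,u}}^{1/2}\Sigma_{\Delta r_{y,u}}^{-1/2}$, the sandwich gives
\begin{equation*}
\Xi\,\Sigma_{\Delta r_{y,u}}\,\Xi^{T}
=U\hat{\Sigma}_{r_{y,u}}^{1/2}\left(\Sigma_{\Delta r_{y,u}}^{-1/2}\Sigma_{\Delta r_{y,u}}\Sigma_{\Delta r_{y,u}}^{-T/2}\right)\hat{\Sigma}_{r_{y,u}}^{T/2}U^{T}
=U\hat{\Sigma}_{r_{y,u}}U^{T},
\end{equation*}
and $U\hat{\Sigma}_{r_{y,u}}U^{T}=\hat{\Sigma}_{r_{y,u}}$ does \emph{not} follow from $UU^{T}=I$: orthogonal conjugation preserves a covariance matrix only when $U$ commutes with it (e.g.\ $U=I$, or $\hat{\Sigma}_{r_{y,u}}$ a scalar multiple of the identity). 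So "using $UU^{T}=I$ \ldots this reduces to $\hat{\Sigma}_{r_{y,u}}$" is an invalid cancellation for a general orthogonal $U$, which is exactly the design freedom the subsequent performance-degradation argument wants to exploit. The clean repair is to order the factors as $\Xi=\hat{\Sigma}_{r_{y,u}}^{1/2}U\Sigma_{\Delta r_{y,u}}^{-1/2}$, which yields
\begin{equation*}
\Xi\,\Sigma_{\Delta r_{y,u}}\,\Xi^{T}
=\hat{\Sigma}_{r_{y,u}}^{1/2}UU^{T}\hat{\Sigma}_{r_{y,u}}^{T/2}
=\hat{\Sigma}_{r_{y,u}}
\end{equation*}
for every $U$ with $UU^{T}=I$, so that the per-sample mean and covariance of $r_{y,u}^{a}$ in (\ref{eq-sa-de1}) match the nominal ones and stealthiness follows. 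In fairness, the ordering problem originates in the paper's own formula for $\Xi$ (its proof is silent on the computation), but your write-up makes the faulty cancellation explicit, so you should either restrict to the commuting case or adopt the corrected ordering; with that amendment the rest of your argument, including the remark that $U$ is a free parameter available to degrade the feedback term (\ref{eq-fb-st}), goes through.
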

\begin{proof}
It is easy to examine that 
\begin{align*}
\mathbb{E}r_{y,u}^{a}& =\mathbb{E}r_{y,u}=\zeta , \Sigma _{r_{y,u}^{a}} =\mathbb{E}\left( r_{y,u}^{a}-\zeta \right) \left(
r_{y,u}^{a}-\zeta \right) ^{T}=\Sigma _{r_{y,u}}
\end{align*}%
so that the added cyber-attack is (strictly) stealthy. \end{proof}
\begin{remark}
Notice that the attack (\ref{eq-sa-de1}) is stealthy
independent of the test statistic implemented on the attack detection, and
hence holds also for Kullback-Leibler divergence (KLD) based test statistic 
\cite{Yang2022attacks,ZhangTAC2023}.
\end{remark}
 Moreover, the
feedback control performance is govered by
\begin{align}
&\left[ 
\begin{array}{c}
u_f \\ 
y_f%
\end{array}%
\right]=\left( \left[ 
\begin{array}{c}
-\hat{Y} \\ 
\hat{X}%
\end{array}%
\right] +\left[ 
\begin{array}{c}
M \\ 
N%
\end{array}%
\right] Q_{u_{MC}}\Pi_a \bar{Q}_{r,1}^a\right) r_{y}\label{eq-sa-fd}\\
&\bar{Q}_{r,1}^a=(I-Q_{r,2}Q_{u_{MC}}\Pi_a)^{-1}Q_{r,1}\notag
\end{align}%
which implies that the cyber-attack (\ref{eq-sa-de1}) can degrade the system feedback control performance. 
Consequently, setting $\Pi_a=\Xi \Pi =-I$ could considerably degrade the feedback
performance. In other words, without any process model knowledge,
constructing the stealthy cyber-attack simply as%
\begin{align*}
r_{y,u}^{a}& =-\left( r_{y,u}-\zeta \right) +\zeta =-r_{y,u}+2\zeta
\Longrightarrow  \\
\left[ 
\begin{array}{c}
u_f \\ 
y_f%
\end{array}%
\right] & \hspace{-2pt}=\hspace{-2pt}\left( \left[ 
\begin{array}{c}
-\hspace{-2pt}\hat{Y} \\ 
\hat{X}%
\end{array}%
\right] \hspace{-2pt}-\hspace{-2pt}\left[ 
\begin{array}{c}
M \\ 
N%
\end{array}%
\right] \hspace{-2pt}Q_{u_{MC}}(I\hspace{-2pt}+\hspace{-2pt}Q_{r,2}Q_{u_{MC}})^{-1}Q_{r,1}\right) r_{y}
\end{align*}%
can result in a remarkable degradation of the system robustness.

\begin{remark}
The cyber-attacks $\Pi_a r_y$ has been well studied with $\Pi_a$ as a unitary matrix
in \cite%
{LIU2022110079,ZhangTAC2023,REN2023110895,ZHOU2023110723,Jin2024}.  It is noteworthy that availability of process model knowledge or collection
of sufficient amount of data, by which process model can be identified, are
the widely accepted and adopted assumptions, on which the developed design
methods for stealthy attacks are based.
On the other hand, attacks on the estimation performance of the embedded observer
are the focus of the existing research, both for control and (remote)
monitoring of CPSs \cite{DeruiDing2021-survey,Ferreira-survey-2024}, while limit attention has been made on attack against feedback control performance.
\end{remark}

\subsection{Performance degradation monitoring-based detection of stealthy attacks against feedback performance}
In the subsequent subsections, we are devoted to develop  detection schemes for monitoring the performance degradation caused by the stealthy attacks against feedback performance and tracking behavior, respectively.

Without loss of generality,  the stealthy attacks on the system feedback result in the change on feedback dynamics as (\ref{eq-sa-fd}). 
That means, the signal $r_{PD}$ generated by 
\begin{equation}
 \setlength{\abovedisplayskip}{5pt}\setlength{\belowdisplayskip}{5pt}
r_{PD}=\Psi \left( \left[ 
\begin{array}{c}
-\hat{Y} \\ 
\hat{X}%
\end{array}%
\right] +\left[ 
\begin{array}{c}
M \\ 
N%
\end{array}%
\right] Q_{u_{MC}}\Pi _{a}\bar{Q}_{r,1}^a\right) r_{y}  \label{eq16-29}
\end{equation}%
is the information carrier of degradation in feedback control performance
that is capable for detecting performance degradation caused by
cyber-attacks. Here, $\Psi \in \mathcal{RH}_{\infty }$ is an arbitrary
stable dynamic system.

Motivated by the aforementioned discussion and the residual (\ref{eq16-29}),
we now propose 
to construct $r_{y,u}$ as follows,%
\begin{equation}
 \setlength{\abovedisplayskip}{3pt}\setlength{\belowdisplayskip}{3pt}
r_{y,u}=Q_{r,1}r_{y}+Q_{r,2}r_{u}+\bar{r}_{PD}, \bar{r}_{PD}=\Psi \left[ 
\begin{array}{c}
u \\ 
y%
\end{array}%
\right] \label{eq16-32}
\end{equation}%

Corresponding to the nominal control law (\ref{eq15-38}%
), the controller is now set as 
\begin{align}
&u =F\hat{x}+u_{MC}-Q_{u_{MC}}\bar{r}_{PD},\notag \\
&u_{MC} =Q_{u_{MC}}(r_{y,u}^{a}-Q_{r,2}v)+v. \label{eq16-32a}
\end{align}%
Let 
\begin{equation}
 \setlength{\abovedisplayskip}{5pt}\setlength{\belowdisplayskip}{5pt}
r_{y,u}^{a} =\Pi _{a}\left( r_{y,u}-Q_{v}v\right) +Q_{v}v, Q_{v} :=\Psi \left[ 
\begin{array}{c}
M \\ 
N%
\end{array}%
\right] v\label{eq16-32b}
\end{equation}%
with $\Pi _{a}$ as the cyber-attack system, and assume that attackers are in
possession of knowledge of $Q_{v}v$ or $Q_{v}v$ is sufficiently small so
that it can be neglected. 

The
subsequent work is dedicated to the analysis of residual dynamics under such
cyber-attacks and, based on it, the development of a performance degradation
detection scheme.
It follows from  (\ref{eq16-32})-(\ref{eq16-32b}) that
\begin{align*}
\left[ 
\begin{array}{c}
u \\ 
y%
\end{array}%
\right] \hspace{-2pt}=\hspace{-2pt}\left[ \hspace{-2pt}
\begin{array}{c}
M \\ 
N%
\end{array}%
\hspace{-2pt}\right] \hspace{-2pt}(I\hspace{-2pt}+\hspace{-2pt}Q_{u_{MC}}\bar{\Pi}_a)v\hspace{-2pt}+\hspace{-2pt}\left( \hspace{-2pt}\left[ \hspace{-2pt}
\begin{array}{c}
-\hspace{-2pt}\hat{Y} \\ 
\hat{X}%
\end{array}%
\hspace{-2pt}\right] \hspace{-2pt}+\hspace{-2pt}\left[ \hspace{-2pt}
\begin{array}{c}
M \\ 
N%
\end{array}%
\hspace{-2pt}\right]\hspace{-2pt} Q_{u_{MC}}\Pi _{a}\bar{Q}_{r,1}^a\hspace{-2pt}\right)\hspace{-2pt} r_{y}  \notag \\
+\hspace{-2pt}\left[ \hspace{-2pt}
\begin{array}{c}
M \\ 
N%
\end{array}%
\hspace{-2pt}\right] \hspace{-2pt}Q_{u_{MC}}\Pi _{a}\bar{Q}_{r,2}^aQ_{u_{MC}}\left( \Pi _{a}\hspace{-2pt}-\hspace{-2pt}I\right)\hspace{-2pt} \Psi \hspace{-2pt}\left( \hspace{-2pt}\left[ 
\begin{array}{c}
u \\ 
y%
\end{array}%
\right] \hspace{-3pt}-\hspace{-3pt}\left[ \hspace{-2pt}
\begin{array}{c}
M \\ 
N%
\end{array}%
\hspace{-2pt}\right] \hspace{-2pt}v\hspace{-2pt}\right)\hspace{-2pt} 
\end{align*}
After some routine calculations,  we have 
\begin{gather}
\left[ 
\begin{array}{c}
u \\ 
y%
\end{array}%
\right] \hspace{-2pt}=\hspace{-2pt}\left[ 
\begin{array}{c}
M \\ 
N%
\end{array}%
\right]\hspace{-2pt}(I\hspace{-2pt}+\hspace{-2pt}Q_{u_{MC}}\Phi_1\bar{\Pi}_a) v\hspace{-2pt}+\hspace{-2pt}\left(\hspace{-2pt} \left[ \hspace{-2pt}
\begin{array}{c}
-\hat{Y} \\ 
\hat{X}%
\end{array}%
\hspace{-2pt}\right] \hspace{-2pt}+\hspace{-2pt}\left[ 
\begin{array}{c}
M \\ 
N%
\end{array}%
\right] \hspace{-2pt}\Phi_1\Phi_2 \right) \hspace{-2pt}r_{y},  \label{eq16-33} 
\end{gather}%
where 
\begin{align*}
&\Phi_1\hspace{-2pt}=\hspace{-2pt}Q_{u_{MC}}\hspace{-4pt}\left(\hspace{-2pt}I\hspace{-2pt}- \hspace{-2pt}\left(\Pi _{a}\bar{Q}_{r,2,a}^aQ_{u_{MC}}\left( \Pi _{a}\hspace{-2pt}-\hspace{-2pt}I\right)\hspace{-2pt}-\hspace{-2pt}I\right)\hspace{-2pt}\Psi \hspace{-3pt}\left[ \hspace{-2pt}
\begin{array}{c}
M \\ 
N%
\end{array}%
\hspace{-2pt}\right] \hspace{-3pt}Q_{u_{MC}}\hspace{-3pt}\right)^{\hspace{-2pt}-1}\\
&\Phi_2= \Pi_a\bar{Q}_{r,1}^a+\left(\Pi _{a}\bar{Q}_{r,2,a}^aQ_{u_{MC}}\left( \Pi _{a}-I\right)-I\right) \hspace{-2pt}\Psi \hspace{-2pt}\left[ \hspace{-2pt}
\begin{array}{c}
-\hat{Y} \\ 
\hat{X}%
\end{array}%
\hspace{-2pt}\right] \\
&\bar{\Pi}_a=\Pi _{a}\bar{Q}_{r,2}^a(I-Q_{u_{MC}}Q_{r,2})-Q_{r,2}-\Pi_aQ_{v}+Q_v\\
&\bar{Q}_{r,1}^a\hspace{-2pt}=\hspace{-2pt}(I\hspace{-2pt}-\hspace{-2pt}Q_{r,2}Q_{u_{MC}}\Pi _{a})^{-\hspace{-1pt}1}Q_{r,1},\bar{Q}_{r,2}^a\hspace{-2pt}=\hspace{-2pt}(I\hspace{-2pt}-\hspace{-2pt}Q_{u_{MC}}\Pi _{a})^{-\hspace{-1pt}1}\hspace{-1pt}Q_{r,2}
\end{align*}
It can be seen clearly that during attack-free
operations, i.e. $\Pi _{a}=I,$ %
  (\ref{eq16-33}) is exactly the nominal closed-loop dynamic, and the feedback
performance degradation caused by $\Pi _{a}\neq I$ is modelled by%
\begin{equation*}
\left[ 
\begin{array}{c}
u_{f} \\ 
y_{f}%
\end{array}%
\right] =\left( \left[ 
\begin{array}{c}
-\hat{Y} \\ 
\hat{X}%
\end{array}%
\right] +\left[ 
\begin{array}{c}
M \\ 
N%
\end{array}%
\right] \Phi_1\Phi_2 \right) r_{y}.
\end{equation*}%
Consequently, it holds 
\begin{equation*}
 \setlength{\abovedisplayskip}{5pt}\setlength{\belowdisplayskip}{5pt}
\begin{aligned}
r_{y,u}^{a} \hspace{-2pt}
&=\hspace{-2pt}(\bar{\Pi}_{a}\hspace{-2pt}+\hspace{-2pt}Q_{r,2}\hspace{-2pt}+\hspace{-2pt}\Phi_3)v
\hspace{-2pt}+\hspace{-2pt}\Pi _{a}\hspace{-2pt}\left( \hspace{-2pt}\bar{Q}_{r,1}^a\hspace{-2pt}+\hspace{-2pt}\Gamma  \Psi \hspace{-2pt}\left( \hspace{-2pt} \left[ \hspace{-2pt}
\begin{array}{c}
-\hspace{-2pt}\hat{Y} \\ 
\hat{X}%
\end{array}%
\hspace{-2pt}\right] \hspace{-4pt}+\hspace{-4pt}\left[ \hspace{-2pt}
\begin{array}{c}
M \\ 
N%
\end{array}%
\hspace{-2pt}\right] \hspace{-2pt}\Phi_1\hspace{-1pt}\Phi_2 \hspace{-2pt}\right) \hspace{-4pt}\right)\hspace{-3pt}r_y\\
\Gamma&=\bar{Q}_{r,2}^aQ_{u_{MC}}\left( \Pi
_{a}\hspace{-2pt}-\hspace{-2pt}I\right)\\
\Phi_3&=\Pi _{a}\bar{Q}_{r,2}^aQ_{u_{MC}}\hspace{-2pt}\left(\hspace{-2pt}(\Pi_a\hspace{-2pt}-\hspace{-2pt}I)Q_v\hspace{-2pt}-\hspace{-2pt}\left[ 
\begin{array}{c}
M \\ 
N%
\end{array}%
\right]\hspace{-3pt}Q_{u_{MC}}\Phi_1\bar{\Pi}_a\right)\hspace{-2pt}+\hspace{-2pt}Q_v
\end{aligned}%
\end{equation*}
Hence, residual signal%
\begin{equation}
 \setlength{\abovedisplayskip}{5pt}\setlength{\belowdisplayskip}{5pt}
r_{PDD}^a :=r_{y,u}^{a}-(Q_{r,2}+Q_v)v \label{eq-PFD2}
\end{equation}%
contains the full information about the feedback performance degradation and
can be used for the detection purpose. To this end, design $\Psi $ so that 
\begin{equation}
 \setlength{\abovedisplayskip}{5pt}\setlength{\belowdisplayskip}{5pt}
\gamma_{\Psi}:=\left\Vert \Psi \left( \left[ 
\begin{array}{c}
-\hat{Y} \\ 
\hat{X}%
\end{array}%
\right] +\left[ 
\begin{array}{c}
M \\ 
N%
\end{array}%
\right] \Phi_1\Phi_2\right) \right\Vert _{2}>>\left\Vert
\bar{Q}_{r,1}\right\Vert _{2},\label{eq-Psi}
\end{equation}%
where $\left\Vert \cdot \right\Vert _{2}$ denotes $\mathcal{H}_{2}$-norm of
a transfer function. On account of the fact that the cyber-attacks
target at reducing the system robustness leading to a larger $\gamma_{\Psi}$, 
it is expected that (\ref{eq-Psi}) holds.

Considering that $\Pi _{a}$ should be an inner, the condition of
attack stealthiness, the standard generalized likelihood ratio (GLR) detection schemes can be applied to detect the attacks. To be specific, during attack-free operation, the covariance matrix of $r_{PDD}$ can be
calculated based on the model%
\begin{equation*}
 \setlength{\abovedisplayskip}{5pt}\setlength{\belowdisplayskip}{5pt}
r_{PDD}=\left( \bar{Q}_{r,1}+\Psi \left( \left[ 
\begin{array}{c}
-\hat{Y} \\ 
\hat{X}%
\end{array}%
\right] +\left[ 
\begin{array}{c}
M \\ 
N%
\end{array}%
\right] Q_{u_{MC}}\bar{Q}_{r,1}\right) \right) r_{y}.
\end{equation*}%
To simplify the online implementation of the detection algorithm, a
post-filter $R(z)$ can be added so that 
\begin{equation*}
 \setlength{\abovedisplayskip}{5pt}\setlength{\belowdisplayskip}{5pt}
\bar{r}_{PDD}(z)=R(z)r_{PDD}(z)\sim \mathcal{N}\left( 0,\Sigma _{\bar{r}%
_{PDD}}\right) .
\end{equation*}%
Supposed that $N$ data, 
$\bar{r}_{PDD}^{a}\left( k+i\right) ,i=1,\cdots ,N,$ have been collected.
Using the GLR detection method, we have the
test statistic
\begin{equation*}
 \setlength{\abovedisplayskip}{5pt}\setlength{\belowdisplayskip}{5pt}
\begin{aligned}
&J =\frac{N}{2}\ln \frac{\det \left( \Sigma _{\bar{r}_{PDD}}\right) }{\det
\left( \hat{\Sigma}_{\bar{r}_{PDD}^{a}}\right) } \\
&\;\;\;\;+\hspace{-2pt}\frac{1}{2}\dsum\limits_{i=1}^{N}\hspace{-2pt}\left( \hspace{-2pt}\left( \bar{r}_{PDD}^{a}\left(
k\hspace{-2pt}+\hspace{-2pt}i\right) \right) ^{T}\hspace{-2pt}\left( \Sigma _{\bar{r}_{PDD}}^{-1}\hspace{-4pt}-\hspace{-2pt}\hat{\Sigma}_{%
\bar{r}_{PDD}^{a}}^{-1}\right) \bar{r}_{PDD}^{a}\left( k\hspace{-2pt}+\hspace{-2pt}i\right) \hspace{-2pt}\right)\hspace{-2pt} ,
\\
&\hat{\Sigma}_{\bar{r}_{PDD}^{a}} =\frac{1}{N}\dsum\limits_{i=1}^{N}\bar{r}%
_{PDD}^{a}\left( k+i\right) \left( \bar{r}_{PDD}^{a}\left( k+i\right)
\right) ^{T},
\end{aligned}
\end{equation*}
and determine the threshold accordingly. 

\subsection{Detection of stealthy attacks against tracking performance}

We next address the detection issue for the (additive) stealthy attacks subject to (%
\ref{eq-stealthy}). To this end, the residual (\ref{eq16-32}) is constructed and the controller is set to be 
\begin{equation*}
 \setlength{\abovedisplayskip}{5pt}\setlength{\belowdisplayskip}{5pt}
\begin{aligned}
&u=F\hat{x}+u_{MC}^{a}-Q_{u_{MC}}\bar{r}_{PD}, \\
&u_{MC} =Q_{u_{MC}}\left( r_{y,u}^{a}-Q_{r,2}v\right) +v.
\end{aligned}%
\end{equation*}
Next, we examine the system dynamics. It can be easily seen that for the stealthy attacks $Q_{r,2}a_{u_{MC}}\hspace{-2pt}+\hspace{-2pt}a_{r_{y,u}}\hspace{-4pt}=\hspace{-3pt}0$, we have
\begin{gather*}
 u=F\hat{x}+v+a_{u_{MC}}+\left( I-Q_{u_{MC}}Q_{r,2}\right)
^{-1}Q_{u_{MC}}Q_{r,1}r_{y},
\end{gather*}%
which results in 
\begin{align*}
&r_{y,u}^{a}
 =\Gamma r_{y}+Q_{r,2}v  +\hspace{-2pt}\Psi \hspace{-2pt}\left(\hspace{-2pt} \left[ \hspace{-2pt}
\begin{array}{c}
M \\ 
N%
\end{array}%
\hspace{-2pt}\right] \hspace{-2pt}\left( v\hspace{-2pt}+\hspace{-2pt}a_{u_{MC}}\right) \hspace{-2pt}\right)\\
&\Gamma  \hspace{-2pt}=\hspace{-2pt}\Psi \left( \hspace{-2pt}\left[ \hspace{-2pt}
\begin{array}{c}
-\hspace{-2pt}\hat{Y} \\ 
\hat{X}%
\end{array}%
\hspace{-2pt}\right] \hspace{-2pt}+\hspace{-2pt}\left[ \hspace{-2pt}
\begin{array}{c}
M \\ 
N%
\end{array}%
\hspace{-2pt}\right] Q_{u_{MC}}\bar{Q}_{r,1}\right) \hspace{-2pt}+\hspace{-2pt}\bar{Q}_{r,1}.  \notag
\end{align*}%
Thus, builds the following residual signal%
\begin{align}
r_{PDD}^a:&=r_{y,u}^{a}-\left( Q_{r,2}+Q_{v}\right) v
=\hspace{-2pt}Q_{v}a_{u_{MC}}\hspace{-2pt}+\hspace{-2pt}\Gamma r_{y}. \label{eq-PFD3} \end{align}%
Since $r_{y}\sim \mathcal{N}\left(
0,\Sigma _{r_{y}}\right) $, $\mathcal{X}_2$ test statistic can be applied for detecting $a_{u_{MC}}$ on the basis of model (\ref{eq-PFD3}).

\begin{remark}
In the paper, three different forms of the residual
signal  have been proposed, (\ref{eq15-38}), (\ref{eq-PFD2}) and (%
\ref{eq-PFD3}). While the first one is applied to the detection of
cyber-attacks in general, the latter two are dedicated to detecting stealthy
attacks. It is noteworthy that, independent of which of the three residual
forms is adopted, the overall closed-loop dynamics under attack-free
operations are identical with the nominal system dynamic. For a reliable
detection of cyber-attacks with high resilience, a schedule mechanism is to
be designed that manages switchings among the three detection algorithms.
For instance, run the detection algorithm based the residual $r_{y,u}$ (\ref%
{eq15-38}) as the base detection scheme, and switch to an algorithm with
residual  (\ref{eq-PFD2}) or (\ref{eq-PFD3}) over a short time
interval, and then return to the base detection scheme. 
 A meaningful side-effect of such a schedule mechanism is to
enhance the resilience against eavesdropping attacks, which is a typical
strategy followed by a cyber adversary to design stealthy attacks. 
\end{remark}

\section{An experimental study}
In this section,  a leader-follower robot system, as illustrated in Fig.~\ref{fig0}, is adopted to demonstrate our proposed methods. 
\subsection{System configuration}
\begin{figure}
	\begin{center}
		\includegraphics[width=0.31\textwidth]{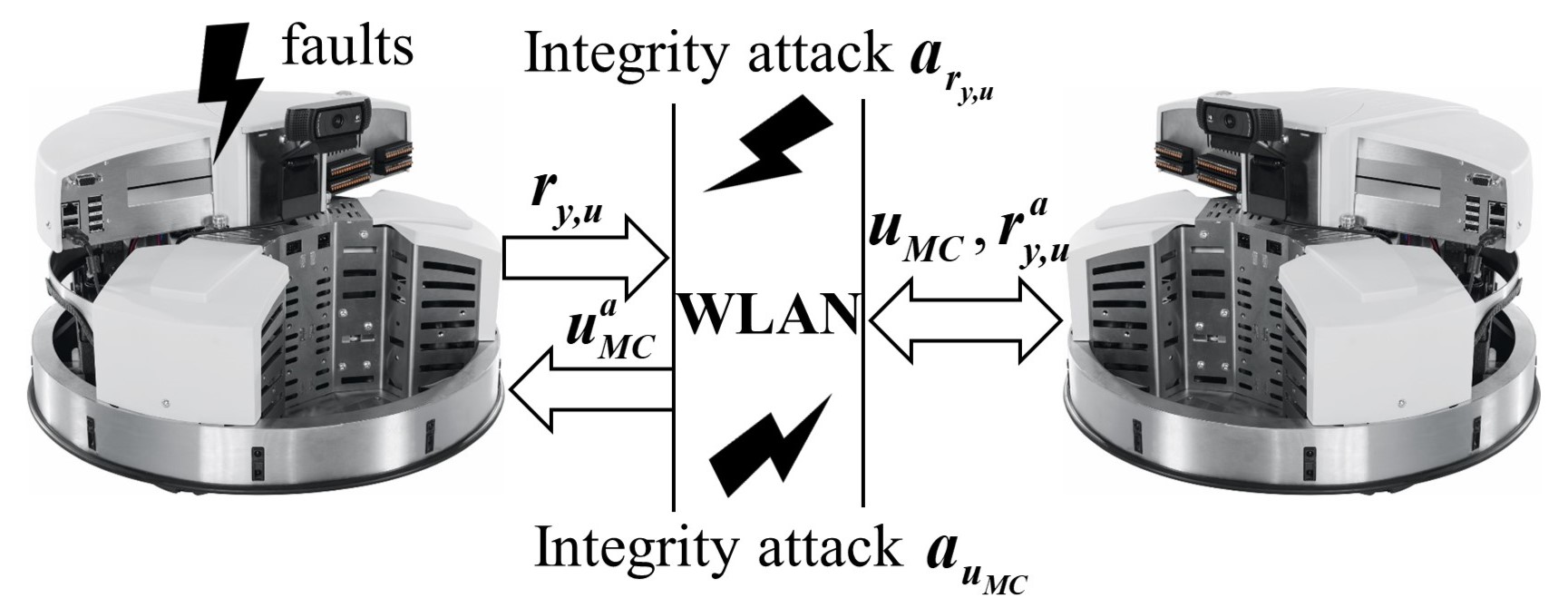} 
		\caption{The system configuration}  
		\label{fig0}                                 
	\end{center}                                 
\end{figure}
In this study, the leader generates the reference velocities, which are then tracked by the follower. In this context, the leader serves as the MC station for the follower.  Specifically, the follower sends the residual $r_{y, u}$ to the leader and receives the control input $u_{MC}$ from it, which contains the reference velocities. Their inputs and outputs are the setpoint speeds for three motors and three velocities, namely horizontal, vertical, and angular velocity. With the sampling rate of \SI{10}{Hz}, the state-space model for the follower is
\begin{align}
	&A_{f}\hspace{-4pt}=\hspace{-5pt}\begin{bmatrix}\hspace{-2pt}
		0.428& 0.020 & 0.0001\\
		0.026& 0.419 & 0.0037\\
		0.284 & -0.09 & 0.2922
	\hspace{-5pt}\end{bmatrix}\hspace{-4pt},\hspace{-2pt}
	B_{f}\hspace{-4pt}=\hspace{-5pt}\begin{bmatrix}\hspace{-2pt}
		-0.685 & 0.025 & 0.655\\
		0.406 & -0.803 & 0.344\\
		4.012 & 3.494 & 3.346
	\hspace{-2pt}\end{bmatrix}\hspace{-6pt}\times \hspace{-4pt}10^{\hspace{-1pt}-\hspace{-1pt}4}\notag
\end{align}
with $C_{f}=diag\{1,1,1 \},
	D_{f}=0$ and zero initial conditions. 
The noise covariance matrices are estimated based on the measurements. 
On the plant side, one observer, observer-based residual generator, and controller are constructed according to (\ref{eq15-38}). The state feedback gain $F$ is initialized as the LQ gain,
and the observer gain $L$ is set as the Kalman gain given in (\ref{eq-observer-ka}). 
The  parameters $Q_{r, 1}$, $Q_{r, 2}$, and $Q_{u_{MC}}$ are initialized as
\begin{align}
	&Q_{r, 1}=diag\left\{(z-0.1)^{-1}, (z-0.1)^{-1},(z-0.1)^{-1}\right\}\notag\\	
	&Q_{r, 2}=-0.15(z\hspace{-2pt}+\hspace{-2pt}0.1)^{-1}diag\left\{(z\hspace{-2pt}+\hspace{-2pt}0.4),(z\hspace{-2pt}+\hspace{-2pt}0.3),(z\hspace{-2pt}+\hspace{-2pt}0.2)\right\}\notag\\
	&Q_{u_{MC}}\hspace{-2pt}=\hspace{-2pt}diag\left\{\frac{10(z\hspace{-2pt}+\hspace{-2pt}0.1)}{z\hspace{-2pt}+\hspace{-2pt}0.4},  \frac{10(z\hspace{-2pt}+\hspace{-2pt}0.1)}{z\hspace{-2pt}+\hspace{-2pt}0.3},\frac{10(z\hspace{-2pt}+\hspace{-2pt}0.1)}{z\hspace{-2pt}+\hspace{-2pt}0.2}\right\}.\label{eq-qumc}
\end{align}
The achieved performance levels are
$
	\gamma_{\vartheta_{a}}=0.4000,
	\gamma_{r_y}=6.2406.
$
The feedforward controller $T$  is selected as 
\begin{equation}
 \setlength{\abovedisplayskip}{5pt}\setlength{\belowdisplayskip}{0pt}
	T=\begin{bmatrix}
		-4257.4943 & 2463.2315 & 662.2074\\
		10.9463 & -4940.2157 & 664.6608\\
		4284.4037 & 2462.1782 & 663.4313
	\end{bmatrix}.
\end{equation}

\subsection{Fault detection and fault-tolerant control}
\begin{figure}
	\begin{center}
		\includegraphics[width=0.45\textwidth]{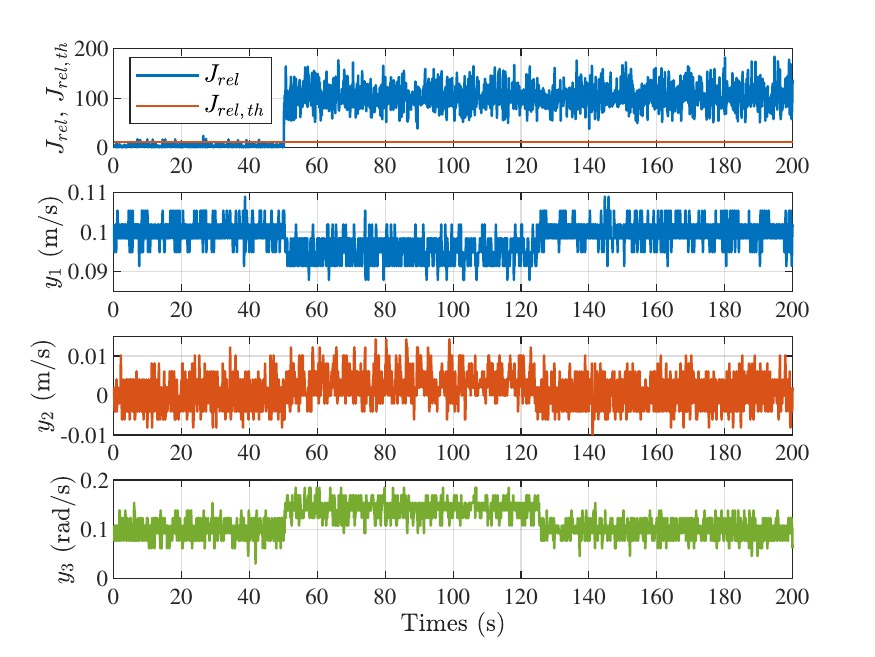} 
		\caption{Fault detection and fault-tolerant control}  
		\label{fig1}                                 
	\end{center}                                 
\end{figure}
For our purpose, the bias fault 
$f\sim\mathcal{N}\left(0.025, 1\times 10^{-6}\right)$ of the first sensor is  injected from the time instant $[50\text{s},125 \text{s}]$. 
Setting the FAR as $0.01$ leads to
$
	J_{rel, th} = 11.3450.
$
As shown in the first subfigure of Fig.~\ref{fig1}, the fault can be well detected.  
For FTC purpose, $Q_{r, 2}$ is reconfigured as
\begin{equation*}
 \setlength{\abovedisplayskip}{5pt}\setlength{\belowdisplayskip}{5pt}
Q_{r, 2}=-90(z\hspace{-2pt}+\hspace{-2pt}0.1)^{-1}diag\left\{(z\hspace{-2pt}+\hspace{-2pt}0.4),(z\hspace{-2pt}+\hspace{-2pt}0.3),(z\hspace{-2pt}+\hspace{-2pt}0.2)\right\}\notag
\end{equation*}
with $Q_{u_{MC}}$ given by (\ref{eq-qumc}). The associated performance level is
$
	\gamma_{\vartheta_{a}}=1.1099\times 10^{-3}.
$
Based on the obtained $Q_{r, 2}$ and $Q_{u_{MC}}$, $Q_{r, 1}$ is attained by solving (55) via the \emph{hinfsyn} command in \textsc{Matlab}. 
The achieved performance level is
$
	\gamma_{r_y}=1.0997.
$
For demonstration purpose, the controller is reconfigured in the time interval of $[\SI{125}{s}, \SI{200}{s}]$. According to the output velocity trajectories $y_1,y_2,y_3$ of Fig.~\ref{fig1}, the sensor fault can be well compensated by reconfiguring the fault-tolerant controller.

\subsection{Attack detection and attack-resilient control}
\begin{figure}
	\begin{center}
		\includegraphics[width=0.43\textwidth]{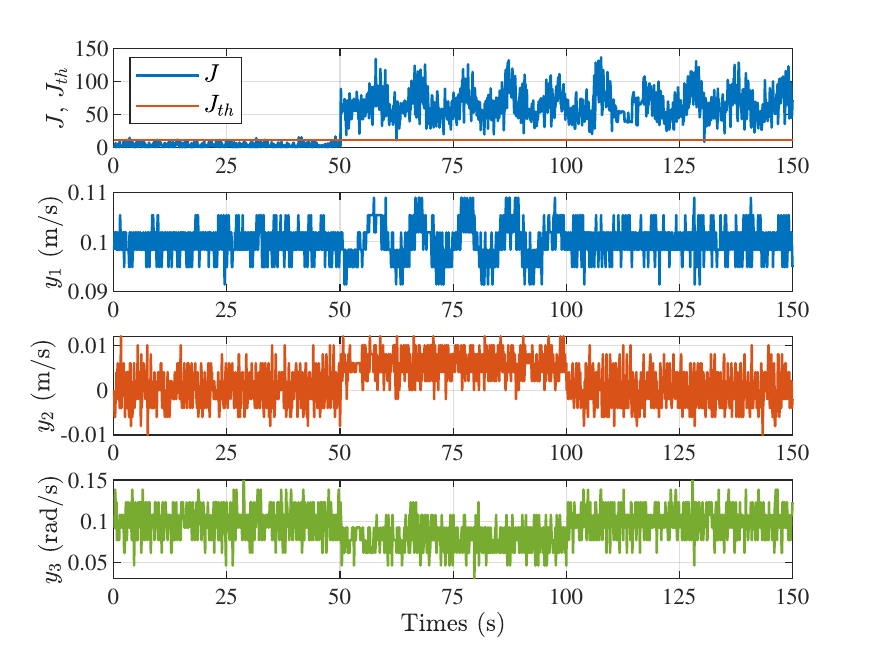} 
		\caption{Detection and resilient control of the cyber-attack}  
		\label{fig2}                                 
	\end{center}                                 
\end{figure}
In this study, the following attacks are injected into the corresponding channels during the time interval $[\SI{50}{s}, \SI{150}{s}]$
\begin{align}
	&a_{u_{MC}}(k)=\begin{bmatrix}
		a_{u_{MC, 1}}(k)&
		0&
		a_{u_{MC, 3}}(k)
	\end{bmatrix}^T\label{eq-attack1}\\
&		a_{u_{MC, 1}}(k)=0.05\sin(0.2\pi k), 
	a_{u_{MC, 3}}(k)=-0.05\sin(0.2\pi k)
	\notag\\
&	a_{r_{y, u}}(k)\hspace{-2pt}=\hspace{-2pt}\begin{bmatrix}
		0&
		a_{r_{y, u, 2}}(k)&
		0
	\end{bmatrix}^T,
	a_{r_{y, u, 2}}\hspace{-2pt}\sim\hspace{-2pt}\mathcal{N}\left(-0.025, 1\hspace{-2pt}\times \hspace{-2pt}10^{-10}\right)\hspace{-2pt}.\notag
\end{align}
Based on (\ref{eq16-14}), the post-filter $R$ for attack detection purpose is given.
Setting the FAR as $0.01$ leads to 
$
	J_{th} = 11.3450.
$
As shown in the first subfigure of Fig.~\ref{fig2}, these additive attacks can be well detected.  
Then, the controller is reconfigured during the time interval of $[\SI{100}{s}, \SI{150}{s}]$. As described in Fig.~\ref{fig2}, the attack-induced variations in robot velocities can be largely reduced, which verifies the effectiveness of the resilient controller.

\subsection{Detection of attack switching-on and switching-off}

\begin{figure}
	\begin{center}
		\includegraphics[width=0.43\textwidth]{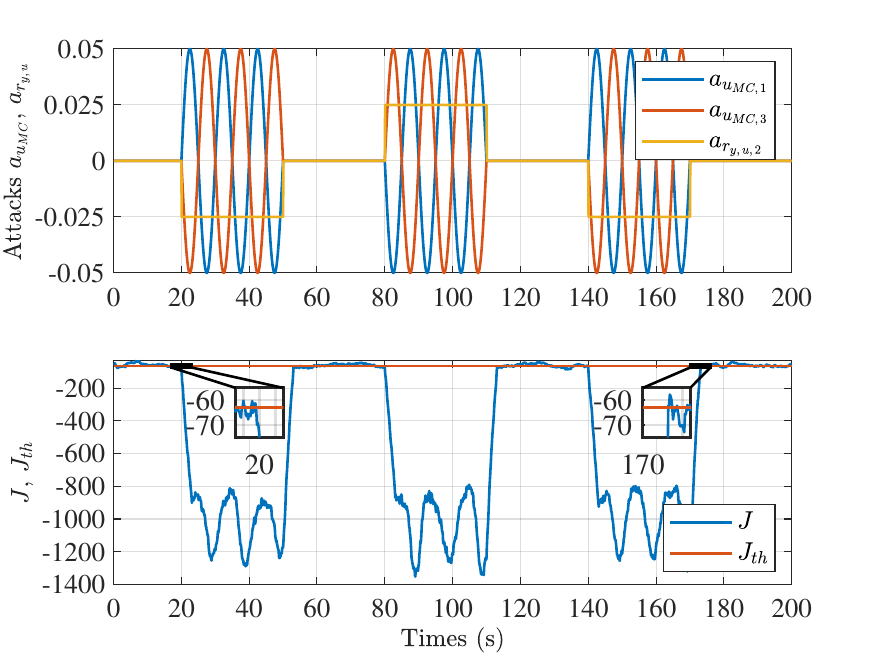} 
		\caption{Detection of attack switching-on and switching-off}  
		\label{fig3}                                 
	\end{center}                                 
\end{figure}
For our purpose, the parameters $s$, $\gamma$, and $L_0$ are selected as
$
	s = 30,
	\gamma = 500,
	L_0 = 1\times 10^{-4},
	\tau = 530.
$
Then, the lower and upper bound $L_l$ and $L_u$  are calculated as
$
	L_l = 0.1433,
	L_u = 1.0474.
$
According to the non-central $\chi^{2}$ distribution table, setting
$
	\left\lVert \bar{r}_{\eta_a, s}(k_0) \right\rVert^{2}=129.15
$
gives the FAR as $0.01$. Consequently, the thresholds for the evaluation functions in (50) are given by
\begin{equation}
 \setlength{\abovedisplayskip}{5pt}\setlength{\belowdisplayskip}{5pt}
	J_{th}=\begin{cases}
		53.2207, & \left\lVert \bar{r}_{\eta_a, s}(k_0) \right\rVert^{2}\leq L_{l}^{2},\\
		-9.7366, & L_{l}^{2}\leq\left\lVert \bar{r}_{\eta_a, s}(k_0) \right\rVert^{2}\leq L_{u}^{2},\\
		-62.9573, & \left\lVert \bar{r}_{\eta_a, s}(k_0) \right\rVert^{2}\geq L_{u}^{2}.
	\end{cases}
\end{equation}
The periodic attacks $a_{u_{MC}}$ and $a_{r_{y, u}}$ are assumed to be launched in $[\SI{20}{s}, \SI{170}{s}]$, as shown in  Fig.~\ref{fig3}. According to the experimental results in the second subfigure of Fig.~\ref{fig3}, the detector detects the attack switching-on right after $\SI{20}{s}$, while the attack switching-off is detected with a delay around $\SI{3}{s}$. 

\subsection{Comparison study}
\begin{figure}
	\begin{center}
		\includegraphics[width=0.43\textwidth]{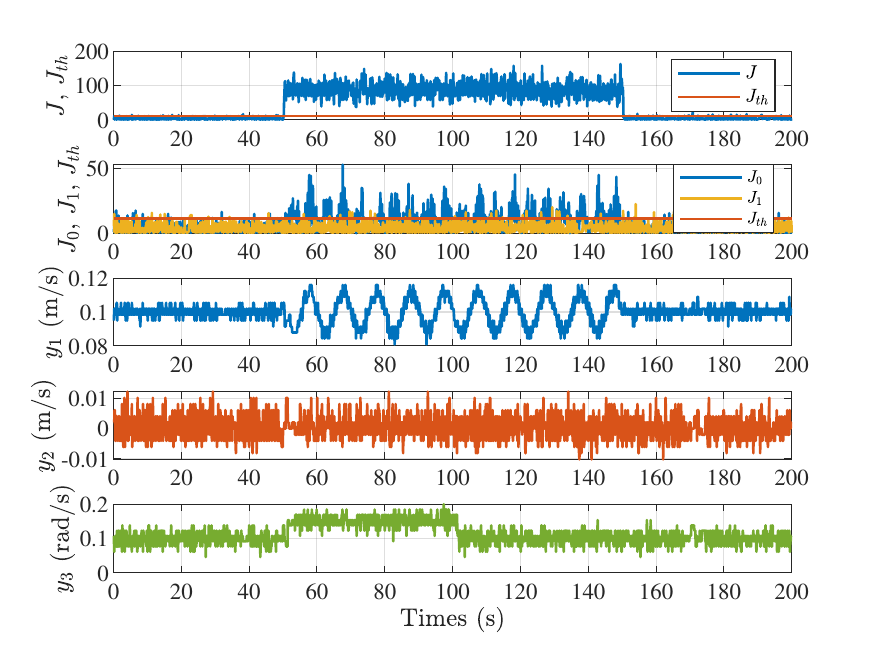} 
		\caption{Detection and resilient control of the traditional CPS configuration}  
		\label{fig4}                                 
	\end{center}                                 
\end{figure}
In this section, we compare our proposed scheme with the traditional CPS configuration  (\ref{eq15-1a})-(\ref{eq14-0a}). Here, the Kalman filter-based $\chi^{2}$ detector is utilized for attack detection. The controller in the MC station is implemented as the Youla parameterization-based controller in the observer form. 
The default $Q$ in (10) is set as $Q=Q_{r_1}$ in (\ref{eq-qumc}). 
For a fair comparison, attacks $a_{u_{MC}}$ and $a_y$ are set the same as those in (34) and (37).
Firstly, we would like to compare the attack detection performance. As demonstrated in the first subfigure of Fig.~\ref{fig4}, these attacks can be well detected by the $\chi^{2}$ detector on the MC side. 
We next consider the cyber-attack (\ref{eq-attack1}) with $a_{r_{y,u}}=0$.  
The comparison results are shown in the second subfigure of Fig.~\ref{fig4} in which $J_0$ and $J_1$ are evaluation functions for our attack detector and the $\chi^{2}$ detector, respectively. It can be seen that slight input attacks can be detected by our method, while they remain undetectable to the traditional $\chi^{2}$ detector. 

Then, we would like to compare this traditional configuration with our proposed scheme regarding attack resilience performance. In the time interval of $[\SI{100}{s}, \SI{150}{s}]$, the parameter $Q$ is reconfigured by solving the model matching problem as
\begin{equation}
 \setlength{\abovedisplayskip}{2pt}\setlength{\belowdisplayskip}{2pt}
\min_{Q}\left\Vert \left[ 
\begin{array}{c}
-\hat{Y} \\ 
\hat{X}%
\end{array}%
\right] +\left[ 
\begin{array}{c}
M \\ 
N%
\end{array}%
\right] Q\right\Vert _{\infty }.
\end{equation}%
The corresponding resilient control performance is illustrated in Fig.~\ref{fig4}, which shows the advantage of our resilient control. 

\subsection{Detection of additive and multiplicative stealthy attacks}
\begin{figure}
	\begin{center}
		\includegraphics[width=0.47\textwidth]{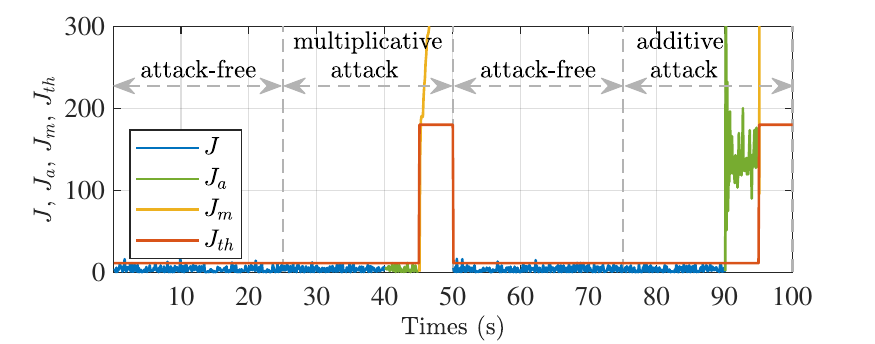} 
		\caption{Detection of additive and multiplicative stealthy attacks}  
		\label{fig5}                                 
	\end{center}                                 
\end{figure}
The additive stealthy attacks are designed according to (57). Specifically, attack $a_{u_{MC}}$ is set as
\begin{equation*}
 \setlength{\abovedisplayskip}{5pt}\setlength{\belowdisplayskip}{5pt}
 \begin{aligned}
&	a_{u_{MC}}(k)\hspace{-2pt}=\hspace{-2pt}\begin{bmatrix}
		a_{u_{MC,}}^1(k)&
		0&
		a_{u_{MC}}^3(k)
	\end{bmatrix}^T\hspace{-2pt}
\\
	&a_{u_{MC}}^1\hspace{-2pt}(k)\hspace{-2pt}=\hspace{-2pt}0.05\sin(0.2\pi k)\hspace{-2pt}+\hspace{-2pt}1,a_{u_{MC}}^3\hspace{-2pt}(k)\hspace{-2pt}=\hspace{-2pt}-\hspace{-2pt}0.05\sin(0.2\pi k)\hspace{-2pt}+\hspace{-2pt}1.
\end{aligned}
\end{equation*}
The attacks $a_{r_{y, u}}$ is generated accordingly. To detect the additive stealthy attacks, the residual $r_{y, u}^{a}$ is constructed as (\ref{eq-PFD3}) with the parameter $\Psi$ is set as
$
	\Psi = \begin{bmatrix}
		5\times 10^{-2}I & I
	\end{bmatrix}. 
$
Similar as before, a post-filter can be designed for detection purpose and the threshold is set as $J_{th} = 11.3450$. 

On the other hand, the multiplicative stealthy attacks are designed according to (\ref{eq-sa-de}). In this study, we set
$
	\Pi_{a} = -I.
$
For detection purpose, the residual $r_{y,u}$ is built according to (\ref{eq16-32}) with the same $\Psi$.
We use a numerical approach to determining the threshold, which returns
$
	J_{th} = 180.
$
In this study, the detector switching period is set as $\SI{25}{s}$, where regular, additive, and multiplicative attack detector take $\SI{15}{s}$, $\SI{5}{s}$, and $\SI{5}{s}$, respectively with
$J$, $J_a$, and $J_m$ as the associated test statistic.   Besides, multiplicative stealthy attacks are lunched in $[\SI{25}{s}, \SI{50}{s}]$, while additive ones are injected in $[\SI{75}{s}, \SI{100}{s}]$. According to Fig. \ref{fig5}, regular attack detector can detect neither the additive nor multiplicative stealthy attacks. The additive and multiplicative stealthy attacks can be detected by their dedicated detectors. 


\section{Conclusions}
This paper is mainly concentrated on the integrated design of detection and  resilient control schemes for both the process faults and cyber-attacks for CPSs.
To this end, in the first part of this paper, the dynamics of a general type of CPSs under defined types of cyber-attacks have been studied, and the possible impairment of system dynamics have been analyzed. It can be concluded that the cyber-attacks solely affect the image subspace of the process, while the process faults lead to the change exclusively in the image subspace of the controller. 
 It has been
delineated that the capability of the standard observer-based detection and FTC schemes are strongly limited in dealing with the detection and resilient control for cyber-attacks. 
To handle this issue, an alternative representation of the closed-loop dynamics has been studied, 
which reveals that i) I/O signal space consists of two complementary subspaces as the image subspaces of the plant and the controller, ii) there exists a one-to-one mapping between I/O signals and the I/O residuals. These observations motivate the second part of this paper to investigate the modified CPS configuration by transmitting $r_{y,u}$ as a fusion of the I/O residuals for both the detection and resilient control purpose. 

The further efforts have been dedicated to develop the associated detection schemes for fault detection, cyber-attack detection and resilient FTC.  It has been shown that not only the high resilience  under cyber-attacks and high fault-tolerance against process faults is ensured, but also the ``fail-safe" cyber-security and data privacy are guaranted with limited online computation and communication effort.
To further enhance the security of the proposed CPS configuration, the design schemes of the stealthy attacks against both the tracking behaviour and feedback control performance have been developed from the attacker's point of view. It is followed by the performance-based detection schemes from the defender's point of view. 


\end{document}